\newcommand{\getenv}[2][]{%
  \CatchFileEdef{\temp}{"|kpsewhich --var-value #2"}{}%
  \StrGobbleRight{\temp}{1}[\newtemp]%  Delete the trailing whitespace character
  \if\relax\detokenize{#1}\relax\temp\else\edef#1{\newtemp}\fi%
}
\newcommand{\EnvFullOrShort}{full}     % comment out if the other one is in effect
  \newcommand{\fullOnly}[1]{}
  \newcommand{\shortOnly}[1]{#1}
  \newcommand{\fullOnly}[1]{#1}
  \newcommand{\shortOnly}[1]{}
\newcommand{\cd}{\cdot}
\newcommand{\lds}{\ldots}
\newcommand{\bs}{\setminus}
\newcommand{\s}{\subseteq}
\newcommand{\BE}{\begin{enumerate}}
\newcommand{\EE}{\end{enumerate}}
\newcommand{\im}{\item}
\newcommand{\BI}{\begin{itemize}}
\newcommand{\EI}{\end{itemize}}
\newcommand{\Bigcup}{\displaystyle\bigcup\limits}
\newcommand{\logn}{\log n}
\newcommand{\R}{\mathbb R}
\newcommand{\N}{\mathbb N}
\newcommand{\e}{\epsilon}
\newcommand{\m}{\mathcal}
\newcommand{\mc}{\mathcal}
\newcommand{\tO}{\tilde{O}}
\newcommand{\lc}{\lceil}
\newcommand{\rc}{\rceil}
\newcommand{\lp}{\left(}
\newcommand{\rp}{\right)}
\newcommand{\lmt}{\left[\begin{matrix}}
\newcommand{\rmt}{\end{matrix}\right]}
\newtheorem{theorem}{Theorem}
\newtheorem{lemma}{Lemma}
\newtheorem{corollary}{Corollary}
\newtheorem{definition}{Definition}
\newtheorem{observation}{Observation}
\newtheorem{fact}{Fact}
\newcommand{\BT}{\begin{theorem}}
\newcommand{\ET}{\end{theorem}}
\newcommand{\BL}{\begin{lemma}}
\newcommand{\EL}{\end{lemma}}
\newcommand{\BC}{\begin{corollary}}
\newcommand{\EC}{\end{corollary}}
\newcommand{\BD}{\begin{definition}}
\newcommand{\ED}{\end{definition}}
\newcommand{\BO}{\begin{observation}}
\newcommand{\EO}{\end{observation}}
\newcommand{\BP}{\begin{proof}}
\newcommand{\EP}{\end{proof}}
\newcommand{\BPS}{\begin{proof}[Proof (Sketch)]}
\newcommand{\EPS}{\end{proof}}
\newcommand{\hdr}[1]{{\vspace{2mm}\noindent\textbf{#1:}}}
\newcommand{\dectree}{\mathbb{DT}} % math symbol for a decomposition tree
\newcommand{\desc}{\text{desc}}
\DeclareMathOperator{\poly}{poly}
\begin{document}

\renewcommand{\thefootnote}{\fnsymbol{footnote}}

\title{Minor Excluded Network Families Admit
  Fast Distributed Algorithms\footnotemark[1]}
\author{Bernhard Haeupler\footnotemark[2], Jason Li\footnotemark[2], Goran Zuzic\footnotemark[2]} 
\date{\today}

\footnotetext[1]{This work was supported in part by NSF grants CCF-1527110 "Distributed Algorithms for Near Planar Networks" and CCF-1618280 "Coding for Distributed Computing".}
\footnotetext[2]{Carnegie Mellon University, Pittsburgh PA, USA. E-mail: \{haeupler,jmli,gzuzic\}@cs.cmu.edu.}

\maketitle

\renewcommand{\thefootnote}{\arabic{footnote}}
\setcounter{footnote}{0}

\pagenumbering{gobble}

\begin{abstract}
Distributed network optimization algorithms, such as minimum spanning tree, minimum cut, and shortest path, are an active research area in distributed computing. This paper presents a fast distributed algorithm for such problems in the CONGEST model, on networks that exclude a fixed minor.

On general graphs, many optimization problems, including the ones mentioned above, require $\tilde\Omega(\sqrt n)$ rounds of communication in the CONGEST model, even if the network graph has a much smaller diameter. Naturally, the next step in algorithm design is to design efficient algorithms which bypass this lower bound on a restricted class of graphs. Currently, the only known method of doing so uses the low-congestion shortcut framework of Ghaffari and Haeupler [SODA'16]. Building off of their work, this paper proves that excluded minor graphs admit high-quality shortcuts, leading to an $\tilde O(D^2)$ round algorithm for the aforementioned problems, where $D$ is the diameter of the network graph.
To work with excluded minor graph families, we utilize the Graph Structure Theorem of Robertson and Seymour. To the best of our knowledge, this is the first time the Graph Structure Theorem has been used for an algorithmic result in the distributed setting.

Even though the proof is involved, merely showing the existence of good shortcuts is sufficient to obtain simple, efficient distributed algorithms. In particular, the shortcut framework can efficiently construct near-optimal shortcuts and then use them to solve the optimization problems. This, combined with the very general family of excluded minor graphs, which includes most other important graph classes, makes this result of significant interest.

% Distributed network optimization problems such as Minimum Spanning Tree, Min-Cut or Shortest Path are an active research area in theoretical distributed computing. This paper provides a fast distributed algorithm for such problems in excluded minor graphs.

% cite Das Sarma paper ?
% On general graphs many distributed optimization problems (including all of the mentioned ones) need $\tilde{\Omega}(\sqrt{n})$ rounds of communication. The only currently known way to bypass this pervasive lower bound is by using the low-congestion shortcut framework which typically provides a $\tilde{O}(D)$ round algorithm, where $D$ is the diameter of the network. This paper proves that excluded minor graphs exhibit high-quality shortcuts which provide an $\tilde{O}(D^{O(1)})$ round algorithm.

% Even though the proof is highly non-constructive, it provides a constructive way to find such algorithms. Namely, the shortcut framework allows for simple distributed construction of near-optimal shortcuts and a simple way of utilizing them once they are constructed. Hence, proving their existence is sufficient for construction of efficient algorithms.
\end{abstract}

\newpage
\pagenumbering{arabic}

\section{Introduction}
Network optimization problems in the CONGEST model, such as Minimum Spanning Tree, Min-Cut or Shortest Path are an active research area in theoretical distributed computing~\cite{garay1998sublinear,khan2008fast,elkin2017simple,nanongkai2014almost}. This paper provides a fast distributed algorithm for such problems in excluded minor graphs in the CONGEST model.

   Recently, lower bounds have been established on many distributed network optimization problems, including all of the mentioned ones~\cite{sarma2012distributed}. More specifically, each of these problems in the CONGEST model require $\tilde{\Omega}(\sqrt{n})$\footnote{Throughout this paper, $\tilde O(\cdot)$ and $\tilde \Omega(\cdot)$ hide polylogarithmic factors in $n$, the number of nodes in the network.} rounds of communication to solve. This holds even on graphs with small diameter, for example, when the diameter is logarithmic in the number of nodes $n$. The result is surprising since it is not immediately clear why a network of small diameter requires such a large number of communication rounds when solving an optimization problem. %Moreover, one can imagine a practical distributed network having a huge number of nodes $n$ and a small diameter $D$, hence the lower bound is particularly disappointing.% , there exists a network graph with diameter $O(\log n)$ on which solving the optimization problem requires $\tilde{\Omega}(\sqrt{n})$ rounds of communication~\cite{sarma2012distributed}. 

% On the positive side, the next major question in algorithmic design is to determine whether one can bypass this barrier by restricting the class of network graphs. Ideally, such a class of  graphs should be inclusive enough to admit virtually all practical networks, yet be restrictive enough to disallow the pathological lower bound instances.~\cite{ghaffari2016distributed} considers the planar and bounded genus graphs and gives a near-optimal $\tO(D)$ round algorithm for the problems mentioned above, hence bypassing the $\tilde\Omega(\sqrt n)$ lower bound. Later on,~\cite{haeupler2016near} establishes $\tO(D)$ round algorithms for graphs of bounded treewidth. Finally, this paper establishes $\tO(D^2)$ round algorithms for \textbf{excluded minor graphs}, a vast family that includes all previously mentioned ones, as well as many others.

  On the positive side, the next major question in algorithmic design is to determine whether one can bypass this barrier by restricting the class of network graphs. One immediate question that arises is, what family of graphs should one consider? Ideally, such a class of graphs should be inclusive enough to admit most ``realistic'' networks, yet be restrictive enough to disallow the pathological lower bound instances. 

  In our search for a restricted graph family to study, we focus on three criteria. First, we desire a family with a \textit{rich and rigorous mathematical theory}, so that our result is technically meaningful. Second, the family should capture many networks in practice. And finally, we want \textit{robustness}: a graph with a few added or perturbed edges and vertices should still remain in the family. Robustness is an important goal, since we want our graph family to be resistant to noise. For example, planar graphs satisfy the first two criteria, but fail to be robust since often adding a single random edge will make the graph non-planar. Indeed, most algorithms on planar graphs fail completely when run on a planar graph with a few perturbed edges and vertices. Next, one might try genus-bounded graphs, but they also suffer from similar problems since adding a single randomly connected vertex can arbitrarily increase the genus.

  A candidate graph family that fulfills all three conditions is the family of \textit{excluded minor} graphs, namely the graphs which do not have a fixed graph $H$ as a minor. This family encompasses several classes of naturally occuring networks. For example, trees which exclude $K_3$;planar graphs that capture the structure of two-dimensional maps exclude $K_5$ and $K_{3,3}$; and, series-parallel graphs that capture many network backbones exclude $K_4$~\cite{abraham2006object, flocchini2003routing}. Excluded minor graphs also have a history of deep results, including the series of Graph Minor papers by Robertson and Seymour.  

  In this paper, we provide efficient distributed algorithms for the class of excluded minor graphs which break the $\tO(\sqrt n+D)$ lower bound for general graphs, giving evidence that most practical networks admit efficient distributed algorithms. We show an $\tO(D^2)$ algorithm for MST and $(1+\e)$ approximate min-cut, among other results. For networks having low diameter, such as $D = \poly(\log n)$ or $D = n^{o(1)}$, our algorithms are optimal up to $\poly(\log n)$ or $n^{o(1)}$ factors, respectively. This is a significant improvement over previous MST and min-cut algorithms, which run in $\Omega(\sqrt n)$ time even on an excluded minor graph with $D = \poly(\log n)$, such as a planar graph with an added vertex attached to every other node.

  Our results use the framework of \textbf{low-congestion shortcuts}, introduced by~\cite{ghaffari2016distributed} and built on by~\cite{haeupler2016low}, which is a combinatorial abstraction to designing distributed algorithms. It introduces a simple, combinatorial problem involving \textbf{shortcuts} on a graph, and guarantees that a good \textbf{quality} solution to this combinatorial problem automatically translates to a simple, efficient distributed algorithm for MST and $(1+\e)$ approximate min-cut, among other problems; the concepts of shortcuts and quality will be defined later. Actually, the algorithm is the same regardless of the network or the graph family; the purpose of the combinatorial shortcuts problem is to prove that the algorithm runs \textit{efficiently} on the graph or family.
  
% The underlying theorems are (1) if one can find \textbf{good-quality shortcuts} in a network graph, then there is a simple distributed algorithm to solve many optimization problems efficiently on the network~\cite{ghaffari2016distributed} and (2) if a network has certain good quality shortcuts, they can be found efficiently and with an uniform algorithm~\cite{haeupler2016low}. In summary, one way to develop efficient and simple distributed algorithms for a graph is to merely show the existence of good-quality shortcuts in that graph. 

  To solve the shortcuts problem on excluded minor graphs, we appeal to the Graph Structure Theorem of Robertson and Seymour~\cite{robertson1986graph, robertson2003graph}. At a high level, the Graph Structure Theorem decomposes every excluded minor graph into a set of almost-planar graphs connected in a tree-like fashion. Our solution to the shortcuts problem is in fact a series of results, one for each step in the structure decomposition. We remark that our result is, to the best of our knowledge, the first in distributed computing to make use of the Graph Structure Theorem to claim a distributed algorithm is fast. The absence of such a preceding result in distributed computing is unsurprising, since algorithms working with the Graph Structure Theorem generally require computing the required decomposition beforehand, and no efficient distributed algorithm to do so is known. Even the best classical algorithm still takes $O(n^3)$ time~\cite{kawarabayashi2011simpler}, so even a sublinear distributed algorithm is still out of reach. However, our result is unique in that we merely show the \textit{existence} of a solution to the shortcuts problem in excluded minor graphs, and as a consequence, the simple algorithm of~\cite{haeupler2016low}---which does not look at any structure in the network graph, let alone compute a decomposition---is proven to run efficiently on excluded minor graphs.

  The fact that this algorithm does not actually compute the Graph Structure Theorem should be stressed further. One consequence is that the running time of this algorithm does not depend on any constants appearing in the Graph Structure Theorem. That is, while we can only \textit{prove} that the constants in the running time are bounded by (some functions of) the constants in the Graph Structure Theorem, the actual running time of the algorithm is likely to be much smaller. In fact, for most excluded minor networks, we expect the running time to be $\tO(D^2)$ with a small constant, or even $\tO(D)$. In contrast, algorithms that explicitly compute a Graph Structure Theorem decomposition have an inherent bottleneck in the form of the potentially huge constants of the Graph Structure Theorem.
  
  This paper is structured as follows. After the introduction, we begin with introducing the two main tools necessary for our main result, namely the Graph Structure Theorem and the low-congestion shortcuts framework. Then, we prove the existence of good shortcuts one step at a time, following the step-by-step construction in the Graph Structure Theorem.

\subsection{Outline of the Proof}

The goal of this section is to outline the proof of our main result, without delving into the technical details. Some concepts will be left undefined (e.g., shortcut quality, tree-restricted shortcuts), since the definitions are technical and require a lot of motivation beforehand. However, one can think of shortcuts as a combinatorial construction on a graph, and think of quality as a metric with which to measure a solution.

\begin{restatable}{theorem}{ShortcutsFramework}[Haeupler et al.~\cite{haeupler2016near,haeupler2016low}]
  \label{thm:ShortcutsFramework}
Suppose that a graph with diameter $D$ admits tree-restricted shortcuts of quality $q : \N\to\N$. Then, there is an $\tilde O(q(D))$-round distributed algorithm for MST and $(1+\e)$-approximate min-cut for that graph.
\end{restatable}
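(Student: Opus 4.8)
The plan is to treat this as a composition of two reductions from the low-congestion shortcut literature, glued together, since \Cref{thm:ShortcutsFramework} is essentially a black-box consequence of them. Recall the \emph{part-wise aggregation} problem: the vertex set is partitioned into parts, each inducing a connected subgraph of $G$; every vertex holds an input value; and every vertex must learn the aggregate (sum, min, OR, etc.) of the values in its own part. The first step of the plan is to show that MST and $(1+\e)$-approximate min-cut each reduce, obliviously to the network, to $\poly(\log n)$ instances of part-wise aggregation plus an additive $\tO(D)$ overhead. For MST this is the Bor\r{u}vka-style argument: maintain a forest of fragments, and in each of $O(\log n)$ phases have every fragment (a part) learn its minimum-weight outgoing edge via one aggregation, then resolve the induced fragment merges with $O(\log n)$ further aggregations of pointer-jumping type. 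For $(1+\e)$-approximate min-cut one invokes the tree-packing/skeleton approach, which likewise bottlenecks on a polylogarithmic number of part-wise aggregations; I would cite~\cite{ghaffari2016distributed,haeupler2016low} for this rather than reprove it.

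The second and central step is to show that, if $G$ admits tree-restricted shortcuts of quality $q$, then one instance of part-wise aggregation can be solved in $\tO(q(D))$ rounds. This splits in two. \emph{(i) Construction.} The algorithm is not handed a shortcut, so it must build one, and the whole point of the tree-restricted hypothesis is that a near-optimal such shortcut is \emph{cheaply discoverable} distributedly. I would guess the quality parameter by successive doubling, and for each guess $\hat q$ run the distributed shortcut-construction routine of~\cite{haeupler2016low}, which uses $\tO(\hat q)$ rounds and, once $\hat q \ge q(D)$, outputs a shortcut of quality $\tO(\hat q)$; a verification step (itself a part-wise aggregation over the trial shortcut, measuring its congestion and dilation) detects success, so the doubling terminates with total cost $\tO(q(D))$. \emph{(ii) Routing.} Given a shortcut of congestion $c$ and dilation $d$ with $c + d \le \tO(q(D))$, part-wise aggregation on the union of each part with its shortcut edges is a routing/pipelining task; by random-delay scheduling (Leighton--Maggs--Rao, or the CONGEST-specific analyses of~\cite{ghaffari2016distributed,haeupler2016near}) all parts aggregate simultaneously in $\tO(c + d) = \tO(q(D))$ rounds.

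Combining, each of the $\poly(\log n)$ aggregation instances costs $\tO(q(D))$ rounds, and together with the $\tO(D) \le \tO(q(D))$ overhead of the outer MST / min-cut reductions the total is $\tO(q(D))$, as claimed; note the algorithm never inspects the structure of $G$ beyond what is needed to run these generic subroutines. The main obstacle — and the only place where the ``tree-restricted'' qualifier is load-bearing — is part (i): for unrestricted shortcuts, finding even an approximately optimal one could be as hard as the optimization problem we are trying to solve, so the argument genuinely hinges on that structural restriction making the shortcut itself efficiently constructible. Everything else (the Bor\r{u}vka reduction, the min-cut reduction, and the random-delay routing bound) is by now standard, and I would defer to the cited papers for the details.
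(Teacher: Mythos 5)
This statement is imported verbatim from the cited works (Ghaffari--Haeupler and Haeupler et al.) and the paper offers no proof of its own, so there is nothing internal to compare against; your sketch correctly reconstructs the standard argument from those papers (Bor\r{u}vka-style and tree-packing reductions to part-wise aggregation, distributed construction of near-optimal tree-restricted shortcuts by quality-guessing, and random-delay routing in $\tO(c(D)+b(D)\cdot D)=\tO(q(D))$ rounds, matching the paper's definition of quality). No gaps worth flagging at this level of detail; the load-bearing constructibility claim you isolate is exactly the contribution of the cited framework.
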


Our main technical result is showing the existence of good tree-restricted shortcuts in excluded minor graphs.
%As previously said, this paper extends the framework to \textbf{excluded minor graphs}. Our main technical result is the following theorem.

% restatable* does not create a theorem number, so you have to create it by \mainMinor
% similarly, you can use restatable which creates it, then you use the \mainMinor* to recall it
\begin{restatable}{theorem}{mainMinorShort}[Main Theorem]
  \label{thm:mainMinorShort} Every graph in a graph family excluding a fixed minor $H$ admits tree-restricted shortcuts of quality $q(d) = \tilde{O}(d^2)$. The constants in the big-$O$ depend only on the minor $H$.
\end{restatable}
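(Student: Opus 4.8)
The plan is to prove \Cref{thm:mainMinorShort} by induction following the recursive structure of the Graph Structure Theorem, establishing a shortcut-quality bound at each stage of the decomposition. Recall that the Graph Structure Theorem says that any $H$-minor-free graph admits a tree decomposition of bounded adhesion in which every torso is, after deleting a bounded number of \emph{apex} vertices, a graph that can be drawn on a surface of bounded genus with a bounded number of \emph{vortices} of bounded width attached along faces. So the natural outline is: (i) establish good tree-restricted shortcuts for bounded-genus graphs; (ii) upgrade this to bounded-genus-plus-bounded-vortices; (iii) handle the addition of a bounded number of apices; and (iv) show that clique-sums (equivalently, a bounded-adhesion tree decomposition) of such pieces preserve the quality bound, yielding the $\tilde O(d^2)$ bound for the whole family. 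Each of these is a separate lemma, and the overall quality is obtained by composing them, each step costing only a constant factor (depending on $H$) or a $\poly\log$ factor.

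The base case---bounded-genus graphs admit tree-restricted shortcuts of quality $\tilde O(d)$---is the foundation and would be proved by adapting the planar shortcut construction of Ghaffari--Haeupler: cut the surface along a system of $O(g)$ short curves through a BFS-tree root region to reduce to the planar case, where one assigns to each part the set of tree edges lying ``under'' it in a planar embedding and argues the congestion is $\tilde O(d)$ via a planar duality / interval argument. For the vortex step, one observes that a vortex of bounded width has bounded pathwidth, so it behaves essentially like a bounded-treewidth graph; shortcuts for bounded-treewidth graphs of quality $\tilde O(d)$ (indeed, trivially good shortcuts exist there) can be glued to the surface part along the vortex society with only a constant overhead. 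Adding $a$ apices is handled by noting each apex is a single vertex of arbitrary degree: we route through an apex only when necessary and charge its use to a BFS layer, which costs at most an additional additive $D$ (hence still $\tilde O(d)$, or at worst $\tilde O(d^2)$ after interacting with the other steps). Finally, for clique-sums along a tree decomposition of depth potentially large, the key trick is that tree-restricted shortcuts are defined with respect to a BFS tree, and a BFS tree of the whole graph interacts well with the decomposition tree: parts of the graph that are ``far'' in BFS distance are handled independently, and the adhesion sets (of bounded size) act as bounded-size separators through which congestion is controlled, multiplying the quality by at most an $O(d)$ factor --- this is where the square in $d^2$ enters.

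The main obstacle will be the clique-sum / tree-decomposition step, specifically reconciling the \emph{global} BFS tree used to define tree-restricted shortcuts with the \emph{recursive, possibly deep} tree decomposition supplied by the Graph Structure Theorem. Unlike the surface and vortex pieces, which are ``local'', the decomposition tree can have many bags along a root-to-leaf path, and naively composing the per-bag shortcut bounds would blow up the quality by a factor depending on the depth, which is not bounded in terms of $D$ alone. The resolution I would pursue is to argue that each BFS layer interacts with only ``few'' bags in a charging sense --- more precisely, to partition the shortcut-routing load so that each edge of the host graph is used by the shortcuts of only $\tilde O(d)$ different parts, exploiting that adhesions are bounded-size separators and that a path in the host graph of length $\le d$ can cross only $O(d)$ adhesions. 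Getting this charging argument tight, and compatible simultaneously with the apex and vortex modifications, is the delicate technical heart of the proof; everything else is a (careful but routine) adaptation of known planar/bounded-genus shortcut constructions.
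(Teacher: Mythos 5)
Your high-level skeleton (genus $\to$ genus+vortices $\to$ apices $\to$ clique-sums, driven by the Graph Structure Theorem) matches the paper, but two of your steps have genuine gaps, and your accounting of where the $d^2$ comes from is misplaced. First, the apex step is not a routine "route through the apex and charge an additive $D$" argument: adding an apex can collapse the diameter arbitrarily (e.g.\ a cycle plus an apex has diameter $2$ while the apex-free graph has diameter $\Theta(n)$), so shortcuts that are good relative to the apex-free diameter, plus an additive term, are worthless relative to the new diameter $d$. This is exactly the technical heart of the paper: after deleting the apex one partitions the BFS tree into cells of diameter $O(d)$, and the hard part is assigning cells to parts so that each part misses at most $2$ of the cells it intersects while each cell serves at most $\beta(d)=O(d)$ parts; proving such an assignment exists requires the topological machinery of $s$-combinatorial gates (\Cref{lemma:PlanarCombinatorial}, \Cref{lemma:GenusVortexCombinatorial}) and the contraction-based induction of \Cref{lemma:combinatorialToCellsparse}. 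Nothing in your proposal supplies this, and the "charge its use to a BFS layer" idea does not bound congestion in terms of the new diameter. Relatedly, your claim that genus+vortex pieces retain quality $\tilde O(d)$ is stronger than what the paper achieves: vortices are handled via the treewidth bound $O((g+1)k\ell d)$ (\Cref{lemma:VortexTreewidth}) and the treewidth-based shortcut of \Cref{theorem:shortcut-existence-on-bounded-treewidth}, which forces a block parameter of $O(d)$ --- this, multiplied by $d$ in the definition of quality, is the actual source of the $d^2$ (see \Cref{thm:almostEmbeddable}); improving it to $\tilde O(1)$ is stated as an open problem.

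Second, your clique-sum step does not go through as sketched. In the natural local/global construction, the congestion of the global shortcut on an edge is governed by the number of ancestor bags on the root path in the decomposition tree $\dectree$ (each contributing up to $k$ parts, as in \Cref{lem:CliqueSumBad}), and the depth of $\dectree$ is unrelated to $d$ --- a star graph has diameter $2$ but admits a clique-sum decomposition path of depth $\Theta(n)$ --- so "a path of length $\le d$ crosses only $O(d)$ adhesions" does not bound the number of parts charging a given edge. The paper's fix is structural rather than a charging argument: it compresses $\dectree$ to depth $O(\log^2 n)$ by folding each chain of a heavy-light decomposition (handling the resulting "double edges" carefully), so the clique-sum step costs only an additive $O(k\log^2 n)$ in congestion and $O(1)$ factors in block parameter (\Cref{thm:CliqueSum}), not a multiplicative $O(d)$. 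Note also that your budget is internally inconsistent with the paper's: if the almost-embeddable pieces only give $\tilde O(d^2)$ quality (as they do, via the $O(d)$ block parameter), an extra multiplicative $O(d)$ loss at the clique-sum stage would yield $\tilde O(d^3)$, not $\tilde O(d^2)$; conversely, your claimed $\tilde O(d)$ bound for the pieces is unproven. So both the apex lemma and the depth-compression idea are missing, and without them the proposed proof does not close.
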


Combining the above \Cref{thm:ShortcutsFramework} and \Cref{thm:mainMinorShort}, we get out main result.
\BC
  \label{thm:mainMinorAlgo}
There exists an $\tilde O(D^2)$-round distributed algorithm for MST and $(1+\e)$-approximate min-cut, for any $\e>0$, on graph networks excluding a fixed minor.
\EC

For excluded minor graphs of diameter $n^{o(1)}$, as is the case for most practical networks, our algorithms also run in $n^{o(1)}$ time, which is optimal up to lower order terms. This is a significant improvement over the previously known $\tilde\Omega(\sqrt n)$ time algorithms and it avoids the $\tilde{\Omega}(\sqrt{n})$ lower bound for general graphs, even when they have $n^{o(1)}$ diameter.

\BC
There exists an $n^{o(1)}$-round distributed algorithm for MST and $(1+\e)$-approximate min-cut, for any $\e>0$, on graph networks with diameter $n^{o(1)}$ and excluding a fixed minor.
\EC

In order to work with excluded minor families, we appeal to the Robertson-Seymour Graph Structure Theorem. At a high level, this theorem states that every graph in an excluded minor family can be decomposed into a set of graph \textbf{almost embeddable} in a bounded genus surface that are glued together in a tree-like fashion. Naturally, our approach is to first construct good-quality shortcuts for the entire family of almost embeddable graphs, and then modify them in a robust manner as they are patched together in the composition. While this approach works in general, the patching required is very involved because of various interactions between the many ingredients involved in the decomposition.
For example, one step in the construction of an almost embeddable graph is the addition of an ``apex'' vertex that connects arbitrarily to all previous vertices. While the addition of only one vertex appears harmless at first glance, observe that the diameter can shrink arbitrarily, e.g., to $2$ if the apex is connected to all other vertices. If the graph without the apex has large diameter $D$, and its shortcuts solution leads to an $\tO(D^2)$-round algorithm, this same algorithm will not suffice on the graph with the apex, which can have diameter $2$. A lot of technical effort goes into reconstructing shortcuts upon the addition of an apex, in order to handle the arbitrary decrease in graph diameter.
Hence, as a consequence of all these difficulties, we settle for $\tilde O(d^2)$-quality shortcuts, and leave the improvement to $\tilde O(d)$-quality shortcuts as an open problem.

\subsection{Related Work}

Work on global network problems in the distributed setting was started by Gallager, Humblet and Spira~\cite{gallager1983distributed} who gave a $O(n \log n)$-round algorithm to compute the MST. The algorithm was subsequently improved by Awerbuch~\cite{awerbuch1987optimal} to an ``optimal'' $O(n)$ rounds. However, Peleg and Awerbuch~\cite{awerbuch1989randomized} noted that a more useful notion of round complexity would parametrize on both the number of nodes $n$ and the diameter $D$ since $D \ll n$ for many practical networks. This influenced a substantial amount of work that followed, culminating in an $\tilde{O}(D + \sqrt{n})$ distributed algorithm for many optimizations problems in the CONGEST setting. Examples of such problems include the MST~\cite{garay1998sublinear, khan2008fast,elkin2017simple}, $(1+\e)$-approximate Maximum Flow~\cite{ghaffari2015near}, Minimum Cut~\cite{nanongkai2014almost, ghaffari2013distributed}, Shortest Paths, and Diameter~\cite{nanongkai2014distributed,frischknecht2012networks,holzer2012optimal,Lenzen:2013,Lenzen:2015,izumi2014time, henzinger16almost, elkin2017distributed, huang2017distributed}. The mentioned $\tilde{O}(D + \sqrt{n})$ round complexity is \textbf{existentially optimal} for all of these problems, as there exists a family of graphs for which any algorithm must take a matching $\tilde \Omega(D+\sqrt n)$ rounds~\cite{elkin2006unconditional, sarma2012distributed}.

Some early work that tried to circumvent the $\Omega(\sqrt{n})$ bound was done by Khan and Pandurangan\cite{khan2008fast} who argued that their MST algorithm performed in an \textbf{universally optimal} manner on some restricted classes of graphs. In particular, they gave a $\tilde{O}(D)$ round MST algorithm for (1) unit disk networks where weights match the distances, and (2) networks with IID random weights. Their approach used a novel parameter called ``local shortest path diameter'' that takes the edge weights into account. This makes their approach unsuitable for arbitrary weights and topologically constrained networks such as planar graphs.

However, an alternative approach has recently made progress for global optimization problems on restricted graph classes. Ghaffari and Haeupler showcased a distributed MST and $(1+\e)$-approximate min-cut algorithm that runs in $\tilde{O}(D)$ rounds on planar graphs~\cite{ghaffari2016distributed}, which was later simplified and generalized to bounded treewidth graphs and bounded genus graphs~\cite{haeupler2016near}. All of these results use the aforementioned low-congestion shortcuts framework.

The Graph Structure Theorem originates from a series of deep results on graph minor theory by Robertson and Seymour~\cite{robertson1986graph, robertson2003graph}. It provides a structural decomposition to all excluded minor graphs, transforming a negative property---not containing a minor---to a positive and constructive property that is more useful for algorithm design. The original statement of the theorem only states that such a structure exists, but in a later breakthrough, Demaine et al.\ developed a polynomial-time algorithm to compute the decomposition guaranteed by the theorem\cite{demaine2005algorithmic}. Since then, the graph structure decomposition has found numerous algorithmic applications on excluded minor graphs, such as polynomial-time approximation schemes~\cite{grohe2003local, demaine2005algorithmic},  subexponential algorithms~\cite{demaine2005subexponential}, graph coloring~\cite{devos2004excluding}, and computing separators~\cite{grohe2003local, abraham2006object}. Since then, simpler proofs of the Graph Structure Theorem have been discovered~\cite{kawarabayashi2011simpler}, as well as more efficient algorithms, with the fastest known one running in time $f(H)\cd n^3$ for a function $f$ depending only on the excluded minor $H$~\cite{kawarabayashi2011simpler}. However, we believe this is the first time the theorem is used for an algorithmic result in the distributed setting.

\subsection{Preliminaries}

For a graph $G$, let $V(G)$ and $E(G)$ denote the vertices and edges, respectively. Given $P \subseteq V(G)$, $G[P]$ denotes the induced subgraph, namely, the one obtained by removing $V(G) \setminus P$ from $G$. Finally, when $G$ is the underlying network graph, we always assume that $G$ is connected and contains no self-loops (which can be ignored in the distributed setting anyway).

\fullOnly{
\subsubsection{CONGEST model}

While we only use the classical CONGEST model indirectly, via \Cref{thm:ShortcutsFramework} of \cite{haeupler2016low}, we will state its assumptions for context. A network is given by a connected graph $G=(V,E)$ with $n$ nodes and diameter $D$. Communication proceeds in synchronous rounds. In each round, each node can send a different $O(\log n)$ bit message to each of its neighbors. Local computations are free and require no time. Nodes have no initial knowledge of the topology $G$, except that we assume that they know $n$ and $D$ up to constants. One can easily remove these assumptions by distributively computing these parameters in $O(D)$ time, which is negligible in our context.
}

\subsubsection{Graph Structure Theorem}

\begin{figure}[H]
  \centering
  \shortOnly{\includegraphics[width=0.76\textwidth]{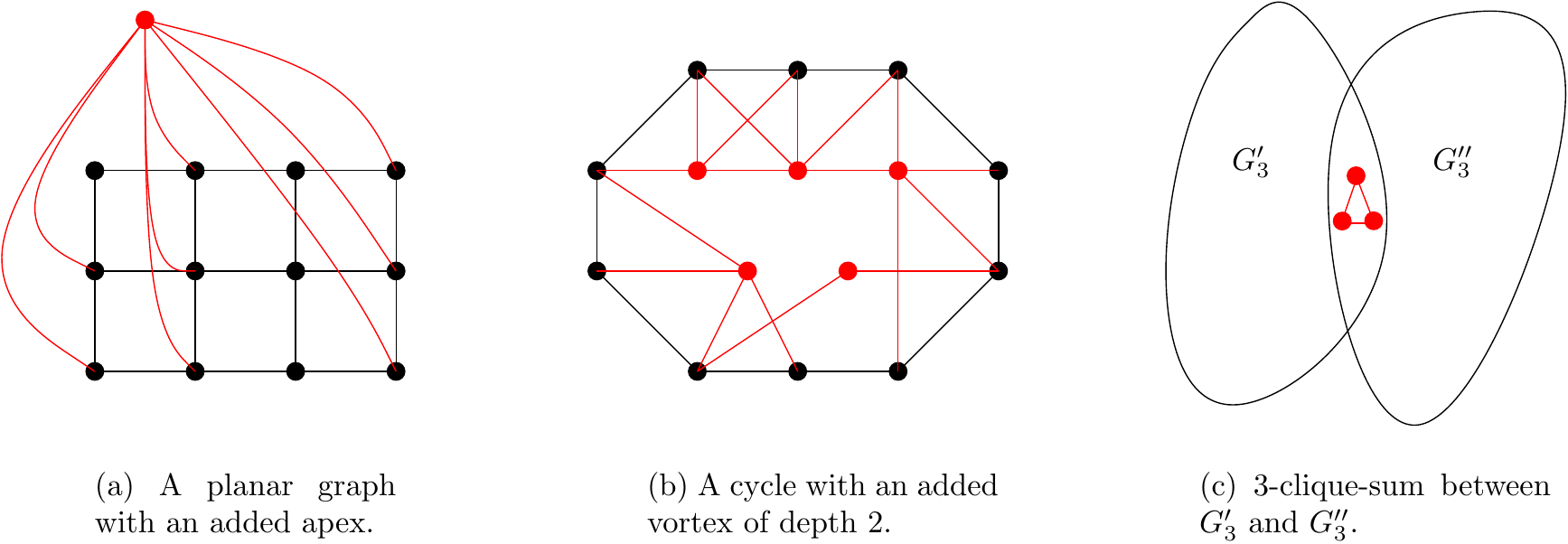}}
  \fullOnly{\includegraphics[width=0.90\textwidth]{figures/graph-structure.pdf}}
  \caption{Ingredients of the Graph Structure Theorem}
  \label{figure:graph-structure-theorem}
\end{figure}

In this section we introduce Robertson and Seymour's  Graph Structure Theorem, following the survey of Lov\'{a}sz~\cite{lovasz2006graph}. This theorem is instrumental in our shortcut construction, since it provides structure for all graphs that are $H$-free, for any minor $H$. At a high level, the theorem says that every $H$-free graph can be glued together in a tree-like fashion from graphs that can be ``almost'' embedded in a fixed surface. To elaborate on this statement, we need a few definitions. The first definition, $k$-clique-sum, captures the tree-like structure of the graph.

\BD[$k$-clique-sum] Let $G_1$ and $G_2$ be two graphs, and let $S_i\s G_i$ be a $k$-clique for $i=1,2$. Let $G$ be obtained by identifying  $S_1$ with $S_2$ and deleting some (possibly none, possibly all) edges between the nodes in $S_1=S_2$. We say that $G$ is a \textbf{$k$-clique-sum} of $G_1$ and $G_2$. More generally, $G$ is a $k$-clique-sum of $G_1,G_2,\lds,G_\ell$ if $G$ is formed by starting with $G_1$ and iteratively taking the $k$-clique-sum of the current graph with $G_i$, for $i=2,\lds,\ell$ in that order.
\ED

The next few definitions classify the graphs which are almost embeddable on a surface. We start with the three main ingredients in constructing such a graph, and then define what it means to be almost-embeddable.

\BD[Apex] Define \textbf{adding an apex to graph $G$} as follows: create a new vertex called the \textbf{apex}, and connect it to an arbitrary subset of the vertices in $G$.
\ED

\BD[Surface of genus $g$]
A graph $G$ has genus $g$ if there is a \textbf{2-cell embedding} in a surface of genus $g$. In other words, this means: (i) there exists an oriented or unoriented surface (i.e., 2-manifold) $\Sigma$ of genus $g$, (ii) vertices of $G$ are mapped to distinct points of $\Sigma$, (iii) edges are mapped to simple paths whose interiors do not contain any vertices of $G$ nor do path interiors mutually intersect, and (iv) each face defined by such embedding is homeomorphic to a unit disk, i.e., contains no holes or handles in it.
\ED

\BD[Vortex~\cite{lovasz2006graph}]\label{defn:Vortex}
Let $G$ be a graph with a 2-cell embedding. Let $C$ be a cycle in $G$ that corresponds to a face on the surface. Call a continuous interval on the cycle an \textbf{arc}. Select a family of arcs on $C$ so that each node is contained in at most $k$ of these arcs. For each arc $A$, create a new node $v_A$ and connect $v_A$ to a subset of the vertices in $C$ that lie on arc $A$. Such nodes $v_A$ are called \textbf{internal vortex nodes}. Finally, for any two arcs $A$ and $B$ sharing a common vertex in $C$, we may add the edge $\{v_A,v_B\}$. We call this operation \textbf{adding a vortex of depth $k$ to cycle $C$}.
\ED

\BD A graph $G$ is \textbf{$(q,g,k,\ell)$-almost-embeddable} if it can be constructed according to the three steps below.
\BE
 \setlength{\itemsep}{0pt}
 \im[(i)] Start with a graph $G'$ embedded on a surface of genus at most $g$.
 \im[(ii)] We select at most $\ell$ faces of $G'$ and add a vortex of depth at most $k$ to each of them. Call the result $G''$.
 \im[(iii)] We add $q$ apices to $G''$, connected arbitrarily to vertices in $G''$ and to each other, and obtain the desired graph $G$.
\EE
For simpler notation, we say a graph is \textbf{$h$-almost-embeddable} if it is $(h,h,h,h)$-almost embeddable.
\ED

By this definition, the planar graphs are precisely the $(0,0,0,0)$-almost-embeddable graphs, and the genus-$g$ graphs are precisely the $(0,g,0,0)$-almost-embeddable graphs. Later on, we will study the planar graphs with added vortices, in particular the $(0,0,k,\ell)$-almost-embeddable graphs for constants $k$ and $\ell$.

As a final ingredient to the Graph Structure Theorem, we construct a graph family $\mathcal{L}_k$ as follows. 
\BD
Let $\m L_k$ denote all graphs that can be represented as a $k$-clique-sum of $k$-almost-embeddable graphs. That is, a graph $G$ is in $\m L_k$ if there exist $k$-almost-embeddable graphs $G_1,G_2,\ldots,G_\ell$ such that $G$ is a $k$-clique-sum of $G_1,G_2,\ldots,G_\ell$.
\ED

In other words, take any set of $k$-almost-embeddable graphs $G_1,G_2,\lds,G_\ell$ for $\ell\ge1$, and let $G$ be their $k$-clique-sum. %No other graphs are in $\m L_k$.
Construct a graph $G$ by repeatedly taking a $k$-clique-sum operation between multiple $G_i$'s constructed using step (i)--(iii) and connect them in a tree-like fashion.
Define $\m L_k$ as precisely all graphs $G$ that can be constructed in this way.

Finally, we present the Graph Structure Theorem, which states that for any $H$, there is a $k$ such that $\m L_m$ includes (but does not exactly characterize) all graphs that are $H$-free.

\begin{theorem}[Graph Structure Theorem]\label{thm:GraphStructure}
  For every graph $H$ there is a fixed integer $k = k(H)$ such that any $H$-free graph $G$ is contained in $\mathcal{L}_k$.
\end{theorem}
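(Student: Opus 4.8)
The plan is \emph{not} to reprove this statement from first principles: it is the celebrated Graph Structure Theorem of Robertson and Seymour, whose proof spans a large portion of their Graph Minors series (culminating in~\cite{robertson2003graph}) and is far outside the scope of this paper. Instead, I would invoke it as a black box, taking the clean formulation from Lov\'{a}sz's survey~\cite{lovasz2006graph}, which already packages the conclusion in exactly the form we need: for every $H$ there is a constant $k = k(H)$ so that every $H$-free graph is a $k$-clique-sum of $k$-almost-embeddable pieces.

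For context, here is the architecture one would cite. The driving object is a \emph{tangle} of large order, which one should picture as a single highly connected region of the graph. If $G$ has no tangle of large order, then it has bounded tree-width, and one is done immediately by taking the clique-sum decomposition along the bags of a tree decomposition. Otherwise, fix a tangle $\mathcal{T}$ of large order; the technical heart of the theorem shows that, after deleting a bounded number of \emph{apex} vertices, the rest of the graph (near $\mathcal{T}$) embeds in a surface in which $H$ does \emph{not} embed --- hence a surface of bounded genus, since excluding $H$ as a minor bounds the genus --- up to a bounded number of ``local'' deviations, which are precisely the bounded-depth \emph{vortices} attached to faces. One then shows the tangle is essentially confined to a single almost-embeddable piece, splits that piece off along a bounded-size clique-sum, and recurses on the remainder, all of whose tangles have strictly smaller order; an induction on the tangle order (or on $|V(G)|$) ensures termination, with every constant depending only on $H$.

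The main obstacle --- and the reason the full argument occupies an entire sequence of papers --- is the surface-embeddability step: understanding how a tangle of large order can interact with an excluded minor, and proving that the \emph{only} obstructions to a clean bounded-genus embedding are a bounded number of apices together with a bounded number of bounded-depth vortices. This needs the full Robertson--Seymour machinery (societies, transactions, the two-paths theorem, and so on). Fortunately, we use the theorem purely in the direction ``$H$-free $\Rightarrow$ structured,'' never rely on its internal constants (our distributed algorithm never computes the decomposition), and never need a converse; so citing it is entirely sufficient here, and we make no attempt to reprove or to optimize it.
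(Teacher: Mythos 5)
Your proposal matches the paper exactly: the paper states the Graph Structure Theorem as an external result, cited from Robertson and Seymour via Lov\'{a}sz's survey, and never proves it, relying on it only in the ``$H$-free $\Rightarrow$ structured'' direction just as you describe. Invoking it as a black box is the correct and intended treatment here.
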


Below, we include additional terminology on clique-sums and vortices used in the shortcut construction.

\BD[Vortex terminology] Let $C$ be a cycle of $G$, and add a vortex of depth $k$ to $C$, following \Cref{defn:Vortex}. Let $v_1,v_2,\lds$ be the vertices created when adding a vortex of depth $k$ to $C$. The vertices in $C$ form the \textbf{vortex boundary}, and the added vertices $v_1,v_2,\lds$ are called \textbf{inside the vortex} and \textbf{internal vortex} nodes. Moreover, suppose an internal vertex $v_i$ corresponds to \textbf{arc} $A_i$ of $C$ in the vortex construction. Define the \textbf{vortex decomposition} to be the map $\m P$ satisfying $\m P(v_i)=A_i$. Finally, if $G$ is embedded on a closed surface such that $C$ forms a face in the embedding, then that face is called the \textbf{vortex face}.
\ED

\BD[$k$-Clique-sum decomposition tree] \label{defn:CliqueSumDecomposition}
Let the graph $G$ be constructed as the $k$-clique-sum of subgraphs $B_1, B_2, \ldots, B_\ell$. The subgraphs $B_i \subseteq G$ are denoted as \textbf{bags}. A \textbf{$k$-clique-sum decomposition tree} of $G$ is a tree $\dectree$ whose vertices $V(\dectree)$ are identified with bags $B_i$. The edges of the decomposition $f\in E(\dectree)$ correspond to a clique in two of the bags, with possibly some removed edges. Therefore, we refer to them as \textbf{partial $k$-cliques} $C_f$. The decomposition satisfies the following properties:
\BE
    \setlength{\itemsep}{0pt}
    \im $\Bigcup_{i\in\dectree}V(B_i)=V(G)$.
    \im For all $i \in V(\dectree)$, $B_i \subseteq G$.
    \im For all $f=\{i,j\}\in E(\dectree)$, $B_i\cap B_j=C_f$.
    \im For all $v\in V(G)$, the set $\{i\in V(\dectree)\mid v\in V(B_i)\}$ is connected in $\dectree$.
    \im For all $e\in E(G)$, there exists $i\in V(\dectree)$ with $e\in E(B_i)$.
 \EE
\ED

We conclude with a statement that the above clique-sum decomposition tree captures all possible ways to take clique-sums of graphs.
\begin{fact}\label{fact:CliqueSumDecomposition}
Let a graph $G$ be the $k$-clique-sum of graphs from a family $\m F$. Then, $G$ has a $k$-clique-sum decomposition tree whose bags are graphs in $\m F$.
\end{fact}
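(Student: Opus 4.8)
The plan is to induct on the number $\ell$ of summands. Unwind the $k$-clique-sum building $G$ step by step: set $G^{(1)} = G_1$ and, for $2 \le i \le \ell$, let $G^{(i)}$ be the $k$-clique-sum of $G^{(i-1)}$ with $G_i$, identifying a $k$-clique $S^{(i)}\s G^{(i-1)}$ with a $k$-clique $S_i\s G_i$ and deleting some subset of the edges inside the identified clique, so that $G = G^{(\ell)}$. For the base case $\ell = 1$ I would take $\dectree$ to be the single node $B_1 = G_1$ with no edges; properties (1)--(5) then hold vacuously.

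For the inductive step, assume $G^{(\ell-1)}$ has a $k$-clique-sum decomposition tree $\dectree'$ whose bags are the graphs $G_1,\ldots,G_{\ell-1}$. The one non-bookkeeping ingredient is the following observation: forgetting the edge labels $C_f$, a $k$-clique-sum decomposition tree is an ordinary tree decomposition, since properties (1), (4), (5) are precisely the vertex-cover, subtree-connectivity, and edge-cover axioms; and in any tree decomposition every clique of the underlying graph lies inside a single bag. Applying this fact to the $k$-clique $S^{(\ell)}$ used in the last sum yields a bag $B_j$ of $\dectree'$ with $S^{(\ell)}\s V(B_j)$. I would then form $\dectree$ from $\dectree'$ by adding a new node $B_\ell := G_\ell$ and a new edge $f = \{j,\ell\}$, labeled by the partial $k$-clique $C_f := B_j\cap B_\ell$, which is exactly $S^{(\ell)} = S_\ell$ carrying the edges that survived the final clique-sum.

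It remains to verify (1)--(5) for $\dectree$, which is routine leaf-attachment. Every vertex and every surviving edge of $G$ lies either in a bag inherited from $\dectree'$ or in $B_\ell = G_\ell$, giving (1) and (5); (3) holds by construction on the new edge, and on old edges it is untouched since attaching a leaf changes no pairwise intersection; (4) holds because the vertices of $G_\ell$ outside $S_\ell$ are genuinely new and occur only in $B_\ell$, while for $v\in S^{(\ell)}$ the old bags containing $v$ form a subtree through $B_j$, to which $B_\ell$ is now glued. The one point needing care is (2): the final clique-sum may delete edges inside $S^{(\ell)}$ from $G$, so literally $B_\ell \subseteq G$ only after those clique-internal edges are dropped from the recorded bag. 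This is harmless here: one records the bag as $G_\ell$ with the deleted clique-internal edges removed, which still lies in the relevant family whenever that family is closed under deleting edges inside a clique (as is the case for the almost-embeddable families used later), or one simply reads (2) up to clique-internal edges, since such edges are never needed to witness (5).

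The main obstacle is really not much of one: the only substantive step is the classical clique-in-a-bag fact for tree decompositions, invoked to locate the bag where the new leaf attaches; everything else is bookkeeping, with the lone wrinkle being the treatment of the clique-internal edges that the summing operation deletes.
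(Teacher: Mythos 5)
The paper never proves this fact---it is stated as folklore and used later without argument---so there is no in-paper proof to compare against; judged on its own, your induction is correct and is the natural argument. The one substantive step, locating a bag of $\dectree'$ that contains the clique $S^{(\ell)}$ by observing that properties (1), (4), (5) make the bags' vertex sets an ordinary tree decomposition of $G^{(\ell-1)}$ and then invoking the standard clique-in-a-bag lemma, is exactly right (and it is legitimate because the definition of the clique-sum requires $S^{(\ell)}$ to be a genuine clique of the current graph $G^{(\ell-1)}$); the leaf-attachment verification of (1)--(5) then goes through as you describe. The only small asymmetry in your write-up is that the deleted clique-internal edges are an issue not just for $B_\ell$: the previously recorded bags (in particular $B_j$, and any old bag whose edge set meets the clique on $S^{(\ell)}$) may also contain edges that the final sum deletes, and the old labels $C_f$ may need the same pruning. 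The fix you already state covers this symmetrically---record every bag as a subgraph of $G$, i.e.\ with deleted clique edges dropped, so property (2) holds literally, and read ``bags are graphs in $\m F$'' as ``bags become graphs in $\m F$ once their partial $k$-cliques are completed.'' This is precisely the convention the paper itself adopts when it uses the fact: in the proof of \Cref{lem:CliqueSumBad} the bags $B_h$ are subgraphs of $G$ carrying partial cliques, and the member of $\m F$ is the completed graph $B_h^0$.
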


\subsubsection{Tree-restricted Shortcuts}\label{section:TreeRestrictedShortcuts}

This section introduces the shortcut framework from \cite{ghaffari2016distributed} and \cite{haeupler2016low}. We state the definitions and concisely repeat the motivations behind them.

Imagine solving the following problem in a distributed manner: \textit{Each node in the network is assigned a number $x_v$. The network is partitioned into a number of disjoint individually-connected parts and each node wants to compute the minimum/maximum/sum of $x_v$ between all nodes in its own part}. The above subproblem often occurs in algorithm design. Notably, it appears in Boruvka's Minimum Spanning Tree algorithm~\cite{nevsetvril2001otakar}. A naive solution would spread the information about $x_v$ independently inside each part, because the problem statement basically makes them independent. However, this will incur performance penalties if the diameter of the parts in isolation is much larger than the diameter of the entire graph. For instance, a wheel graph has $\Theta(1)$ diameter, while a part containing all the outer nodes has $\Theta(n)$ diameter. This leads us to the idea of ``helping'' a part by assigning it additional edges that it can use to spread information. We call these edges ``shortcuts''.

\begin{definition}[General Shortcuts]
  \label{def:generalshortcut}
  Let $G = (V, E_G)$ be an undirected graph with vertices subdivided into \textbf{pairwise disjoint and connected} subsets $\mathcal{P} = (P_1, P_2, ..., P_{|\mathcal{P}|}), P_i \subseteq V$. In other words, $G[P_i]$ is connected and $P_i \cap P_j = \emptyset$ for $i \neq j$.
  The subsets $P_i$ are called \textbf{parts}.
  We define a \textbf{shortcut} $\mathcal{H}$ as a tuple of \textbf{shortcut edges} $(H_1, H_2, \ldots, H_{|\mathcal{P}|})$, $H_i \subseteq E(G)$.
\end{definition}

The shortcut framework has particularly nice properties if the shortcut edges are restricted to some tree $T$, typically the BFS tree. In particular, they can be near-optimally constructed in a distributed and uniform manner on any network. We first define this structured version of shortcuts, enumerate the quality measures for them, and finally state the relevant theorems.

\begin{definition}[Tree-Restricted Shortcuts]
  Let $\mathcal{H} = (H_1, H_2, ..., H_{|\mathcal{P}|})$ be a shortcut on the graph $G = (V, E_G)$ with respect to the parts $\mathcal{P} = (P_i)_i$. Given a  spanning tree $T = (V, E_T) \subseteq G$ we say that a shortcut $\mathcal{H}$ is \textbf{$T$-restricted} if for each part $P_i \in \mathcal{P}$, its shortcut edges $H_i \subseteq E_T$. In other words, every edge of $H_i$ lies on the tree $T$.
\end{definition}

Some shortcuts are preferable to others. We define \textbf{congestion, block}, and \textbf{quality} parameters to measure their usefulness. Firstly, the congestion parameter measures how many shortcuts use a particular edge, thereby resulting in over-congestion.
\begin{definition}[Congestion]
  Let $\mathcal{H} = (H_1, H_2, ..., H_{|\mathcal{P}|})$ be a shortcut on $G = (V, E_G)$. Define a congestion over an edge $e$ as $c_e := |\{ i : e \in H_i \}|$, the number of shortcuts using an edge. Finally, define the \textbf{congestion of the shortcut} to be $\max_{e \in E_G} c_e$, the maximum congestion of any edge.
\end{definition}

Another beneficial property we would like is that the subgraphs $G[P_i] + H_i$ have small diameter. By $G[P_i] + H_i$ we mean the subgraph constructed by taking the induced subgraph $G[P_i]$ and then adding all the edges in $H_i$ as well as any of their endpoints not contained in $P_i$. We will measure diameter of $G[P_i] + H_i$ indirectly, by defining a \textbf{block parameter} $b$ that essentially measures the number of different subtrees (components) that exist in $H_i$. We use it in the following way: note that (1) $G[P_i]$ is connected by definition, (2) $G[P_i] + H_i$ effectively looks like $b_i$ interconnected subtrees, (3) each subtree has diameter $O(D)$ since $T$ is typically a spanning tree whose height is at most $D$, the diameter of $G$. From these properties we can easily conclude that $G[P_i] + H_i$ has a controlled, $O(b_i D)$, diameter.

\begin{definition}[Block parameter]
  For a shortcut $\mathcal{H} = (H_1, H_2, ..., H_{|\mathcal{P}|})$, fix a part $P_i$ and consider the connected components of the spanning subgraph $(V, H_i)$. If a connected component contains a node in $P_i$ we call it a \textbf{block component}. They correspond to subtrees of $T$. We define that $\mathcal{H}$ has \textbf{block parameter} $b$ if no part has more than $b$ block components.

  In general, for functions $b,c: \N\to\N$, a graph $G$ is said to \textbf{admit} tree-restricted shortcuts with block parameter $b$ and congestion $c$ if for \textbf{any} spanning tree $T$ with diameter at most $d_T$ and any family of parts there exists a $T$-restricted shortcut with  block parameter $b(d_T)$ and congestion $c(d_T)$. For ``good'' tree-restricted shortcuts, we typically expect a block parameter of $\tO(1)$ and a congestion of $\tO(d_T)$.
 
  Similarly, a family of graphs admits tree-restricted shortcuts with block parameter $b$  and  congestion $c$ if all graphs in the family individually admit them.
\end{definition}

%In general, a graph $G$ is said to admit tree-restricted shortcuts of dilation $d(\cd )$ and congestion $c(\cd )$ if, for every partition of $V(G)$ into parts and for every rooted spanning tree $T\s G$ of depth $D$, there is a $T$-restricted shortcut for $G$ with dilation $d(D)$ and congestion $c(D)$. 
%To argue about dilation for tree-restricted shortcuts, we introduce the concept of a \textit{block parameter}, defined below, that gives an immediate bound on the dilation.

We now define the \textbf{quality} of a tree-restricted shortcut. The terminology of admitting tree-restricted shortcuts carries over to quality.

\BD[Quality]
The \textbf{quality} of a tree-restricted shortcut is the function $q:\N\to\N$ defined by $q(d) = b(d) \cd d + c(d)$.
\ED

We now restate the central theorem from the low-congestion shortcuts framework.

\ShortcutsFramework*

Note that $D$ replaces $d_T$ in the argument to the functions $b$ and $c$. This is because the theorem takes $T$ to be a BFS tree of the network graph, which has diameter at most $D$.

Finally, it is known from~\cite{ghaffari2016distributed,haeupler2016near} that good tree-restricted shortcuts exist in planar graphs and bounded treewidth graphs. We will be using this fact in a later section.

\begin{theorem}[\hspace{1sp}\cite{ghaffari2016distributed}]
  \label{theorem:shortcut-existence-on-bounded-genus}
  The family of planar graphs admits tree-restricted shortcuts with block parameter $O(\log d_T)$ and congestion $O(d_T\log d_T)$.
\end{theorem}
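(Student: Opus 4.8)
The plan is to build the shortcuts by a recursive divide‑and‑conquer on the planar embedding, using the BFS tree $T$ to supply short ``cutting curves''. Fix a planar embedding of $G$ and root $T$ at a vertex $r$; since $T$ has diameter at most $d_T$ its depth is at most $d_T$, and, $T$ being a BFS tree, the levels of the two endpoints of any edge of $G$ differ by at most one. The key primitive is the \emph{fundamental cycle} of a non‑tree edge $e=\{u,v\}$, the simple closed curve consisting of $e$ together with the tree path from $u$ to $v$ through their lowest common ancestor. Two facts about it are used throughout: (i) it has length at most $2d_T+1$, so only $O(d_T)$ of its edges lie on $T$, and those form a single path of $T$; and (ii) by the fundamental‑cycle version of the planar separator theorem (after the usual reduction to the $2$‑connected, triangulated case), for any weighting of the faces one can choose $e$ so that its fundamental cycle is a $\tfrac{2}{3}$‑balanced separator. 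Since every such separator is assembled out of $T$‑edges, any shortcut derived from it is automatically $T$‑restricted.

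I would then run a recursion on pairs $(R,\mathcal P_R)$, where $R$ is the sub‑embedding enclosed by a short (length $O(d_T)$) closed curve of $T$‑edges and $\mathcal P_R$ lists the parts still meeting the interior of $R$, starting from $R=G$. At a node, pick a balanced fundamental‑cycle separator $C$ of $R$; for each part $P\in\mathcal P_R$ with vertices on both sides of $C$, note that some vertex of $P$ must lie on $C$ (a $P$‑path from one side to the other cannot cross the curve $C$ except at a vertex), add to $H_P$ the inclusion‑minimal sub‑path of $C\cap T$ containing all vertices of $P$ on $C$ --- a single tree path of at most $2d_T+1$ edges, hence one new connected block of diameter $O(d_T)$ that reconnects the pieces of $P$ severed by $C$ --- and recurse on the two sides with the parts distributed accordingly. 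The recursion should be balanced in a potential chosen so that its depth is $O(\log d_T)$ rather than the $O(\log n)$ a vertex‑ or part‑count potential would give: the right notion of progress is essentially the range of BFS levels a region spans, which a suitably ``equatorial'' choice of separator roughly halves, so that after $O(\log d_T)$ rounds each region spans $O(1)$ levels and its parts can be finished with trivial local shortcuts. Granting this, a part accrues $O(\log d_T)$ blocks total --- at most one per level, provided the analysis shows a fixed part is split by only $O(1)$ separators at any one level --- which is the block parameter; and along a fixed tree edge $f$ the congestion incurred at one level is $O(d_T)$, since $f$ lies on the chosen separator of only $O(1)$ of the interior‑disjoint regions of that level, and within one such region at most $O(d_T)$ of the pairwise‑disjoint parts can route their arc across $f$ (all their sub‑paths on the length‑$O(d_T)$ cycle contain $f$, yet their endpoints on $C$ are distinct vertices). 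Summing over $O(\log d_T)$ levels gives congestion $O(d_T\log d_T)$; one should also check that the $O(\log d_T)$ blocks handed to a part cover all of its vertices, which is what makes $G[P_i]+H_i$ have diameter $\tilde O(d_T)$.

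The main obstacle is enforcing all of these requirements simultaneously with a single family of separators: the cutting curves must be short (forcing fundamental cycles of the BFS tree), the recursion must be shallow \emph{in $d_T$} (forcing the separator to balance a depth‑like potential, not a size potential), the number of times any part is split must stay $\tilde O(1)$ (bounding the block parameter), and the number of disjoint parts crowding onto any one tree edge at a given level must stay $O(d_T)$ (bounding congestion) --- the last two both resting on the restricted, non‑crossing way a connected part can meet a separating cycle drawn in the plane. Carrying this balancing act out carefully is exactly the content of~\cite{ghaffari2016distributed}, and the argument is extended to bounded‑genus graphs in~\cite{haeupler2016near}.
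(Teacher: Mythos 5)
The paper never proves this statement: it is imported as a black box from Ghaffari--Haeupler~\cite{ghaffari2016distributed} (in the tree-restricted phrasing of~\cite{haeupler2016near}), so there is no in-paper argument to compare yours against, and your proposal has to stand on its own. It does not, and your closing paragraph effectively concedes this by deferring the ``balancing act'' to~\cite{ghaffari2016distributed} --- which is circular, since that is precisely the theorem being proved. The load-bearing steps that are missing are concrete. First, the recursion is not well defined past the top level: a fundamental cycle is taken with respect to the global BFS tree $T$, and the $T$-path between the endpoints of a non-tree edge of a sub-region $R$ need not stay inside $R$ (nor is $T\cap R$ connected), while rebuilding a spanning tree of $R$ to apply the Lipton--Tarjan fundamental-cycle separator would destroy both $T$-restriction and the $O(d_T)$ length bound. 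Second, the depth bound $O(\log d_T)$ rests on the claim that one can always pick a fundamental cycle that roughly halves the range of BFS levels a region spans; fundamental-cycle separators balance a weight on faces or vertices, not a level-span, and no argument is given that such an ``equatorial'' choice exists. Third, the base case is not trivial: a region spanning $O(1)$ BFS levels can still have large induced diameter (an annulus around the root, say) and can contain parts shattered into many pieces, so ``trivial local shortcuts'' there is an unproved step, not a cleanup.

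The two quantitative claims also rest on assumptions you flag but do not discharge. The block bound $O(\log d_T)$ needs each part to be cut by only $O(1)$ separators per level, but after a few levels a single part already lives in many interior-disjoint regions and can be cut once in each of them, so blocks per level can grow with the number of regions the part touches, not $O(1)$. For congestion, the statement that a tree edge $f$ lies on the chosen separator of only $O(1)$ regions of a level ignores that separators may run along tree edges shared by region boundaries (and, by the first gap, along tree paths leaving the region), so the per-level $O(d_T)$ bound is not established either. What you have is a plausible separator-based divide-and-conquer plan --- a genuinely different route from the direct embedding-and-charging construction of~\cite{ghaffari2016distributed} --- but with the depth, block, and congestion invariants all asserted rather than proved, it is not yet a proof of the theorem.
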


\begin{theorem}[\hspace{1sp}\cite{haeupler2016near}]
  \label{theorem:shortcut-existence-on-bounded-treewidth}
  The family of graphs of treewidth at most $k$ admits tree-restricted shortcuts with block parameter $O(k)$ and congestion $O(k\log n)$.
\end{theorem}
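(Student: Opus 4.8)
The plan is to reduce the claim to the combinatorial shortcut problem and then solve it via a hierarchical decomposition of the treewidth-$k$ graph $G$. Fix a spanning tree $T$ of diameter $d_T$ and a family of pairwise-disjoint connected parts $P_1,\ldots,P_N$; we must exhibit a $T$-restricted shortcut with block parameter $O(k)$ and congestion $O(k\log n)$. The main tool will be the standard fact that a graph of treewidth $k$ admits a tree decomposition $(\mathcal T,\{B_x\})$ of width $O(k)$ whose tree $\mathcal T$ is rooted and has depth $L=O(\log n)$ --- equivalently, a recursive balanced-separator hierarchy with separators of size $O(k)$ and depth $O(\log n)$, in which every edge of $G$ joins two vertices whose owning nodes are ancestor-related in the hierarchy.

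First I would record a structural observation: for each part $P_i$ the set $\mathcal T_i=\{x\in\mathcal T : B_x\cap P_i\neq\emptyset\}$ is a connected subtree of $\mathcal T$ with a unique topmost node $x_i$; moreover, because $G[P_i]$ is connected, for every edge $(x,y)$ of $\mathcal T_i$ the separator $B_x\cap B_y$ meets $P_i$. This is exactly what will let us glue the ``fragments'' of $P_i$ living in the children regions together through only $O(k)$ interface vertices.

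Next I would build $H_i$ level by level, from the root of $\mathcal T$ downward. At each level $\ell$, and for each bag $B_x$ with $x\in\mathcal T_i$ at level $\ell$, the $O(k)$ vertices of $B_x\cap P_i$ act as ``ports''; the construction adds to $H_i$ a bounded-size Steiner-type connector inside $T$ that wires the fragment of $G[P_i]$ hanging below $x$ into these ports --- but only for fragments that are not already connected through lower-level work, so that the block parameter stays controlled. Tracking the surviving components, every fragment coalesces into a single ``trunk'' except for at most $O(k)$ leftover pieces created at the top bag $B_{x_i}$, so the block parameter is $O(k)$, and the quotient argument from the preliminaries then gives $\operatorname{diam}(G[P_i]+H_i)=O(k\,d_T)$, consistent with the quality bound $q(d)=O(kd+k\log n)$. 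For congestion, the point is that each bag/separator contains only $O(k)$ vertices and the parts are vertex-disjoint, so at most $O(k)$ parts are ``active'' at any single node of $\mathcal T$; charging each $T$-edge that a part uses at level $\ell$ to such a node yields $O(k)$ congestion per level, and summing over the $L=O(\log n)$ levels gives $O(k\log n)$.

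The step I expect to be the main obstacle is making the level-$\ell$ connectors precise so that this per-level congestion bound genuinely holds. The difficulty is that $T$ is entirely unrelated to $\mathcal T$: a fragment's $O(k)$ ports may be scattered all over $T$, and the naive choice --- taking, for each relevant bag, the minimal subtree of $T$ spanning $B_x\cap P_i$ --- can force a single $T$-edge into $\Omega(n)$ of the $H_i$. (For instance, on a $2\times n$ grid with $T$ a Hamiltonian path and the parts taken to be the $n$ columns, every column's Steiner connector runs through the one edge joining the two ends of the path, even though taking $H_i=\emptyset$ would suffice.) Avoiding this requires adding a connector only when it genuinely shortens $G[P_i]$, and charging the $T$-edges it uses against the vertex-disjoint port sets rather than against $T$ globally; getting this accounting to close simultaneously with the $O(k)$ block bound is the technical heart of the argument, and is carried out in~\cite{haeupler2016near}.
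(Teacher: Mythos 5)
You should first note that the paper does not prove \Cref{theorem:shortcut-existence-on-bounded-treewidth} at all: it is imported as a black box from \cite{haeupler2016near}, and the closest internal analogue is the clique-sum argument of \Cref{lem:CliqueSumBad} and \Cref{thm:CliqueSum}, of which the treewidth statement is essentially the special case where every bag has $O(k)$ vertices. Measured against that argument, your proposal has the right scaffolding (a width-$O(k)$ tree decomposition of depth $O(\log n)$, the observation that adjacent bags of the subtree $\mathcal T_i$ meet $P_i$ in their separator, congestion charged per level using vertex-disjointness of parts, block components controlled by the top bag), but it is not a proof: the definition of your ``Steiner-type connectors'' and the simultaneous congestion/block accounting are exactly the content of the theorem, and you explicitly leave that step to the cited reference. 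As written, the argument establishes nothing beyond the statement itself.

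The gap is avoidable, and the fix is simpler than what you are attempting: no Steiner connectors and no per-level ``coalescing'' are needed. Take a rooted tree decomposition of width $O(k)$ and depth $O(\log n)$, and for each part $P$ let $x_P$ be the topmost bag meeting $P$. Assign to $P$ \emph{all} edges of $T$ that lie inside bags strictly below $x_P$, in those child subtrees whose separator with $x_P$ meets $P$ (this is precisely the global-shortcut construction of \Cref{lem:CliqueSumBad}, with bags playing the role of the graphs in $\mathcal F$ and with empty local shortcuts). A $T$-edge $e$ can then be used only by parts intersecting one of the $O(\log n)$ separators on the root path above the topmost bag containing $e$; each separator has $O(k)$ vertices and parts are vertex-disjoint, so the congestion is $O(k\log n)$. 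Every vertex of $P$ outside $B_{x_P}$ receives its $T$-parent edge, so all block-component roots lie in $B_{x_P}$ and the block parameter is $O(k)$. Your $2\times n$ grid obstruction is an artifact of insisting on minimal subtrees of $T$ spanning the ports of a bag; the correct construction hands each part entire descendant regions of $T$ and charges congestion to ancestor separators rather than to $T$ globally, exactly as in the paper's own clique-sum proof.
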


\section{Shortcuts in Excluded Minor Graphs}

Our main result extends~\Cref{theorem:shortcut-existence-on-bounded-genus} and~\Cref{theorem:shortcut-existence-on-bounded-treewidth} to excluded minor graphs, showing that any family of graphs excluding a fixed minor has good tree-restricted shortcuts. We repeat \Cref{thm:mainMinorShort} with a bit of extra detail.

\begin{restatable}{theorem}{mainMinorLong}[Main Theorem, Extended Version]
  \label{thm:mainMinorLong} The family of graphs excluding a fixed minor $H$ admits tree-restricted shortcuts of quality $q(d) = \tO(d^2)$. More generally, the family admits block parameter $b(d) = O(d)$ and congestion $c(d) = O(d \log n + \log^2 n)$. The constants in the big-$O$ depend only on the minor $H$. 
\end{restatable}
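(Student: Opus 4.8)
The plan is a constructive induction that tracks the Robertson--Seymour decomposition of \Cref{thm:GraphStructure}. Fix the excluded minor $H$, put $k = k(H)$, and build good tree-restricted shortcuts for each ingredient of $\m L_k$ in turn --- surface-embedded graphs, then vortices, then apices, then $k$-clique-sums --- with every hidden constant depending only on $k$. At each stage we must be robust to the input tree, since \Cref{thm:ShortcutsFramework} ultimately supplies a BFS tree: we are handed an arbitrary spanning tree $T$ of the host graph, of diameter $d := d_T$, and an arbitrary family of connected parts, and must output a $T$-restricted shortcut of block parameter $O(d)$ and congestion $O(d\log n + \log^2 n)$. The base case is the planar shortcut result \Cref{theorem:shortcut-existence-on-bounded-genus}. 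For genus $g \le k$, a surface-cutting argument along $O(g)$ noncontractible curves opens the surface to a disk, reducing to the planar case at the cost of $O(g)$ additional block components and $O(g \cd d)$ additional congestion. For vortices, note that a depth-$k$ vortex on a face cycle $C$ (\Cref{defn:Vortex}), together with $C$, carries a path decomposition of width $O(k)$ following the cyclic order of $C$; hence a bounded-genus graph with $\le \ell$ vortices splits into a bounded-genus core and $O(\ell)$ pieces of treewidth $O(k)$, glued along interfaces of size $O(k)$. We handle the core as above, each vortex piece by the bounded-treewidth result \Cref{theorem:shortcut-existence-on-bounded-treewidth} (costing $O(k)$ blocks and $O(k\log n)$ congestion each), and stitch them along the small interfaces as in the clique-sum step below. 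This disposes of every almost-embeddable graph with no apices.

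The addition of apices is the main obstacle, and the technical heart of the proof. Adding $q \le k$ apices to a graph $G''$, connected arbitrarily to $V(G'')$ and to each other, can shrink the diameter arbitrarily --- possibly to $2$ --- so the shortcut built for $G''$, whose quality scales with the potentially huge diameter of $G''$, is useless for $G = G'' + (\text{apices})$; the shortcut must be \emph{rebuilt from scratch} rather than patched. The core difficulty is that deleting an apex $a$ from $T$ can break $T$ into arbitrarily many subtrees, and a single part may intersect many of them, so there is no bounded collection of tree paths one can simply absorb. The approach is to re-root the construction at the apices: run a BFS in $G$ from the set of all $q$ apices simultaneously, obtaining layers $L_0 \supseteq (\text{apices}), L_1, L_2, \ldots$ of depth $O(d)$, and reroute every part's shortcut through these layers, exploiting that each apex now lies within distance $O(d)$ of all of $V(G)$ and that the tree edges incident to the $q$ apices form $O(q)$ stars. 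Bounding the number of block components that survive this surgery, while simultaneously keeping the congestion on any single tree edge controlled, needs a delicate charging argument that accounts separately for the interaction of each part with each of the $O(k)$ apices and with each of the $O(d)$ BFS layers; it is exactly here that the block parameter degrades from the usual $\tO(1)$ to $O(d)$. The $\le q$ parts that actually contain an apex are few and are dealt with directly.

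It remains to combine shortcuts across $k$-clique-sums. By \Cref{fact:CliqueSumDecomposition} the graph has a $k$-clique-sum decomposition tree $\dectree$ as in \Cref{defn:CliqueSumDecomposition}, whose bags are the almost-embeddable pieces for which we have just built shortcuts and whose edges carry partial $k$-cliques $C_f$ of size $\le k$. Restricting $T$ to a bag yields a spanning forest; we reconnect it, apply the bag-level shortcut, and for each decomposition edge $f$ on the subtree of $\dectree$ spanned by a given part add the $\le k$ tree paths linking the vertices of $C_f$. The subtlety is that a block component may thread through many bags yet must be counted only once, so a naive bag-by-bag accounting would let the block parameter grow with $|\dectree|$; to avoid this we process $\dectree$ by a balanced (centroid) recursion of depth $O(\log n)$, promoting the size-$\le k$ interfaces at each level. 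This keeps the block parameter at $O(d)$ while adding $O(k\log n)$ congestion per level, hence an $O(\log^2 n)$ overall term. Combining the four steps gives $b(d) = O(d)$ and $c(d) = O(d\log n + \log^2 n)$, hence quality $q(d) = b(d)\cd d + c(d) = \tO(d^2)$; the Graph Structure Theorem (\Cref{thm:GraphStructure}) then transfers the conclusion from $\m L_k$ to all $H$-free graphs.
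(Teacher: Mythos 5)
Your high-level roadmap matches the paper's (handle the almost-embeddable pieces, then combine across $k$-clique-sums via a depth-reduction of the decomposition tree, then invoke \Cref{thm:GraphStructure}), but two of the steps that carry the real weight have genuine gaps. First, the vortex step is wrong as stated: a depth-$k$ vortex does have a width-$O(k)$ path decomposition along its boundary cycle $C$, but it is glued to the surface part along \emph{all of} $C$, which has unbounded length --- there is no ``interface of size $O(k)$'', so you cannot stitch the vortex pieces to the core ``as in the clique-sum step'' (\Cref{thm:CliqueSum} needs bounded-size partial cliques). The paper instead absorbs the vortex vertices into a tree decomposition of the bounded-genus part, proving that a diameter-$d$ genus-$g$ graph with $\ell$ depth-$k$ vortices has treewidth $O((g{+}1)k\ell d)$ (\Cref{lemma:GenusVortex}, \Cref{lemma:VortexTreewidth}) and then applying \Cref{theorem:shortcut-existence-on-bounded-treewidth}; note this is also where the $O(d)$ block parameter actually originates, not in the apex step as your accounting claims.

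Second, the apex step --- which you correctly identify as the heart of the matter --- is not actually proved: you posit a multi-source BFS layering from the apices and a ``delicate charging argument'' over parts, apices and layers, but you never supply the mechanism that simultaneously bounds blocks and congestion. This cannot be a purely generic charging argument: as the paper notes, there are graphs admitting good shortcuts for which adding a single apex destroys them, so any correct argument must exploit the topology of the non-apex part. The paper's mechanism is the cell-partition / $\beta$-cell-assignment / $s$-combinatorial-gate machinery (\Cref{def:cellsparse}, \Cref{defn:CombGate}, \Cref{lemma:SufficientCondition}, \Cref{lemma:combinatorialToCellsparse}): the cells are the subtrees of $T$ after deleting the apex, an $O((g{+}1)kd)$-combinatorial gate is constructed using planarity (extremal inter-cell edges and a laminar family of cycles, \Cref{lemma:PlanarCombinatorial}, extended to genus via a planarization/cutting lemma in the appendix), and this yields a relation in which each part misses at most two of its cells and each cell serves at most $O(d)$ parts; global shortcuts (whole cells plus uplinks) and local treewidth-based shortcuts inside the at most two missed cells then give the bound (\Cref{lemma:PlanarShortcut}, \Cref{lemma:GenusVortexShortcut}). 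Your proposal contains no substitute for this assignment step, so the apex case is an unfilled gap rather than a different proof. (Your clique-sum paragraph is closest to the paper, which uses heavy-light decomposition plus chain folding rather than a centroid recursion and must handle the resulting ``double edges''; that part looks repairable, but the vortex and apex steps as written do not go through.)
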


Using the shortcuts framework of~\Cref{thm:ShortcutsFramework}, the above theorem translates to the algorithmic result of~\Cref{thm:mainMinorAlgo}.

\subsection{Two Parts of the Proof}

Recall that the Graph Structure Theorem says that any excluded minor graph can be represented as a $k$-clique-sum of $k$-almost-embeddable graphs, for some constant $k$ depending on the excluded minor. As with most results utilizing the Graph Structure Theorem, our proof is split into two parts, one handling the $k$-clique-sums and one for the $k$-almost-embeddable graphs.

Our proof has two main components, namely, \Cref{thm:CliqueSum} and \Cref{thm:almostEmbeddable} that we state below. It should be clear that they are sufficient to prove the main technical result, \Cref{thm:mainMinorLong}.

\hdr{Clique Sums Part}
In the $k$-clique-sums part, we show that if a family of graphs admits shortcuts with good quality, then so does any $k$-clique-sum of graphs from this family, for any constant $k$. In other words, having good tree-restricted shortcuts is a property robust under taking $k$-clique-sums for a fixed integer $k$. The theorem below is proved in \Cref{section:CliqueSum}.

\begin{restatable}{theorem}{CliqueSum}[Shortcuts in Clique Sums] \label{thm:CliqueSum}
Let $\m F$ be a family of graphs that admits tree-restricted shortcuts with block parameter $b_{\m F}$ and congestion $c_{\m F}$. Let $G$ be a $k$-clique-sum of graphs in $\m F$. Then $G$ admits tree-restricted shortcuts with block parameter $b_G(d) \le 2k + O(b_{\m F}(d_T))$ and congestion $c_G(d) \le O(k \log^2 n) + c_{\m F}(d_T)$.
\end{restatable}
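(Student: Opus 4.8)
The plan is to use the $k$-clique-sum decomposition tree $\dectree$ guaranteed by Fact~\ref{fact:CliqueSumDecomposition}, whose bags $B_1,\dots,B_\ell$ all lie in $\m F$, and to build the shortcut for $G$ by combining shortcuts obtained inside each bag. Fix a spanning tree $T$ of $G$ with diameter $d_T$, and a family of parts $\m P = (P_1,\dots,P_m)$. The first step is to localize: for each bag $B_i$, I would intersect $T$ and the parts with $B_i$ to obtain a forest $T \cap B_i$ inside $B_i$ and "restricted parts" $P_j \cap V(B_i)$. Two technical nuisances arise immediately — (a) $T \cap B_i$ is a forest, not a spanning tree of $B_i$, and (b) the restricted parts $P_j \cap V(B_i)$ need not be connected in $B_i$. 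To fix (a), add to each component of $T\cap B_i$ a minimal set of edges of $B_i$ to make it a spanning tree of $B_i$ (or attach a BFS-like structure); the new tree still has diameter $O(d_T)$ up to the number of components, which is controlled because each component meets the partial clique on at most one incident edge of $\dectree$. To fix (b), further subdivide each $P_j\cap V(B_i)$ into its connected components in $B_i$; this only increases the number of parts, which is fine since the block parameter is measured per part.

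Now apply the hypothesis that $\m F$ admits tree-restricted shortcuts with parameters $b_{\m F}, c_{\m F}$ to each bag $B_i$ with its tree and restricted parts, obtaining within each bag a shortcut of block parameter $b_{\m F}(O(d_T))$ and congestion $c_{\m F}(O(d_T))$. The main step is then to stitch these per-bag shortcuts into a single $T$-restricted shortcut for $G$: a part $P_j$ spans a connected subtree of bags in $\dectree$ (by property 4 of Definition~\ref{defn:CliqueSumDecomposition}, intersected with the connectivity of $P_j$), and I would take the union of the bag-level shortcut edges over all bags the part touches, plus $O(\log n)$ extra tree edges per relevant partial clique to "route across" the clique-sum boundary — here I expect to need a low-depth tree / centroid-style argument on $\dectree$ to keep the number of boundary crossings and hence the congestion contribution to $O(k\log^2 n)$ (one $\log n$ from a divide-and-conquer over $\dectree$, one $\log n$ from the diameter-$d_T$ tree paths, times the clique size $k$). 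The block parameter bookkeeping: each part picks up at most $b_{\m F}(d_T)$ blocks from each of its bags, but these blocks glue together across shared partial $k$-cliques, so after accounting for the $\le k$ shared vertices per boundary and the fact that a part occupies a connected region of $\dectree$, the total is $2k + O(b_{\m F}(d_T))$ — roughly $O(b_{\m F}(d_T))$ genuine blocks plus up to $2k$ "stub" blocks created at the clique boundaries where the tree $T$ may enter and leave a bag.

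The hard part will be the boundary-crossing / stitching argument, i.e., showing that a part can be made connected (small block parameter) across many bags without blowing up congestion beyond $O(k\log^2 n)$. The naive approach — connect each part to a designated representative in each partial clique it crosses — costs $\Theta(k \cdot d_T)$ per crossing and $\Theta(\ell)$ crossings, far too much. The fix I would pursue is a recursive decomposition of $\dectree$: repeatedly split $\dectree$ at a centroid bag (or a centroid partial clique), recurse on the pieces, and at each of the $O(\log \ell) = O(\log n)$ levels pay only $O(k)$ congestion per part for connecting through the splitting clique, using that a fixed tree edge of $T$ lies in only $O(\log n)$ of the separating cliques it could be charged to and that distinct parts using the same partial clique as a "highway" can share its $O(k)$ vertices' tree-paths with only $O(\log n)$ multiplicative overhead. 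Combining, congestion is $c_{\m F}(d_T) + O(k\log^2 n)$ as claimed. I would also need to verify the output shortcut is genuinely $T$-restricted — this is where the earlier fix (a) must be done carefully, adding only edges of $B_i$ that we may assume, after a preprocessing step, lie on $T$, or else folding the non-tree completion edges into the congestion budget.
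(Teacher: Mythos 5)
There is a genuine gap at the step your whole construction hinges on: making the per-bag shortcut both usable (tree of diameter $\le d_T$ inside the bag) and $T$-restricted. Your fix (a) completes the forest $T\cap B_i$ with arbitrary edges of $B_i$; those completion edges are not in $T$, and neither of your proposed remedies works -- you cannot assume they lie on $T$, and "folding them into the congestion budget" is not an option because $T$-restriction is a categorical requirement ($H_i\subseteq E_T$), not a congestion matter. The paper's missing idea here is to repair the forest by \emph{path contraction along $T$}: for $s,t\in V(B_h)$ take the $T$-path between them and delete the vertices outside $B_h$, so the repaired tree $T_h^2$ is a minor of $T$ (hence connected, diameter $\le d_T$, so one really gets $b_{\mathcal F}(d_T),c_{\mathcal F}(d_T)$ rather than $b_{\mathcal F}(O(d_T))$ on an uncontrolled completion), build the family shortcut on the completed bag $B_h^0$ with respect to $T_h^2$, and then \emph{discard} all edges of $T_h^2\setminus T$. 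The discarded connectivity is recovered because each part is handled locally in exactly \emph{one} bag -- the lowest common ancestor $h_P$ of the bags it meets -- and the global shortcut simply hands the part all $T$-edges lying in bags strictly below $h_P$; the contracted paths run through exactly those descendant bags. Your scheme instead applies the hypothesis in \emph{every} bag a part touches and unions the results, which leaves no mechanism to discard non-$T$ edges without shattering blocks, and also overcounts local congestion on clique edges shared by many bags (the paper avoids this by assigning each edge locally only at its own LCA bag $h_e$).

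The second gap is the stitching/congestion argument, which is asserted rather than proved. In the paper the $O(k\log^2 n)$ term does not come from routing parts to representatives through centroid cliques; it comes from the blanket global assignment whose per-edge congestion is $k$ times the depth of the decomposition tree (only parts whose LCA bag is an ancestor, at most $k$ per level, can claim the edge), combined with compressing $\dectree$ to depth $O(\log^2 n)$ via heavy-light decomposition and "folding" each chain into a balanced binary tree. That compression is also where your $2k$ and the "$O(1)$ stub blocks" actually come from: folded chains create \emph{double edges} (unions of two partial $k$-cliques), which double the clique contribution to $2k$ and can disconnect a part inside a bag into $O(1)$ components, each treated as a separate subpart (hence $2k+O(b_{\mathcal F}(d_T))$). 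Your centroid proposal, as stated, does not specify which $T$-edges a part receives to cross a splitting clique, and the claims that this costs only $O(k)$ congestion per level and that "a fixed tree edge lies in only $O(\log n)$ of the separating cliques it could be charged to" are exactly the assertions that need proof; without an LCA-style counting argument or the depth-compression machinery they do not follow. So while your high-level plan (decomposition tree, per-bag use of the hypothesis, a logarithmic-depth device for the $\log^2 n$) points in the right direction, the two central mechanisms -- the minor-of-$T$ repair with discard-and-recover via a single LCA bag per part, and the chain-folding depth reduction with its double-edge bookkeeping -- are missing, and the proof does not go through as proposed.
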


To prove the full result, we use \Cref{thm:CliqueSum} with $\m F$ as the family of $k$-almost-embeddable graphs, which we show admits tree-restricted shortcuts with block parameter and congestion $\tO(d)$. Plugging in these parameters, we obtain $b_G(d)= 2k + \tilde{O}(d)$ and $c_G(d) = O(k \log^2 n) + \tilde{O}(d)$ for the final result, which are both $\tilde{O}(d)$ since $k$ is a constant. Note that \Cref{thm:CliqueSum} does not assume that $\m F$ is any particular family, so it may be of independent interest.

\hdr{Almost Embeddable Part} The second part of the proof establishes good quality shortcuts for $k$-almost-embeddable graphs, namely the theorem below, proved in \Cref{section:AlmostEmbeddable}.

\shortOnly{\vspace{-2mm}}

\begin{restatable}{theorem}{BuildingBlocks}[Shortcuts in Almost Embeddable Graphs] \label{thm:almostEmbeddable}
An $(q,g,k,\ell)$-almost-embeddable graph $G$ admits tree-restricted shortcuts with block parameter $b(d)=O(q + (g+1)k\ell^2d)$ and congestion $c(d)=O(q + k\ell^2d(g+\log n))$.
\end{restatable}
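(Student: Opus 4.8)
The plan is to construct the shortcut for $G$ by \emph{peeling off} the three steps in the definition of an almost-embeddable graph in reverse order, reducing in turn to a shortcut-construction problem on a genuinely planar (or bounded-treewidth) graph, where \Cref{theorem:shortcut-existence-on-bounded-genus} and \Cref{theorem:shortcut-existence-on-bounded-treewidth} apply as black boxes. Concretely I would prove three ``robustness'' lemmas, in the same spirit as \Cref{thm:CliqueSum}: (1) if a bounded-genus embedded graph admits good tree-restricted shortcuts then so does the graph obtained after adding $\le\ell$ vortices of depth $\le k$; (2) if a graph admits good shortcuts then so does the graph obtained after adding $q$ apices; and (0), the base case, that a genus-$g$ embedded graph admits shortcuts by planarization. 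Composing (0), (1), (2) against the planar bound and tracking how $g$, the $\ell$ vortices, and the $q$ apices each enter should yield the stated $b(d)=O(q+(g+1)k\ell^2 d)$ and $c(d)=O(q+k\ell^2 d(g+\log n))$; the product-type dependence reflects that each cut needed to isolate one ingredient may have to interact with all the others, which is also the source of the $\ell^2$.

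\textbf{Genus base.} Given the adversarial spanning tree $T$ of the embedded graph $G'$, I would take a tree--cotree decomposition: $T$ plus a spanning tree of the dual leaves $O(g)$ surplus edges, and cutting the surface along the fundamental cycles of those surplus edges makes it a sphere, i.e.\ turns $G'$ into a planar graph $\hat G'$ in which $O(g)$ tree-paths of $T$ have been duplicated. Run the planar construction on $\hat G'$ with the corresponding parts, then pull the shortcut back and re-identify the duplicated vertices; each of the $O(g)$ cut curves costs one extra block component per part, and the $O(g)$ duplicated tree-paths get their own low-congestion treatment (they form a bounded-width gadget, so \Cref{theorem:shortcut-existence-on-bounded-treewidth} applies, or one routes every part through them and charges the congestion to the $O(g)$ paths).

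\textbf{Vortices.} The structural fact to exploit is that a depth-$k$ vortex on a face-cycle $C$ admits a path decomposition of width $O(k)$ following $C$ (its arcs overlap in $\le k$ layers). I would first cut the surface so that all $\le\ell$ vortex faces become incident to a single common face --- $O(\ell)$ cuts, each possibly interacting with all $\le\ell$ vortices --- reducing to a planar graph carrying every vortex on one boundary walk. The boundary together with all internal vortex nodes then forms a subgraph of bounded ``width'' $O(k\ell)$ handled by \Cref{theorem:shortcut-existence-on-bounded-treewidth}; combining its shortcut with the planar shortcut of the rest and re-identifying cut vertices produces the $k\ell^2$ factor, and carefully bounding the number of tree-paths each part must additionally carry is where the linear-in-$d$ block term appears (improving it to $\tilde O(1)$, hence the quality to $\tilde O(d)$, is the paper's open problem).

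\textbf{Apices --- the main obstacle.} As flagged in the overview, an apex adjacent to all of $V(G'')$ collapses $\diam(G)$ to $O(1)$, so the adversarial tree $T$ of $G$ need not resemble any spanning tree of $G''=G-\{\text{apices}\}$, and in general $G''$ has \emph{no} low-diameter spanning tree (take $G''$ a long path); hence one cannot simply recurse on a tree of $G''$. The plan is to root $T$ at the apices and use the $q$ apices with their incident tree edges as a backbone: deleting the apices can shatter a part $P_i$ into many pieces, and the only $T$-edges available to reconnect them are the $\le q$ apex-incident edges met along each piece's root-path, so every part is allowed $O(q)$ additional block components and its reconnection is routed through the backbone, while the congestion on apex-incident edges stays $O(q)$ since each non-apex vertex is the $T$-child of at most one apex. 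The delicate parts, which I expect to dominate the proof, are: doing the accounting for all $q$ apices at once (iterating one apex at a time compounds badly); certifying that the resulting shortcut for the $G''$-portion is valid despite its natural ``witness tree'' having large diameter --- resolved by noting that the backbone apex-edges keep the effective diameter inside every block at $O(d_T)$; and making this coexist with the genus/vortex machinery, which is where the ``various interactions'' of the overview make the patching heavy.
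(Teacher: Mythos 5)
Your apex step has a genuine gap, and it sits exactly where the paper's main technical work lies. First, the premise is off: a part $P$ not containing an apex satisfies $G[P]=G''[P]$, so it does \emph{not} shatter when the apices are deleted --- the at most $q$ parts that do contain an apex are handled trivially (give each the whole tree $T$, costing $+q$ in congestion). The real difficulty is different: the adversarial tree $T$ may have tiny diameter only because it routes through the apices, so $T$ minus the apices splits into many subtrees (``cells'') of diameter at most $d_T$, and a part that is connected in $G''$ can weave through arbitrarily many of these cells. To keep its block count small you must hand it the subtree edges of essentially every cell it meets, and then the quantity to control is the congestion on those subtree edges --- how many parts are assigned a given cell --- not the congestion on the $O(q)$ apex-incident backbone edges, and not an ``$O(q)$ extra blocks per part'' budget (a part meeting $m$ cells would otherwise have $\Theta(m)$ blocks, with $m$ possibly $\Theta(n)$). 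Your outline contains no mechanism bounding this cell-to-part load; the wheel-graph intuition does not supply one.

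The paper's resolution is a counting argument exploiting the planar/genus structure of the non-apex part: it introduces $s$-combinatorial gates (fences of total size $O((g+1)kd)\,|\m C|$ covering all inter-cell edges, using that the cell-adjacency graph is planar and hence has $O(|\m C|)$ edges) and shows in \Cref{lemma:SufficientCondition}--\Cref{lemma:planar-apex-relation} that one can always find either a part meeting at most $2$ cells or a cell meeting at most $2s$ parts; iterating (with minor-closedness, and with vortex-touching cells treated as ``special'') yields a relation in which each cell serves at most $\beta=O((g+1)k\ell^2 d_T)$ parts while each part misses at most two of its cells, and those exceptional cells are covered by \emph{local} shortcuts on contracted cell graphs via the treewidth bound of \Cref{lemma:VortexTreewidth} and \Cref{theorem:shortcut-existence-on-bounded-treewidth} (\Cref{lemma:PlanarShortcut}, \Cref{lemma:GenusVortexShortcut}); multiple apices are then reduced to one by merging them. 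This assignment argument --- not a backbone through the apices --- is what makes the congestion $O(q+k\ell^2 d(g+\log n))$ provable, and it is the missing ingredient. (Your genus and vortex steps also diverge from the paper, which for the non-apex part simply proves treewidth $O((g+1)k\ell D)$ for Genus+Vortex graphs and invokes the treewidth shortcut theorem, using planarization only in the appendix to establish the gate property; those steps are plausibly repairable, but the apex step as written does not go through.)
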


\shortOnly{\vspace{-3mm}}
\hdr{Putting Them Together}
The main theorem, restated below, follows immediately from \Cref{thm:CliqueSum} and \Cref{thm:almostEmbeddable}.

\shortOnly{\vspace{-2mm}}
\mainMinorLong*
\fullOnly{
\BP
By \Cref{thm:GraphStructure}, there is a constant $k$ such that the family of $H$-free graphs is contained in $\m L_k$, so it suffices to prove the claim for $\m L_k$.
Let $\m F$ be the family of $k$-almost-embeddable graphs. By \Cref{thm:almostEmbeddable}, $\m F$ admits tree-restricted shortcuts with block parameter $b_{\m F}(d)=O(d)$ and congestion $c_{\m F}(d)=O(d \logn)$. Plugging in $\m F$, $b_{\m F}$, and $c_{\m F}$ into \Cref{thm:CliqueSum}, we conclude that $\m L_k$ admits tree-restricted shortcuts with block parameter $O(d)$ and congestion $O(d\log n + \log^2 n)$, as desired.
\EP
}

\subsection{Shortcuts in Clique Sum Graphs}\label{section:CliqueSum}

In this section, we prove \Cref{thm:CliqueSum}%. Informally, it states that if a family of graphs that admits good shortcuts is $k$-clique-summed together using a small $k$, the resulting graph will also admit good shortcuts.
, restated below. \shortOnly{Due to space constraints, it appears in the full version of the paper, attached at the end of this extended abstract.} The proof is completely self-contained and its inner working will not be reused in following sections. 

\CliqueSum*

\fullOnly{
\hdr{Local and Global Shortcuts}
The intuition behind our construction is as follows. Let $G$ be a $k$-clique-sum of graphs in $\m F$, and consider a $k$-clique-sum decomposition tree $\dectree$ of $G$. Its existence is guaranteed by Fact \ref{fact:CliqueSumDecomposition}. Consider a part $P\s V(G)$, which could either span much of a single bag in $\dectree$, or traverse through multiple bags, or both. As a result, we construct two types of shortcuts---\textbf{local} shortcuts and \textbf{global} shortcuts---to handle each case separately. At a high level, local shortcuts, which are constrained within a single bag, are meant to deal with parts that behave wildly within a bag, while global shortcuts, which can span multiple bags, treat parts that stretch across many different bags. In particular, for each part $P$, we specify one bag on which we construct local shortcuts for $P$, and let global shortcuts handle the rest. The shortcut for $P$ is simply the union of the local and global shortcuts.

Root $\dectree$ at an arbitrary bag, and define $d_{\dectree}$ to be the depth of the rooted tree $\dectree$.
 %In the rooted setting, define the set $\desc(i)\s V(\dectree)$ for $i\in V(\dectree)$ to be $i$ along with all of its descendants in $\dectree$.
We first prove a weaker result whose global shortcut depends on the value of $d_{\dectree}$ in its congestion, then later show how to ``compress'' $\dectree$ to a low depth independent of $d_{\dectree}$, thereby removing the dependence of $d_{\dectree}$.

\begin{figure}
  \centering
  \includegraphics[width=0.55\textwidth]{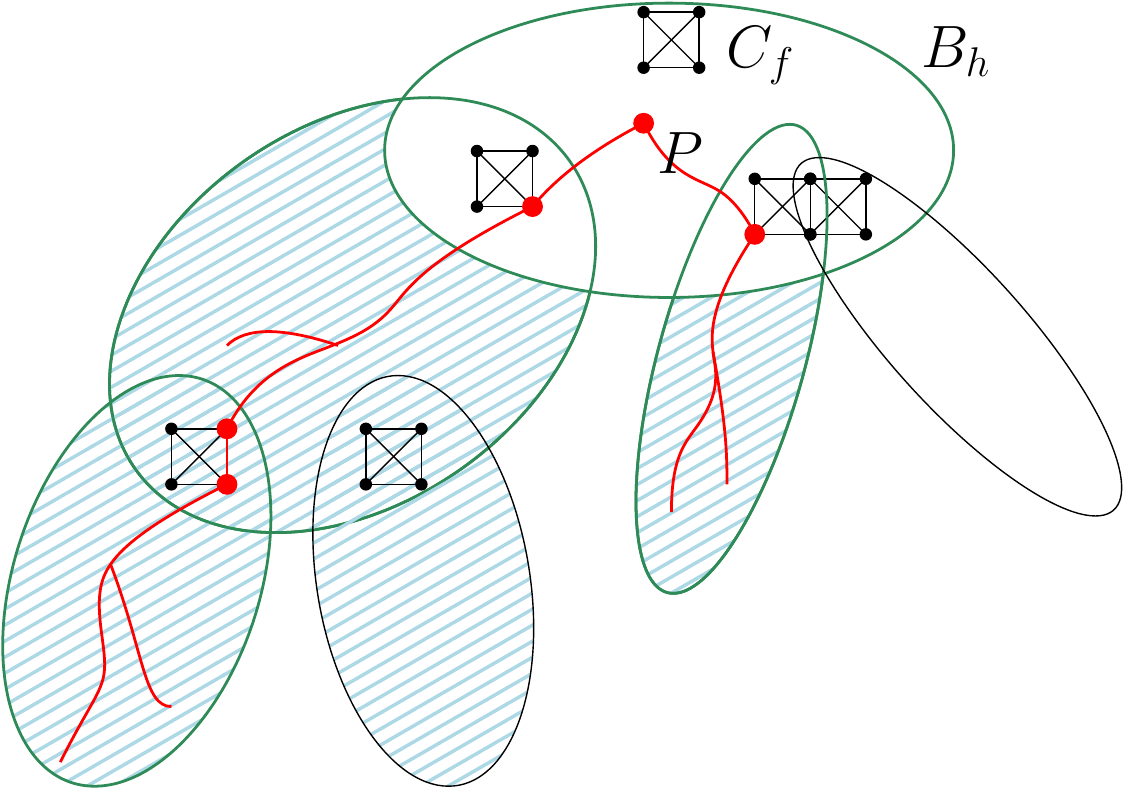}
  \caption{Global shortcut construction. The part $P$ is shown in red. The global $T$-restricted shortcut is the intersection of $T$ (not shown) with the shaded region. $C_f$ denotes the partial clique leading to the parent of $h$, which is not used in the global shortcut.}
  \label{figure:clique-sum-shortcuts}
\end{figure}

%Below, we establish shortcuts of block parameter $ b(kD_G)+O(k) $ and congestion $O( k^2  d_{\dectree} ) \cd c(kD_G)$, and later, we show how to improve the congestion to $O( k^2 \log^2n) \cd c(kD_G)$ at a slight additional cost to block parameter.
% UNNECESSARY NOW! We also make a mild assumption on the bag family $\m F$ (which holds in our application, when $\m F$ is the family of bounded genus+vortex+apex graphs).

\begin{lemma} \label{lem:CliqueSumBad}
  Let $\m F$ be a family of graphs that admits tree-restricted shortcuts with block parameter $b_{\m F}$ and congestion $c_{\m F}$. Let $G$ be a $k$-clique-sum of graphs in $\m F$ with decomposition tree $\dectree$. Then $G$ admits tree-restricted shortcuts with block parameter $b_G(d_T) \le k + b_{\m F}(d_T)$ and congestion $c_G(d_T) \le k \textcolor{blue}{d_{\dectree}} + c_{\m F}(d_T)$. (Note the dependence on \textcolor{blue}{$d_{\dectree}$}, the depth of the decomposition tree $\dectree$, which is unrelated to $d_T$, the diameter of the spanning tree $T$.)
\end{lemma}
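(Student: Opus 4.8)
The plan is to build, for an arbitrary spanning tree $T$ of $G$ with diameter at most $d_T$ and an arbitrary family of parts $\m P$, a $T$-restricted shortcut by splitting the shortcut of each part $P$ into a \emph{local} piece that lives in one bag and a \emph{global} piece that follows the decomposition tree. Fix, via \Cref{fact:CliqueSumDecomposition}, a $k$-clique-sum decomposition tree $\dectree$ of $G$ whose bags $B_i$ all lie in $\m F$, root it arbitrarily, and let $d_\dectree$ be its depth. The first ingredient is a structural fact: for every part $P$ the set $T_P := \{\, i \in V(\dectree) : V(B_i)\cap P \neq \emptyset \,\}$ induces a connected subtree of $\dectree$. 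Indeed, if $i,j \in T_P$ and $\ell$ lies on the $\dectree$-path between them, the partial clique $C_f$ on the $\dectree$-edge $f$ leaving $\ell$ toward $i$ separates the $i$-side of $G$ from the $j$-side (properties (iii)--(v) of \Cref{defn:CliqueSumDecomposition}), so the connected set $P$ must contain a vertex of $C_f \subseteq V(B_\ell)$, whence $\ell \in T_P$. Let $h(P)$ be the bag of $T_P$ closest to the root, and let $C^{\mathrm{top}}_P$ be the partial clique on the $\dectree$-edge from $h(P)$ to its parent (empty if $h(P)$ is the root).

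For the \textbf{local} shortcuts, fix a bag $B$ and consider all parts $P$ with $h(P)=B$; their traces $\tilde P := P \cap V(B)$ are pairwise disjoint. I would first verify that each $\tilde P$ is connected \emph{in the bag graph $B$} (not merely in $G[V(B)]$): given $u,u'\in\tilde P$, take a $u$--$u'$ walk in the connected graph $G[P]$ and replace each maximal sub-walk that leaves $V(B)$; such a sub-walk departs and returns at two vertices of a single partial clique hanging off $B$, and although the edge between them may have been deleted when forming $G$, it is present in $B$ (a clique in every incident bag), so the rerouted walk stays inside $B$. Hence I invoke the hypothesis "$\m F$ admits tree-restricted shortcuts with parameters $b_\m F,c_\m F$" on $B$, taking the spanning tree of $B$ to refine the restriction $T\cap V(B)$ (whose components are subtrees of $T$, hence of diameter at most $d_T$). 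This produces, for each such $P$, a set $H^{\mathrm{loc}}_P\subseteq E(T)$ grouping $\tilde P$ into at most $b_\m F(d_T)$ blocks, with total edge-congestion at most $c_\m F(d_T)$ inside $B$; across all bags these contribute congestion $\le c_\m F(d_T)$ globally since distinct bags overlap only in partial cliques.

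For the \textbf{global} shortcuts, set $R_P := \big(\bigcup_{i\in T_P}V(B_i)\big)\setminus C^{\mathrm{top}}_P$ and let $H^{\mathrm{glob}}_P$ be the edges of $T$ with both endpoints in $R_P$; the shortcut of $P$ is $H_P := H^{\mathrm{loc}}_P\cup H^{\mathrm{glob}}_P\subseteq E(T)$, which is $T$-restricted. For congestion, fix $e\in E(T)$ and let $B_{i^\star}$ be the topmost bag of $\dectree$ containing both endpoints of $e$ (unique by property (iv)). If $e\in H^{\mathrm{glob}}_P$ then $i^\star\in T_P$, and since both endpoints of $e$ avoid $C^{\mathrm{top}}_P$, $i^\star$ must be a \emph{strict} descendant of $h(P)$; therefore $P$ meets the partial clique on the $\dectree$-edge from $h(P)$ to its child $c$ on the way to $i^\star$ (that clique is the only gateway between the subtree below $c$ and the rest of $P$). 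Charging $P$ to the pair $(c,v)$ with $v\in P$ on that clique is injective over all parts using $e$: distinct parts are vertex-disjoint, and two parts with different top bags $h(P)$ yield different child bags $c$ (both $h(P)$'s lie on the root-to-$i^\star$ path, hence are comparable). There are at most $d_\dectree$ candidate bags $c$ (all ancestors-or-equal of $i^\star$, none the root), each carrying a partial clique of $\le k$ vertices, so at most $k\,d_\dectree$ parts use $e$ globally, giving $c_G(d_T)\le k\,d_\dectree + c_\m F(d_T)$. For the block parameter, the local shortcut already groups $\tilde P$ into $\le b_\m F(d_T)$ blocks; every vertex of $P$ strictly below $h(P)$ sits inside a chain of bags of $T_P$ all contained in $R_P$, so the $T$-edges of $H^{\mathrm{glob}}_P$ in those bags attach it to one of the existing blocks, and the only $P$-vertices that can escape are the $\le|C^{\mathrm{top}}_P|\le k$ excised ones, contributing at most $k$ fresh blocks; hence $b_G(d_T)\le k + b_\m F(d_T)$.

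The step I expect to be the main obstacle is the local shortcut: one must pin down the correct bag-local notion of a part and certify it is connected inside the bag graph despite clique-sum edge deletions, while at the same time feeding the $\m F$-guarantee a genuine low-diameter spanning tree of the bag and still keeping the resulting shortcut edges inside the global tree $T$ --- this is precisely why the bag's tree is required to refine $T\cap V(B)$, and making this (and the non-interference of the local and global pieces) fully rigorous is where the real care lies. By contrast, once the subtree lemma is in hand the global analysis --- the injective charging bounding congestion by $k\,d_\dectree$ and the block count --- is essentially routine; the subsequent "compression of $\dectree$" will then replace $d_\dectree$ by $O(\log n)$ to yield the clean statement of \Cref{thm:CliqueSum}, but that is outside the scope of this lemma.
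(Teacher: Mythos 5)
Your overall architecture (local shortcuts in the lowest-common-ancestor bag $h(P)$, global shortcuts below it, charging global congestion to the $\le k$ vertices of partial cliques along a root path of $\dectree$) is the same as the paper's, but two steps are genuinely broken or missing. First, your global shortcut is defined as all $T$-edges with both endpoints in $R_P=\bigl(\bigcup_{i\in T_P}V(B_i)\bigr)\setminus C^{\mathrm{top}}_P$, which \emph{includes} $T$-edges lying inside the top bag $B_{h(P)}$ itself. Consequently your key claim that the topmost bag $i^\star$ containing both endpoints of a globally assigned edge is a \emph{strict} descendant of $h(P)$ is false ($i^\star=h(P)$ is possible), and the congestion bound collapses: take many disjoint parts (even singletons) whose bag-support is a single bag $h$; every one of them has $h(P)=h$ and receives every $T$-edge of $B_h$ avoiding the top clique, so the congestion on such an edge is the number of these parts, not $k\,d_{\dectree}$. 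The paper avoids exactly this by excluding $E(B_{h_P})$ from the global shortcut and letting the local shortcut alone serve the top bag; with your definition the construction itself, not just the proof, is over-congested.

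Second, the local step --- which you correctly identify as the crux --- is not actually carried out. Invoking the $\m F$-guarantee on the bag requires a spanning tree of the \emph{completed} bag $B_h^0\in\m F$ whose diameter is at most $d_T$; a tree that merely ``refines'' the forest $T\cap B_h$ need not have diameter $O(d_T)$ (gluing many diameter-$d_T$ subtrees can multiply the diameter). The paper's device is the path-contracted tree $T_h^2$, a minor of $T$, which is simultaneously a spanning tree of $B_h^0$, contains $T\cap B_h$, and has diameter $\le d_T$; the resulting $T_h^2$-restricted shortcut is then pruned by discarding edges of $T_h^2\setminus T$ and edges inside the parent clique $C_f$, and the block parameter is re-established by counting block roots (they must lie in $B_{h_P}$, and are either clique vertices, $\le k$ of them, or block roots of the local shortcut, $\le b_{\m F}(d_T)$). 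You assert $H^{\mathrm{loc}}_P\subseteq E(T)$ with $\le b_{\m F}(d_T)$ blocks without this machinery, and your local congestion claim also has a gap: without discarding parent-clique edges, an edge whose endpoints lie in a shared partial clique belongs to many bags and can be assigned locally in each of them, so the total local congestion is not bounded by a single $c_{\m F}(d_T)$. Fixing both points essentially reproduces the paper's proof.
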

\begin{proof}
Let $T$ be an arbitrarily rooted spanning tree of $G$ of diameter $d_T$.
Take a $k$-clique-sum decomposition tree $\dectree$, root it at an arbitrary bag, and suppose that the rooted tree has depth $d_{\dectree}$. In the rooted setting, define the set $\desc(i)\s V(\dectree)$ for $i\in V(\dectree)$ to be $i$ along with all of its descendants in $\dectree$.

Consider a part $P\s V(G)$. Since $P$ is connected, we know,  by properties (4) and (5) of \Cref{defn:CliqueSumDecomposition}, that the set of bags $S_P:=\{ j\in V(\dectree) \mid V(B_j)\cap P\ne\emptyset\}$ is connected  in $\dectree$. Therefore, the lowest common ancestor, denoted by $h_P$, of $S_P$ is also inside $S_P$. Similarly, for an edge $e \in E(G)$ we can define the set of bags that contain that edge $S_e := \{ j \in V(\dectree) \mid e \in E(B_j) \}$ and its lowest common ancestor $h_e \in S_e$.

\hdr{Global Shortcuts} See~\Cref{figure:clique-sum-shortcuts}.
The construction of the global shortcut is simple. For each edge $f'$ to a child $i$ of $h_P$ such that $P\cap V(C_{f'})\ne\emptyset$, allow part $P$ to use all edges in $\left( \Bigcup_{j\in\desc(i)}E(B_j) \cap T \right ) \setminus E(B_{h_P})$. Informally, the global shortcut ``takes care'' of all vertices in $P$ except for those in $B_{h_P}$, which leaves constructing the local shortcut for $P$ in $B_{h_P}$. More precisely, consider the roots of the block components of $P$ when using only the global shortcut: they are restricted to $B_{h_P}$.

We now argue about the congestion. Consider an edge $e \in E(G)$, and let $\mathcal{B}$ be the set of bags on the $\dectree$-root-path to $h_e$, including $h_e$. Clearly, $|\mathcal{B}| \le d_{\dectree}$. Edge $e$ can only be assigned to parts that contain a vertex in the partial-clique on a parent edge of a bag in $\mathcal{B}$. Hence its congestion is at most $k |\mathcal{B}| \le k d_{\dectree}$.

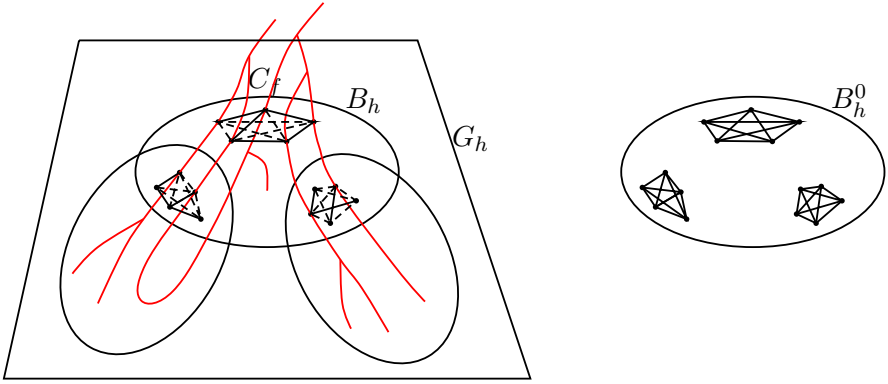
\begin{figure}
\centering
\begin{tikzpicture}[scale=.5]

\draw[rotate=60, line width=.7pt]  (0.5,19) ellipse (3 and 2);
\draw[ line width=.7pt]  (-13,12) ellipse (3.5 and 2);
\draw[rotate=120, line width=.7pt]  (13.5,4) ellipse (3 and 2);

\node[fill=black, circle, scale=.2] (v22) at (-11.7313,11.5405) {};
\node[fill=black, circle, scale=.2] (v10) at  (-11.1805,11.6144) {};
\node[fill=black, circle, scale=.2] (v23) at (-10.6297,11.2319) {};
\node [fill=black, circle, scale=.2](v11) at (-11.8384,10.88) {};
\node[fill=black, circle, scale=.2] (v24) at (-11.3182,10.6352) {};
\node[fill=black, circle, scale=.2, label=above:$C_f$] (v5) at (-13.0471,13.6519) {};
\node[fill=black, circle, scale=.2] (v1) at (-14.3017,13.3306) {};
\node[fill=black, circle, scale=.2] (v3) at (-13.9498,12.8256) {};
\node[fill=black, circle, scale=.2] (v9) at (-12.481,12.8104) {};
\node[fill=black, circle, scale=.2] (v8) at (-11.7619,13.3306) {};
\node[fill=black, circle, scale=.2] (v2) at (-15.3268,11.9842) {};
\node[fill=black, circle, scale=.2] (v4) at (-14.9137,11.464) {};
\node[fill=black, circle, scale=.2] (v21) at (-14.7607,10.7449) {};
\node[fill=black, circle, scale=.2] (v19) at (-15.9388,11.5864) {};
\node[fill=black, circle, scale=.2] (v20) at (-15.5869,11.0662) {};
\node (v12) at (-16.8378,9.891) {};
\node (v13) at (-16.2317,9.5461) {};
\node (v14) at (-10.4738,8.8982) {};
\node (v15) at (-9.5855,9.4625) {};
\node (v16) at (-12.429,12.3502) {};
\node (v17) at (-13.519,12.5279) {};
\draw [red,line width=.7pt] plot[smooth, tension=.7] coordinates {(v1) (v2) (-16.2874,10.745)
 (v12) (-17.5,8.5)};
\draw [red,line width=.7pt] plot[smooth, tension=.7] coordinates {(-16.2874,10.745) (-17.5,10) (-18.1768,9.2921)};
\draw [red,line width=.7pt] plot[smooth, tension=.7] coordinates {(v3) (v4) (v13) (-16.3432,8.5715) (-15.5022,8.8302) (-14.47,10.4848) (v17) (v5)};

\draw [red,line width=.7pt] plot[smooth, tension=.7] coordinates {(v1) (-13.8231,14.1606) (-13.4755,15.09) (-12.7717,16.0577)};
\draw [red,line width=.7pt] plot[smooth, tension=.7] coordinates {(v3) (-13.552,13.8432) (-13.4755,15.09)
};
\node (v6) at (-12.2056,15.6486) {};
\node (v7) at (-11.9302,14.6694) {};
\draw [red,line width=.7pt] plot[smooth, tension=.7] coordinates {(v5) (-12.6187,14.8989) (v6) (-11.5,16.5)};
\draw [red,line width=.7pt] plot[smooth, tension=.7] coordinates {(v6) (v7) (v8)};
\draw [red,line width=.7pt] plot[smooth, tension=.7] coordinates {(v7) (-12.4847,13.5717) (v9)};
\draw [red,line width=.7pt] plot[smooth, tension=.7] coordinates {(v8) (-11.6242,12.6943) (v10) (v15) (-8.8017,8.5503)};
\draw [red,line width=.7pt] plot[smooth, tension=.7] coordinates {(v9) (-12.2668,11.761) (v11) (-11.052,9.6643)  (v14) (-9.7729,7.7372)};
\draw [red,line width=.7pt] plot[smooth, tension=.7] coordinates { (-11.052,9.6643)  (-11,8.5) (-10.7674,7.829)};

\node at (-13,12) {};
\node (v18) at (-13.0549,12.2004) {};
\draw [red,line width=.7pt] plot[smooth, tension=.7] coordinates {(v17) (v18) (-13,11.5)};

\draw [black,line width=.7pt] (v2.center) -- (v19.center) -- (v20.center) -- (v21.center);
\draw [black,line width=.7pt](v1.center) -- (v5.center) -- (v3.center) -- (v9.center) -- (v8.center) -- (v5.center);
\draw [black,densely dashed,line width=.7pt](v1.center) -- (v3.center) -- (v8.center) -- (v1.center) -- (v9.center) -- (v5.center);
\draw [black,densely dashed,line width=.7pt](v19.center) -- (v4.center) -- (v20.center) -- (v2.center) -- (v21.center) -- (v19.center) (v2.center) -- (v4.center) -- (v21.center);
\draw [black,line width=.7pt] (v20.center) edge (v4.center);
\draw [black,line width=.7pt ](v22.center) -- (v11.center) -- (v23.center) (v10.center) -- (v24.center);

\draw [black,densely dashed,line width=.7pt](v22.center) -- (v10.center) -- (v23.center) -- (v24.center) -- (v11.center) -- (v10.center) (v22.center) -- (v24.center) (v22.center) -- (v23.center);
\draw [black,line width=.7pt] (-18,15.5) node (v25.center){} -- (-9,15.5) -- (-6,6.5) -- (-20,6.5) --  (-18,15.5);
\node[scale=1] at (-10.4339,13.8984) {$B_h$};
\node [scale=1] at (-7.5868,12.8843) {${G_h}$};
\end{tikzpicture}
\qquad
\begin{tikzpicture}[scale=.5]

\draw[ line width=.7pt]  (-13,12) ellipse (3.5 and 2);

\node[fill=black, circle, scale=.2] (v22) at (-11.7313,11.5405) {};
\node [fill=black, circle, scale=.2](v10) at  (-11.1805,11.6144) {};
\node [fill=black, circle, scale=.2](v23) at (-10.6297,11.2319) {};
\node [fill=black, circle, scale=.2](v11) at (-11.8384,10.88) {};
\node [fill=black, circle, scale=.2](v24) at (-11.3182,10.6352) {};
\node [fill=black, circle, scale=.2](v5) at (-13.0471,13.6519) {};
\node[fill=black, circle, scale=.2] (v1) at (-14.3017,13.3306) {};
\node [fill=black, circle, scale=.2](v3) at (-13.9498,12.8256) {};
\node [fill=black, circle, scale=.2](v9) at (-12.481,12.8104) {};
\node[fill=black, circle, scale=.2] (v8) at (-11.7619,13.3306) {};
\node[fill=black, circle, scale=.2] (v2) at (-15.3268,11.9842) {};
\node[fill=black, circle, scale=.2] (v4) at (-14.9137,11.464) {};
\node[fill=black, circle, scale=.2] (v21) at (-14.7607,10.7449) {};
\node [fill=black, circle, scale=.2](v19) at (-15.9388,11.5864) {};
\node [fill=black, circle, scale=.2](v20) at (-15.5869,11.0662) {};
\node (v12) at (-16.8378,9.891) {};
\node (v13) at (-16.2317,9.5461) {};
\node (v14) at (-10.4738,8.8982) {};
\node (v15) at (-9.5855,9.4625) {};
\node (v16) at (-12.429,12.3502) {};
\node (v17) at (-13.519,12.5279) {};

\node at (-13,12) {};
\node (v18) at (-13.0549,12.2004) {};
\node[scale=1] at (-10.4339,13.8984) {$B_h^0$};

\draw [black,line width=.7pt] (v2.center) -- (v19.center) -- (v20.center) -- (v21.center);
\draw [black,line width=.7pt](v1.center) -- (v5.center) -- (v3.center) -- (v9.center) -- (v8.center) -- (v5.center);
\draw [black,line width=.7pt](v1.center) -- (v3.center) -- (v8.center) -- (v1.center) -- (v9.center) -- (v5.center);
\draw [black,line width=.7pt](v19.center) -- (v4.center) -- (v20.center) -- (v2.center) -- (v21.center) -- (v19.center) (v2.center) -- (v4.center) -- (v21.center);
\draw [black,line width=.7pt] (v20.center) edge (v4.center);
\draw [black,line width=.7pt ](v22.center) -- (v11.center) -- (v23.center) (v10.center) -- (v24.center);

\draw [black,line width=.7pt](v22.center) -- (v10.center) -- (v23.center) -- (v24.center) -- (v11.center) -- (v10.center) (v22.center) -- (v24.center) (v22.center) -- (v23.center);
\draw [white,line width=.7pt] (-8,15.5) node (v25.center){} -- (-9,15.5) -- (-6,6.5) -- (-10,6.5) --  (-8,15.5);

\end{tikzpicture}
  \caption{Local shortcut construction. On the left, $T$ is solid and dotted red. Dotted edges are edges absent from the partial $k$-cliques. On the right is $B_h^0$ for the $B_h$ on the left.}
  \label{figure:LocalShortcuts}
\end{figure}

\hdr{Local Shortcuts} See~\Cref{figure:LocalShortcuts}.
Let $h$ be an arbitrary bag, we apply the following argument to all of them. We now focus on the local shortcut within $B_h$. Let $T_h^1 := T \cap B_h$ be the forest when we look at $B_h$ in isolation (note that the tree $T$ can become disconnected). We will repair $T_h^1$ in the next paragraph.

Let $B_h^0\in\m F$ be the original bag of $B_h$, which is $B_h$ with all partial $k$-cliques involved in the clique-sum completed to full $k$-cliques (see~\Cref{figure:LocalShortcuts}). In particular, $V(B_h) = V(B_h^0)$. We emphasize that $B_h^0 \in \m F$ by the definition of partial-cliques.

In order to find a tree-restricted shortcut on $B_h$, we have to define the tree. The forest $T_h^1 := T \cap B_h^0$ might be disconnected, so we have to repair it. First, we define a \textbf{path contraction} operation between two vertices $s, t \in V(B_h^0)$. Consider the unique path $s = u_0, u_1, \ldots, u_{\ast} = t$ between them in $T$. Delete any vertex $u_i \not \in V(B_h)$ and one is left with a valid path in $B_h^0$ between $s$ and $t$. Note that the contracted path is a graph minor of $T$.

We form the repaired tree $T_h^2$ in the following way: for every two $s, t \in V(B_h^0)$, take the path contraction between them and union it into $T_h^2$. It is clear that (1) $T_h^2$ is a subgraph of $B_h^0$, in fact, it is a spanning tree of $B_h^0$, (2) $T_h^1$ is a subgraph of $T_h^2$, and (3) $T_h^2$ is a minor of $T$. The last property implies that $T_h^2$ is connected and that its diameter is at most $d_T$. Also, note that the same argument shows that for any part $P$, its restriction $B_h^0[P]$ is also connected since we can contract any path inside $P$ and the resulting path is still in $B_h^0$ and contains only vertices in $P$---the only unimportant difference being that this path might be on $T$.

Next, construct a $T_h^2$-restricted shortcut, discard all edges in $T_h^2 \setminus T = T_h^2 \setminus T_h^1$, and discard all edges contained in $C_f$, where $f$ is the parent $\dectree$-edge of $h$. The resulting assignment is the local shortcut of $B_h$.

The congestion of the local shortcut is $c_{\m F}(d_T)$. Fix an edge $e \in E(G)$, and note that it is only locally assigned in the bag $h_e$ (due to discarding (2)). But the local congestion of $h_e$ is $c_{\m F}(d_T)$, as claimed. The total congestion is at most the sum of the local and global one, hence it is at most $k d_{\dectree} + c_{\m F}(d_T)$.

\hdr{Bounding the Block Parameter}
With all shortcut edges established, we now upper bound the block parameter for each part $P \subseteq V(G)$. Remember that $T$ is, arbitrarily, rooted. We will bound the number of nodes $v \in V(G)$ that are roots of block components. Note that $v \in B_{h_P}$ since otherwise the global shortcut assigns the $T$-parent edge of $v$ to $P$. But in the lowest common ancestor $B_{h_P}$, $v$ can be a block root only if either (a) it is a vertex in $C_f$, where $f$ is the parent $\dectree$-edge of $h_P$, or (b) it is a block root of a local shortcut inside $B_{h_P}$. Summing up the contributions of these two cases, the total number of block roots, and therefore block components, can be at most $k + b_{\m F}(d_T)$.

\end{proof}

To improve the $d_{\dectree}$ factor in the congestion and prove the main result of this section, we compress the decomposition tree $\dectree$ to reduce its depth to $O(\log^2n)$, in a similar way to the compression scheme in \cite{bodlaender1988nc} for treewidth decompositions.

\CliqueSum*

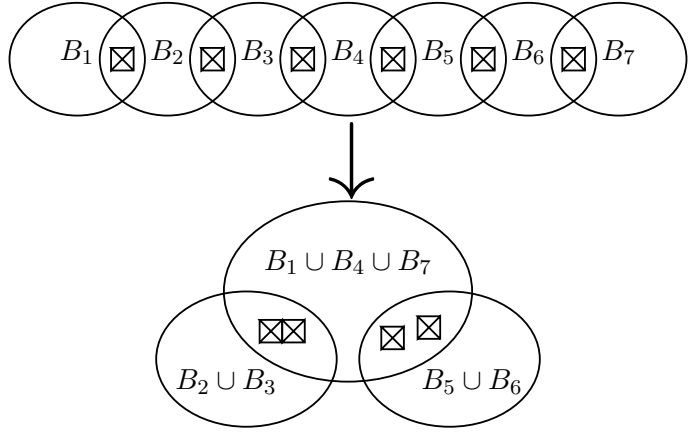
\begin{figure}
  \centering
  %,line width=.7
  \begin{tikzpicture}[scale=.3]
\node at (11,3.8) {$B_{7}$};
\draw [black,line width=.7] (11,3.5) ellipse (3 and 2.5);
\node at (7,3.8) {$B_{6}$};
\draw [black,line width=.7] (7,3.5) ellipse (3 and 2.5);
\draw [black,line width=.7](8.5,4)  -- (8.5,3)  -- (9.5,4) -- (9.5,3)  -- (8.5,3) (9.5,4) -- (8.5,4) --  (9.5,3);
\node at (3,3.8) {$B_{5}$};
\draw [black,line width=.7] (3,3.5) ellipse (3 and 2.5);
\draw [black,line width=.7](4.5,4)  -- (4.5,3)  -- (5.5,4) -- (5.5,3)  -- (4.5,3) (5.5,4) -- (4.5,4) --  (5.5,3);
\node at (-1,3.8) {$B_{4}$};
\draw [black,line width=.7] (-1,3.5) ellipse (3 and 2.5);
\draw [black,line width=.7](0.5,4)  -- (0.5,3)  -- (1.5,4) -- (1.5,3)  -- (0.5,3) (1.5,4) -- (0.5,4) --  (1.5,3);
\node at (-5,3.8) {$B_{3}$};
\draw [black,line width=.7] (-5,3.5) ellipse (3 and 2.5);
\draw [black,line width=.7](-3.5,4)  -- (-3.5,3)  -- (-2.5,4) -- (-2.5,3)  -- (-3.5,3) (-2.5,4) -- (-3.5,4) --  (-2.5,3);
\node at (-9,3.8) {$B_{2}$};
\draw [black,line width=.7] (-9,3.5) ellipse (3 and 2.5);
\draw [black,line width=.7](-7.5,4)  -- (-7.5,3)  -- (-6.5,4) -- (-6.5,3)  -- (-7.5,3) (-6.5,4) -- (-7.5,4) --  (-6.5,3);
\node at (-13,3.8) {$B_{1}$};
\draw [black,line width=.7] (-13,3.5) ellipse (3 and 2.5);
\draw [black,line width=.7](-11.5,4)  -- (-11.5,3)  -- (-10.5,4) -- (-10.5,3)  -- (-11.5,3) (-10.5,4) -- (-11.5,4) --  (-10.5,3);

\draw [black,line width=.7](-4.9159,-8.0613)  -- (-4.9159,-9.0613)  -- (-3.9159,-8.0613) -- (-3.9159,-9.0613)  -- (-4.9159,-9.0613) (-3.9159,-8.0613) -- (-4.9159,-8.0613) --  (-3.9159,-9.0613);

\draw [black,line width=.7](0.476,-8.3619)  -- (0.476,-9.3619)  -- (1.476,-8.3619) -- (1.476,-9.3619)  -- (0.476,-9.3619) (1.476,-8.3619) -- (0.476,-8.3619) --  (1.476,-9.3619);

\draw [black,line width=.7](2.0852,-7.9015)  -- (2.0852,-8.9015)  -- (3.0852,-7.9015) -- (3.0852,-8.9015)  -- (2.0852,-8.9015) (3.0852,-7.9015) -- (2.0852,-7.9015) --  (3.0852,-8.9015);

\draw [black,line width=.7](-3.9159,-8.0613)  -- (-3.9159,-9.0613)  -- (-2.9159,-8.0613) -- (-2.9159,-9.0613)  -- (-3.9159,-9.0613) (-2.9159,-8.0613) -- (-3.9159,-8.0613) --  (-2.9159,-9.0613);

\draw [black,line width=.7] (-5.5,-9.8) ellipse (4 and 3);
\draw [black,line width=.7] (3.5,-9.8) ellipse (4 and 3);
\draw [black,line width=.7] (-1,-6.8) ellipse (5.5 and 4);
\node[scale=3,rotate=-90] at (-1.1,-1) {$  \rightarrow $};
\node at (-1,-5.5) {$B_1\cup B_4\cup B_7$};
\node at (-6.4,-10.8) {$B_2\cup B_3$};
\node at (4.5,-10.8) {$B_5\cup B_6$};
\end{tikzpicture}
  \caption{Compressing a $k$-clique-sum decomposition tree with high depth.}
  \label{figure:HLD}
\end{figure}
\BP
Let $\dectree$ be a $k$-clique-sum decomposition tree of $G$. To motivate the main proof, we first consider the case when $\dectree$ is a single path from root to leaf. This case will directly help in the general case, in which we apply \textit{heavy-light decomposition} to the tree, breaking it up into chains, and then treat each chain as a single path; we will present this general case next.

\hdr{Case When $\dectree$ Is a Path} Assume that $\dectree$ is a rooted path with bags $B_1,\lds,B_{d_{\dectree}}$, in that order. We recursively construct a balanced binary decomposition tree $\dectree'$ as follows.
\BE
 \im Group the bags $B_1,B_{\lc d_{\dectree}/2\rc},B_{d_{\dectree}}$ into a single bag $B_r$.
\im Recursively solve the paths $B_2,\lds,B_{\lc d_{\dectree}/2\rc-1}$ and $B_{\lc d_{\dectree}/2\rc+1},\lds,B_{d_{\dectree}}$. 
\im Attach the two resulting trees as subtrees of $B_r$ (see \Cref{figure:HLD}).
\EE
We call this operation \textbf{folding} a path.

 Call the new decomposition tree $\dectree'$; it is almost a $k$-clique-sum decomposition tree, with one exception: an edge may no longer a partial $k$-clique, but a union of two partial $k$-cliques. We call such edges \textbf{double edges}. Note that, while we can add edges within each of the two partial $k$-cliques and keep the graph in the family $\m F$, we cannot add edges between a vertex in one partial $k$-clique and a vertex in the other. Hence, we cannot simply treat the union of two partial $k$-cliques as a single partial $2k$-clique. However, a bag $B_i$ can have at most two children connected by double edges.

 Using the terminology of the above proof, let $B_h^0$ still be the bag $B_h$ with all partial cliques filled in with edges (the union of two cliques in a double edge will not have edges between them). The only difference this incurs in the proof is the following: in the global shortcut, partial cliques on the edge of $\dectree$ can now contain $2k$ vertices instead of $k$, doubling the congestion; and, in the local shortcut, a part restricted to to a bag $B_h^0[P]$ might not be connected anymore. However, we claim that it consists of at most $O(1)$ connected components: for each connected component we find a ``representative vertex'' in that component as follows. If (1) the component touches a partial clique in a double edge to a child, then the representative is the lowest numbered vertex in such a partial clique, and otherwise (2) we pick any vertex in the component. Now there will be at most $O(1)$ different representatives, thereby finishing the claim since no two different components can have the same representative. One can see this by arguing if (1) a part touches a partial clique in a double edge to a child, the it has at most 4 possibilities; otherwise (2) the part is already connected via the previous proof.

 We construct local shortcuts considering connected components of the parts as separate (sub)parts and union the assignment in the end. This only decreases the congestion, and increases the block by a multiplicative $O(1)$ to a total of $2k + O(b_{\m F}(d_T))$.

We now discuss the general case, when $\dectree$ is an arbitrary tree. The main steps of the proof are as follows. First, we compute a \textit{heavy-light decomposition}~\cite{harel1984fast} of $\dectree$. Then, we fold every chain in the heavy-light decomposition the same way we fold a single path, so that the resulting tree decomposition has depth $O(\log^2n)$.

\hdr{Heavy-Light Decomposition}
The heavy-light decomposition is a decomposition of any rooted tree into vertex-disjoint paths, called \textbf{heavy chains}, such that any path from the root to a leaf changes at most $O(\log n)$ heavy chains, where $n$ is the number of vertices in the tree. The decomposition is simple: for each non-leaf vertex of the tree, connect it to the child vertex with the largest number of vertices in its own subtree. On any path from root to leaf, if traveling from vertex $u$ to vertex $v$ changes heavy chains, then vertex $u$ has at least twice as many vertices in its subtree than does $v$; such an event can only occur $\log_2n$ times along the path.

\hdr{Folding a Chain} Once we compute the heavy-light decomposition, we partition the vertices of $\dectree$ into heavy chains, and then fold each chain independently.
Then, we connect the resulting binary trees in the following natural way: if the root of chain $C_1$ is a child of some vertex $v$, then we connect the root of the binary tree of $C_1$ to $v$. Note that this is not a double edge. We get a rooted tree $\dectree'$ of depth $O(\log^2 n)$ with the following key property: while every vertex in the new decomposition tree can have many children, it has at most two children connected via double edges. Therefore, the same argument for double edges in the single path case also applies here.  With the depth of $\dectree'$ reduced to $O(\log^2n)$, the result follows.
\EP
} % end fullOnly

\subsection{Shortcuts in Almost Embeddable Graphs}
\label{section:AlmostEmbeddable}

\shortOnly{
The full version of this paper proves \Cref{thm:almostEmbeddable}. In particular, we prove that $k$-almost-embeddable graphs admit good shortcuts. Recall that these graphs have bounded genus with an additional constant number of apices and vortices of constant depth added.

However, due to space constraints, we will only prove a simple sub-case: we assume that there are no apices in the graph. Unfortunately, dealing with both apices and vortices constitutes a majority of the technical difficulties and novel approaches of this paper. Still, the non-apex graphs provide a good warm-up.
}

\fullOnly{
In this section, we prove \Cref{thm:almostEmbeddable}. In particular, we prove that $k$-almost-embeddable graphs admit good shortcuts. Recall that these graphs have bounded genus with an additional constant number of apices and vortices of constant depth added.
}

\subsubsection{Warm-up: Non-Apex Graphs}\label{section:NonApex}

As a warm-up, we disregard apices and only consider graphs of bounded genus with vortices, i.e., the ``\textbf{Genus+Vortex}'' graphs. We establish tree-restricted shortcuts with block parameter $O((g+1)kD)$ and congestion $O((g+1)kD\log n)$ for graphs of genus $g$ with a $k$-vortex included. We first show that such a graph must have treewidth at most $O((g+1)kD)$, and then use the treewidth-based shortcut construction of \cite{haeupler2016near}. We note that this lemma is not novel, it is a simple consequence of the work by Dujmovic, Morin and Wood~\cite{dujmovic2017layered}, but we chose to include it because it illustrates how to deal with vortices.

At this point, we introduce our notation for treewidth decompositions. A \textit{treewidth decomposition} of a graph $G$ is a tree $\dectree$ whose vertices, called \textit{bags}, are subsets of $V(G)$. The tree $\dectree$ satisfies three properties: (i) the union of vertices over all bags equals $V(G)$; (ii) for each $v\in V$, the set of bags containing $v$ is connected in $\dectree$; (iii) for each edge $(u,v)\in E(G)$, there is a bag containing both $u$ and $v$. The \textit{treewidth} of a graph $G$ is the minimum $k$ such that there exists a tree decomposition $\dectree$ of $G$ whose bag sizes are all at most $k+1$.

\begin{lemma}\label{lemma:GenusVortex}
A graph $G$ of diameter $D$ and genus $g$ with a single vortex of depth $k$ has treewidth $O((g+1)kD)$.
\end{lemma}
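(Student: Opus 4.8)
The plan is to construct a tree decomposition of $G$ of width $O((g+1)kD)$ directly, by combining a BFS-layering of the bounded-genus ``base'' graph with the observation that a vortex of depth $k$ can be threaded into a path decomposition of low width. First I would consider the graph $G' = G$ with the vortex removed, i.e.\ the genus-$g$ graph $G'$ embedded on a surface $\Sigma$ with a distinguished face $C$ (the vortex face). Run BFS from an arbitrary root in $G'$; since $G$ (hence $G'$) has diameter $D$, there are at most $D+1$ BFS layers $L_0, L_1, \ldots, L_D$. The standard fact for bounded-genus graphs (Eppstein-style layered treewidth, cf.\ \cite{dujmovic2017layered}) is that for each $i$ the union of $O(1)$ consecutive layers $L_{i}\cup\cdots\cup L_{i+t}$, together with $O(g)$ extra ``portal'' vertices coming from a system of non-contractible curves, induces a graph of bounded treewidth; more precisely, $G'$ has a tree decomposition whose bags each meet $O(g+1)$ BFS layers and where the portion of each bag inside any single layer has bounded size. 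Intersecting this with the trivial bound that each layer has a bag, one gets that $G'$ has treewidth $O((g+1)D)$ --- this is the genus-analogue of the planar ``treewidth $\le 3D$'' bound, and is where the $(g+1)D$ comes from.

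Next I would fold the vortex in. Let the vortex boundary cycle be $C = (c_1, c_2, \ldots, c_m)$ with internal vortex nodes $v_1, v_2, \ldots$, each $v_j$ attached only to an arc $A_j = \mathcal{P}(v_j) \subseteq C$, and with every vertex of $C$ covered by at most $k$ arcs. Because $C$ is a single face of the embedding, the cyclic order on $C$ gives a natural linear order (cut the cycle at one point); along this order, the arcs behave like intervals, so the vortex together with $C$ admits a \emph{path} decomposition of width $O(k)$: the $t$-th bag contains $c_t$, its $O(1)$ neighbours on $C$, and all internal nodes $v_j$ whose arc $A_j$ is ``active'' at position $t$, of which there are $O(k)$ by the depth bound. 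This is exactly the interval-graph-like structure that makes vortices tame. Now I would splice this path decomposition into the tree decomposition of $G'$: the cycle $C$ lies on a face, so in the layered decomposition of $G'$ the vertices of $C$ are spread over the layers, but they can be ``collected'' by adding all of $C$ to a single path of bags at the cost of $O(D)$ extra vertices per bag (since $C$, being a face boundary, can be covered by $O(D)$-length pieces per layer); then attach the width-$O(k)$ vortex path decomposition along that path, multiplying widths to get bags of size $O((g+1)D \cdot k)$, i.e.\ treewidth $O((g+1)kD)$.

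More cleanly, the cleanest route avoiding delicate surface surgery is: quote the ``layered treewidth'' machinery of Dujmovi\'c--Morin--Wood~\cite{dujmovic2017layered}, which states that a graph of genus $g$ with a vortex of depth $k$ has layered treewidth $O((g+1)k)$ --- meaning a tree decomposition together with a layering (here the BFS layering) such that each bag intersects each layer in $O((g+1)k)$ vertices. Combined with the fact that the BFS layering has only $D+1$ layers (diameter $D$), every bag has size $O((g+1)k) \cdot (D+1) = O((g+1)kD)$, giving treewidth $O((g+1)kD)$ as claimed. I would present the proof in this order: (1) fix the BFS layering and note $\le D+1$ layers; (2) invoke layered treewidth $O((g+1)k)$ for genus-$g$-plus-depth-$k$-vortex graphs, with a one-paragraph sketch of why (genus contributes $O(g+1)$, the vortex contributes an extra multiplicative/additive $O(k)$ via the interval structure of arcs); (3) multiply to conclude.

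The main obstacle I anticipate is item (2) --- getting the vortex's contribution to the layered treewidth correct. The genus part is classical (essentially cutting the surface open along $O(g)$ curves to reduce to the planar layering argument). The subtlety is that an internal vortex node $v_j$ attached to an arc $A_j$ of $C$ may connect to vertices of $C$ that lie in \emph{many different} BFS layers (the arc can be long), so naively $v_j$ would need to appear in bags spanning many layers, blowing up the width. The resolution is that the vortex face $C$ can be handled \emph{before} layering: one first builds the width-$O(k)$ path decomposition of $\mathrm{(vortex)} + C$ using the cyclic/interval structure, and only then layers the rest of the graph, reconciling the two decompositions along the shared cycle $C$. Making this reconciliation precise --- and in particular bounding the overlap between a vortex path-bag and the BFS layers by $O(D)$ --- is the one place where some genuine (but standard) care is needed; everything else is bookkeeping. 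Since the paper explicitly says this lemma is ``not novel'' and follows from~\cite{dujmovic2017layered}, I would lean on that citation for the hard combinatorial core and keep my argument at the level of assembling the pieces.
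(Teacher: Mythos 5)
Your ``cleanest route'' --- citing the layered-treewidth bound of Dujmovi\'c--Morin--Wood for genus-$g$ graphs with a depth-$k$ vortex and multiplying by the number of layers --- is sound, and it matches the paper's own remark that the lemma is a consequence of \cite{dujmovic2017layered}. (In fact you do not even need the layering to be a BFS layering you chose: in a connected graph of diameter $D$, \emph{any} layering has at most $D+1$ nonempty layers, since each edge changes the layer index by at most one.) But this is a genuinely different argument from the paper's. The paper never invokes layered treewidth: it deletes the internal vortex nodes and adds a single dummy vertex $r$ in the vortex face adjacent to the entire vortex boundary, which keeps the genus $g$ and the diameter at most $D+1$; it then applies Eppstein's diameter--treewidth bound $O((g+1)D)$ to that graph, removes $r$, and re-inserts each internal vortex node $v$ into every bag meeting its arc $\m P(v)$. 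Since the vortex has depth $k$, each boundary vertex drags at most $k$ internal nodes into its bags, so the width inflates only to $O((g+1)kD)$, and validity of the modified decomposition is a short local check. Your route buys brevity at the price of a heavier black box (the vortex case of layered treewidth, not just the genus case); the paper's is self-contained modulo Eppstein and replaces your path-decomposition splicing by this purely local insertion.

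The hands-on version in your first paragraph, however, has a genuine gap, and it is exactly the point the dummy vertex $r$ exists to fix: after you delete the vortex, the remaining genus-$g$ graph $G'$ need \emph{not} have diameter $D$. An internal vortex node attached to a long arc acts as a shortcut, so its removal can blow up the diameter arbitrarily (e.g., a cycle of length $n$ with one vortex node adjacent to all of it has diameter $2$, while $G'$ has diameter $n/2$); then a BFS layering of $G'$ has far more than $D+1$ layers and Eppstein's bound on $G'$ gives nothing in terms of $D$. Separately, ``collecting'' the whole boundary cycle $C$ into a single path of bags at a cost of $O(D)$ extra vertices per bag is not justified: $C$ may have length $\Theta(n)$, and forcing it into few bags is essentially the difficulty you were trying to sidestep. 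Both issues disappear if you keep the boundary where it is, restore the diameter with $r$, and re-insert the vortex vertices bag-by-bag along their arcs as the paper does.
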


\BP
First, we transform $G$ into a graph $G'$ of genus $g$ and diameter at most $D + 1$ as follows: remove all the vertices inside the vortex, and add a single vertex $r$ in the vortex face with an edge to all vertices on the vortex boundary. Since pairwise distances between vertices on the boundary do not increase by more than 1, the diameter of $G'$ is at most $D + 1$. 

Eppstein~\cite{eppstein2000diameter} proves that graphs of genus $g$ have treewidth $O((g+1)D)$. Therefore, there exists a tree decomposition $\dectree'$ of $G'$ with bag size $O((g+1)D)$. Remove $r$ from $\dectree'$. To add the vortex back in, first take a vortex decomposition $\m P$. Then, for each vertex $v$  inside the vortex that was removed, add $v$ to every bag in $\dectree'$ that intersects $\m P(v)$, i.e., contains a boundary vertex on the corresponding arc of $v$. It remains to prove that the resulting tree decomposition $\dectree$ is valid and has bag size $O((g+1)kD)$.

To show the former, fix a vertex $v$ inside the vortex. Since the neighboring boundary vertices in $\m P(v)$ are connected by edges, there exists a common bag between every two neighboring boundary vertices. Therefore, the entire set of bags containing $v$ is connected. In addition, $v$ shares a common bag with any boundary vertex in $\m P(v)$, as well as any other vertex $v'$ in the vortex with $\m P(v)\cap\m P(v')\ne\emptyset$. It follows that for each edge incident to $v$, there exists a bag containing both of its endpoints.

Finally, since the vortex decomposition $\m P$ has depth at most $k$, each vertex on the boundary is responsible for at most $k$ new vertices in its bags. Since each bag has at most $O((g+1)D)$ boundary vertices, the new bag size is $O((g+1)kD)$.
\EP

This proof easily generalizes to the case when $G$ has $\ell$ vortices, each of depth $k$.

\begin{lemma}\label{lemma:VortexTreewidth}
A graph $G$ of diameter $D$ and genus $g$ with $\ell$ vortices of depth $k$ has treewidth $O((g+1)k\ell D)$.
\end{lemma}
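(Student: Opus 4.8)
The plan is to generalize the proof of \Cref{lemma:GenusVortex} essentially verbatim, replacing the single vortex hub by one hub per vortex and carrying out the bag-size bookkeeping over all $\ell$ vortices at once.

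First I would reduce to a genus-$g$ graph of diameter $O(D)$. Let $G_0$ be the $2$-cell embedded genus-$\le g$ graph underlying $G$, let $C_1,\dots,C_\ell$ be the (distinct) vortex-face boundary cycles, and let $\m P_1,\dots,\m P_\ell$ be the corresponding depth-$\le k$ vortex decompositions of the internal vortex nodes. Form $G'$ from $G$ by deleting every internal vortex node and, for each $i$, placing a new hub $r_i$ inside the $i$-th vortex face with an edge to every vertex of $C_i$. Each $r_i$ is drawn inside its own disk face, so these additions keep the embedding $2$-cell on the same surface; hence $G'$ has genus $\le g$. For the diameter: in any shortest path of $G$ between two vertices of $G_0$, each maximal detour through internal nodes of some vortex $i$ enters and leaves on $C_i$ and has length at least $2$, so it can be replaced by the length-$2$ hop through $r_i$ without increasing the length; combined with the fact that each $r_i$ is within distance $1$ of $C_i$, this gives $\diam(G')\le D+O(1)$.

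Next I would invoke Eppstein's theorem~\cite{eppstein2000diameter} to obtain a tree decomposition $\dectree'$ of $G'$ with bags of size $O((g+1)D)$, then delete all hubs $r_1,\dots,r_\ell$ from every bag — yielding a tree decomposition of $G_0$ with the same bag-size bound — and finally reinsert the vortices all at once: for each $i$ and each internal node $v$ of vortex $i$, add $v$ to every bag of $\dectree'$ that contains a vertex lying on the arc $\m P_i(v)$. Call the result $\dectree$. To check validity: the arc $\m P_i(v)$ is a continuous interval of the cycle $C_i$, so consecutive arc-vertices are adjacent in $G_0$ and hence share a bag, which makes the union of their (connected) bag-sets connected — and $v$ is added to exactly this union, so the bags containing $v$ form a subtree. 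For edge coverage, every edge of $G_0$ is still covered (deleting hubs and inserting internal nodes never removes a $G_0$-vertex from a bag), and every vortex edge is either of the form $v$–$w$ with $w$ on $\m P_i(v)$, or of the form $v_A$–$v_B$ with arcs $A,B$ sharing a common cycle vertex $w$; in both cases any bag containing $w$ receives the needed new vertices. Crucially, every vortex edge stays inside a single vortex, so no cross-vortex edges need covering.

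For the bag size, a fixed bag $b$ has $O((g+1)D)$ original vertices, and each such vertex lies on at most $k$ arcs of each vortex whose boundary it belongs to; since the selected faces may share boundary vertices, a single vertex can contribute up to $k\ell$ new internal nodes, so $|b|$ grows to $O((g+1)D) + k\ell\cdot O((g+1)D) = O((g+1)k\ell D)$, which gives the claimed treewidth. The one point requiring slightly more care than the $\ell=1$ case is exactly this last accounting — a bag may border several vortex faces simultaneously, so the blow-up is multiplicative in both $k$ and $\ell$ — and this is precisely what yields the bound $O((g+1)k\ell D)$ rather than something smaller. Everything else is a direct transcription of the single-vortex argument, so I anticipate no genuine obstacle.
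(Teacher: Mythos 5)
Your proof is correct and follows exactly the route the paper intends: the paper proves the single-vortex case (\Cref{lemma:GenusVortex}) and simply asserts that it ``easily generalizes,'' and your argument is precisely that generalization --- one hub per vortex face, Eppstein's bound on the hubbed genus-$g$ graph, arc-based reinsertion of internal vortex nodes, and the multiplicative $k\ell$ bag-size accounting. No gaps to report.
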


\shortOnly{\vspace{-2mm}}
Finally, applying the treewidth-based shortcut construction, namely \Cref{theorem:shortcut-existence-on-bounded-treewidth}, gives the desired result.

\begin{theorem}
A genus $g$ and diameter $D$ graph with $\ell$ vortices of depth $k$ has tree-restricted shortcuts with congestion $O((g+1)k\ell D\logn)$ and block parameter $O((g+1)k\ell D)$.
\end{theorem}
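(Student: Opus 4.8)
The plan is to derive this theorem immediately from the two results already assembled in this subsection: the treewidth bound of \Cref{lemma:VortexTreewidth} and the treewidth-based shortcut construction of \cite{haeupler2016near} recorded in \Cref{theorem:shortcut-existence-on-bounded-treewidth}. There is essentially nothing new to prove at this stage; the work has been front-loaded into \Cref{lemma:GenusVortex} and \Cref{lemma:VortexTreewidth}, and the final statement is just a composition of the two.

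Concretely, first I would apply \Cref{lemma:VortexTreewidth} to the graph $G$ of diameter $D$ and genus $g$ with $\ell$ vortices of depth $k$, obtaining a treewidth bound $w := O((g+1)k\ell D)$. (To avoid a symbol clash, I would note that the parameter $k$ appearing in \Cref{theorem:shortcut-existence-on-bounded-treewidth} denotes the treewidth, which here is $w$, not the vortex depth.) Then I would invoke \Cref{theorem:shortcut-existence-on-bounded-treewidth} with treewidth parameter $w$: since $G$ has treewidth at most $w$, it admits tree-restricted shortcuts with block parameter $O(w) = O((g+1)k\ell D)$ and congestion $O(w\log n) = O((g+1)k\ell D\logn)$. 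Reading off these two quantities is exactly the claim.

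I do not expect a real obstacle at this final step — the genuinely substantive content lives in \Cref{lemma:GenusVortex}, where a single vortex is handled by deleting its interior, planting one star-center $r$ in the vortex face joined to the whole boundary cycle (which raises the diameter by at most $1$ and keeps the genus at $g$), applying Eppstein's $O((g+1)D)$ treewidth bound for bounded-genus graphs, and then reinserting each interior vortex node $v$ into every bag meeting its arc $\m P(v)$, the depth-$k$ bound multiplying bag sizes by $O(k)$; \Cref{lemma:VortexTreewidth} is then the routine $\ell$-fold iteration of this. The only points worth double-checking when writing the final proof are (i) that the treewidth bound is stated in terms of the graph's own diameter $D$, so the resulting shortcut parameters are the advertised functions of $D$ and $n$, and (ii) that the tree-restricted shortcut guarantee of \cite{haeupler2016near} holds for an \emph{arbitrary} spanning tree with parameters independent of $d_T$ — which it does — so nothing is lost in the quantification over spanning trees when passing through the treewidth bound.
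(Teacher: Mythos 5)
Your proposal matches the paper's proof exactly: the paper also obtains this theorem by composing \Cref{lemma:VortexTreewidth} (treewidth $O((g+1)k\ell D)$) with the treewidth-based shortcut construction of \Cref{theorem:shortcut-existence-on-bounded-treewidth}, reading off block parameter $O((g+1)k\ell D)$ and congestion $O((g+1)k\ell D\log n)$. Your side remarks (the role of \Cref{lemma:GenusVortex} and the symbol clash on $k$) are consistent with the paper as well.
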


\fullOnly{
In particular, since planar graphs have genus $0$, we get the following corollary:
\begin{corollary}
A diameter $D$ planar graph with $\ell$ vortices of depth $k$ has tree-restricted shortcuts with congestion $O(k\ell D\logn)$ and block parameter $O(k\ell D)$.
\end{corollary}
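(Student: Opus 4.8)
The plan is to get both the theorem and its corollary essentially for free from the two results already assembled: the treewidth estimate of \Cref{lemma:VortexTreewidth} and the treewidth-based shortcut construction of \Cref{theorem:shortcut-existence-on-bounded-treewidth}. Since a graph of treewidth $t$ admits tree-restricted shortcuts with block parameter $O(t)$ and congestion $O(t\log n)$, substituting $t = O((g+1)k\ell D)$ gives exactly block parameter $O((g+1)k\ell D)$ and congestion $O((g+1)k\ell D\log n)$; specializing to $g = 0$ then yields the planar corollary. So the only place where work actually happens is \Cref{lemma:VortexTreewidth}, which above is merely asserted to ``easily generalize'' \Cref{lemma:GenusVortex}; I would spell it out as follows.

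First I would form an auxiliary graph $G'$ from $G$ by deleting the internal vertices of every vortex and, for each of the $\ell$ vortex faces, inserting one hub vertex $r_i$ inside that face, joined to all boundary vertices of that face. The genus stays $g$ because each $r_i$ is placed inside an existing face of the embedding. For the diameter, note that an internal vertex of vortex $i$ is adjacent only to boundary vertices of vortex $i$'s cycle and to internal vertices of overlapping arcs of the same vortex, so any path of $G$ that enters a vortex must leave through that same boundary cycle; each in-vortex detour can therefore be replaced by a length-two hop through the corresponding $r_i$, so pairwise distances increase by at most an additive constant and $G'$ has diameter $O(D)$. Eppstein's bound~\cite{eppstein2000diameter} then furnishes a tree decomposition $\dectree'$ of $G'$ with bags of size $O((g+1)D)$.

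Next I would reinstate the vortices one by one, exactly as in \Cref{lemma:GenusVortex}: delete each hub $r_i$ from all bags of $\dectree'$, fix a vortex decomposition $\mathcal{P}_i$ of vortex $i$, and add each internal vortex vertex $v$ to every bag of $\dectree'$ that contains a boundary vertex lying on the arc $\mathcal{P}_i(v)$. Validity follows as in the single-vortex case: consecutive boundary vertices of an arc are adjacent, so the subtrees of bags containing them overlap in a chain and hence the bags containing $v$ form a connected subtree; the edges of $G$ among untouched vertices are already covered by $\dectree'$; and every edge incident to $v$---to a boundary vertex on its own arc, or to an internal vertex of an overlapping arc of the same vortex---is covered because those neighbors lie on arcs whose bags receive $v$. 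For the bag size, each vertex is incident to at most $\ell$ vortex faces, and the depth-$k$ condition makes it responsible for at most $k$ new internal vertices per incident vortex, so a bag of size $O((g+1)D)$ grows by a factor $O(k\ell)$, giving treewidth $O((g+1)k\ell D)$.

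The only mildly delicate points are bounding the diameter of $G'$ when a geodesic threads through several vortices, and counting correctly when one vertex lies on several vortex boundaries at once; both are bookkeeping rather than new ideas---in line with the paper's own remark that this warm-up case is not novel---and once \Cref{lemma:VortexTreewidth} is in hand, the theorem and its corollary follow from the one-line composition described above.
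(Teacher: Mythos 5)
Your proposal is correct and follows essentially the same route as the paper: the corollary is obtained by specializing the genus-$g$ theorem to $g=0$, which in turn is just \Cref{lemma:VortexTreewidth} composed with the treewidth-based shortcut construction of \Cref{theorem:shortcut-existence-on-bounded-treewidth}. Your expansion of the multi-vortex treewidth bound (hub vertices $r_i$, Eppstein's bound, reinserting internal vortex vertices into bags meeting their arcs, with the extra $k\ell$ factor) is exactly the generalization of \Cref{lemma:GenusVortex} that the paper asserts but does not spell out.
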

}

\fullOnly{
\subsubsection{Apex Graphs}

In this section, we add apices to $(0,g,k,l)$-almost-embeddable (``Genus+Vortex'') graphs. At first glance, the addition of an apex to a graph might seem trivial, since the graph only changes by one vertex, and using that vertex can only make the shortcuts better. However, notice that the diameter of the graph can shrink arbitrarily with the addition of an apex, and our shortcuts on the apex graph must be competitive with the new diameter. Hence, we need ideas beyond our shortcut constructions for the graph without the apex. For a simple example, in a cycle graph, shortcuts with quality $\Theta(n)$ are considered good. However, by adding a single central vertex, we can transform the graph into the wheel graph where ``good'' shortcuts should have quality $\Theta(1)$. While good shortcuts actually do exist in the wheel graph, there are examples of graphs with good shortcuts where adding a apex makes good shortcuts impossible.

To streamline our arguments for $(q,g,k,l)$-almost-embeddable graphs (``Apex+Genus+Vortex'') graphs, we will define a couple of intermediate properties which do not depend on the graph topology. More precisely, we will define the notions of \textbf{$\beta$-cell-assignment} and \textbf{$s$-combinatorial gates}. On a very high level, We will show that:
\begin{enumerate}
\im A Genus+Vortex graph has an $s$-combinatorial gate, for an appropriately chosen $s$. (\Cref{section:SatisfyingThe} and the Appendix)
\im Graphs with $s$-combinatorial gate are $\beta$-cell-assignable, for appropriately chosen $\beta$ and some technical stipulations. (\Cref{section:TheCellsparse})
\im Graphs that are $\beta$-cell-assignable and each cell locally admits good tree-restricted shortcuts also globally admit good tree-restricted shortcuts, barring various technicalities. (\Cref{section:FromCellsparse})
\end{enumerate}

In each part, we separately prove the statements with Genus+Vortex graphs replaced by planar graphs. It is recommended that the reader, in their first reading, focus only on the lemmas regarding planar graphs with a single apex, namely Lemmas \ref{lemma:SufficientCondition}, \ref{lemma:combinatorialToCellsparse}, \ref{lemma:PlanarCombinatorial}, and \ref{lemma:PlanarShortcut}.

\subsubsection{Cell Partitions, $\beta$-Cell-Assignment and $s$-Combinatorial Gate}\label{section:TheCellsparse}

In this section, we first introduce the notions of ``cell partitions'', ``$\beta$-cell-assignment'' and ``$s$-combinatorial gates''. Second, we prove that the second property implies the first.

\begin{definition}
  A \textbf{cell partition} of $G$ is simply a partition of $V_G$ into disjoint, connected components with a small diameter, called the \textbf{cells}.
\end{definition}
Note that the diameter condition is the only thing differentiating it from the definition of parts. It is helpful to think of cells as low-diameter components, whereas parts may be long and skinny. A canonical example for a cell partition is the following. Given an apex graph of diameter $D$, remove the apex and start a concurrent BFS from each node adjacent to the removed apex. Each node in the graph (except the apex) gets assigned to exactly one BFS component. We call such BFS components cells. For most of this section, we will ignore any extra property that a cell partition might have and assuming nothing besides them being disjoint, connected and having a controlled diameter.

A graph is cell-assignable if we can relate its cells and parts in a way that no cell is assigned to too many parts and parts are assigned to \textbf{almost all} intersecting cells.
\begin{definition}
  \label{def:cellsparse}
  A graph $G = (V_G, E_G)$ is $\beta$-cell-assignable if the following holds. For every valid family of parts $\mc P$ (as in \Cref{def:generalshortcut}) and every valid cell partition $\mc C$ of diameter $d$ there exists a relation $\mc R \subseteq \mc C \times \mc P$ with the following properties:
  \begin{itemize}
  \item[(i)] each part is in relation with \textbf{all} cells it intersects, \textbf{except} for at most 2 of them
  \item[(ii)] each cell is in relation with at most $\beta$ parts
  \end{itemize}
  Note: $\beta$ is a function of the cell diameter $d$.
\end{definition}

We will not prove directly that Genus+Vortex graphs are $\beta$-cell-assignable. Instead, we focus on a combinatorial property that we show implies cell-assignment. This property is called a ``combinatorial gate'' and it intuitively asserts that every two touching cells have a ``gate'' that covers all the edges between them. Furthermore, the boundary of such a gate is called a ``fence'' and its size should be controlled. The reader is encouraged to review \Cref{figure:planar-boundary} for a mental picture of combinatorial gates on a planar graph.

\BD
For a subset of vertices $S\s V$, define the $\partial S$ to be the set of vertices in $S$ on the boundary of $S$, i.e., the vertices in $S$ whose neighborhoods intersect $V\bs S$.
\ED

\BD \label{defn:CombGate}
Let $G=(V,E)$ be a graph from a family $\m F$, and let $\m C$ be a partition of $G$ into cells. We define a $s$-\textbf{combinatorial gate} to be a collection $\m S = \{ (F_i, S_i) \}_i$ where $F \subseteq V$ are called \textbf{fences}, $S \subseteq V$ are \textbf{gates}, and the following properties hold:
\BE
\im Fences are a subset of their corresponding gates. I.e., $F \subseteq S$ for all $(F, S) \in \m S$.
\im The boundary of a gate are included in its fence. I.e., $\partial(S) \subseteq F$ for all $(F, S) \in \m S$.
\im Each edge $\{a, b\} \in E$ whose endpoints are in different cells must be covered by some gate. I.e., $a \in S \land b \in S$ for some gate $S$.
\im Each gate $S$ intersects at most two cells in $\m C$.
\im The non-fence vertices of the gates are disjoint. I.e., for every $v \in V$ there is at most one $(F_i, S_i) \in \m S$ s.t. $v \in S_i \bs F_i$.
\im The average size of fences compared to the number of cells is at most $s$. I.e., $\sum_{(F, S) \in \m S} |F| \le s |\m C|$. 
\EE
\ED

%\gnote{change other stuff to match the new gate definition}

Since this condition is entirely combinatorial, the proofs that imply $\beta$-cell-assignment are also combinatorial. Therefore, these results are self-contained and disregard any possible structure in the graph, for example, planarity. Next, we prove that the $s$-combinatorial boundary implies $\beta$-cell-assignment via the following two lemmas.

\begin{lemma}\label{lemma:SufficientCondition}
Suppose a graph $G$ with cell partition $\m C$ has an $s$-combinatorial gate $\m S$. Then, for any collection of parts $\m P$, either  there exists a part intersecting at most two cells, or there exists a cell intersecting at most $2s$ parts.
\end{lemma}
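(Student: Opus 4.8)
The plan is to prove this by a double-counting argument. If some part meets at most two cells we are immediately done, so assume the contrary: every part $P \in \mathcal{P}$ meets at least three cells of $\mathcal{C}$. Write $m(P)$ for the number of cells met by $P$, so $\sum_{C \in \mathcal{C}} |\{P : P \cap C \neq \emptyset\}| = \sum_{P \in \mathcal{P}} m(P)$. Since some cell meets at most the average number of parts, it suffices to show that $\sum_{P} m(P) \le 2 s |\mathcal{C}|$.

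The first step is, for each part $P$, to record how it spans its cells. Since $G[P]$ is connected, contracting each cell inside $G[P]$ yields a connected (multi)graph on the $m(P)$ cells met by $P$; take a spanning tree of it and lift each of its $m(P)-1$ edges to an actual edge of $G$ with both endpoints in $P$ but in two distinct cells. This gives a set $\mathcal{E}_P$ of $m(P)-1$ cross-cell edges. By property (3) of a combinatorial gate (\Cref{defn:CombGate}) each $e \in \mathcal{E}_P$ is covered by some gate $S(e)$; and since a gate covering a cross-cell edge must, by property (4), meet exactly the two cells incident to that edge, the $m(P)-1$ tree edges of $\mathcal{E}_P$ receive pairwise-distinct gates. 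Thus $P$ ``uses'' exactly $m(P)-1$ distinct gates.

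The crux is the following claim: if $P$ uses a gate $S$ with fence $F$ --- say $S$ covers $e = \{a,b\} \in \mathcal{E}_P$ with $a \in C$, $b \in C'$ --- then $P$ contains a vertex of $F$. Indeed, $S$ meets only the cells $C$ and $C'$, whereas $P$ meets some third cell $C''$; so every vertex $v \in P \cap C''$ lies outside $S$, and a path inside $G[P]$ from $a \in S$ to $v \notin S$ must leave $S$ along some edge whose $S$-side endpoint lies in $\partial S \subseteq F$ (properties (1), (2)) and on the path, hence in $P$. This is precisely where the hypothesis ``every part meets $\ge 3$ cells'' is essential --- a part confined to two cells could lie entirely inside a single gate and meet none of its fence --- and I expect this to be the one genuinely non-routine step.

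To conclude, I would count the set of (part, gate-used-by-it) incidences two ways. On one side it equals $\sum_P (m(P)-1)$. On the other side, by the claim, every part using a gate $S$ contains a vertex of its fence $F$, and since distinct parts are vertex-disjoint, at most $|F|$ parts can use $S$; summing over gates and invoking property (6) gives $\sum_P (m(P)-1) \le \sum_{(F,S)\in\mathcal{S}} |F| \le s|\mathcal{C}|$. Finally, $m(P) \ge 3$ implies $m(P) \le 2(m(P)-1)$, so $\sum_P m(P) \le 2 s |\mathcal{C}|$, and averaging over the (nonempty) set of cells produces a cell meeting at most $2s$ parts.
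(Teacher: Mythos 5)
Your proof is correct and follows essentially the same route as the paper's: both arguments extract $m(P)-1$ cross-cell edges per part, use properties (3) and (4) to assign distinct gates, show a part meeting at least three cells must hit the fence via $\partial S\subseteq F$, and finish by double counting against $\sum |F| \le s|\mathcal{C}|$ with the disjointness of parts and a pigeonhole over cells. Your write-up is, if anything, slightly more explicit (the spanning-tree selection of cross-cell edges and the boundary-crossing path) than the paper's version of the same argument.
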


\BP
% \jnote{Proof very concise. Elaborate later.}
Let $\m I \subseteq \m P \times \m C$ be the set of pairs $(P, C)$ s.t. $P \cap C \neq \emptyset$. Let the ``degree of a part $P$'' be the cardinality $deg(P) := |\{ C \mid (P, C) \in \m I \}|$, and similarly define the ``degree of a cell $C$'' $deg(C)$.

We are done if there is a part with degree at most 2, hence we can assume $deg(P) \ge 3, \forall P \in \m C$. Fix a part $P$ and define $B_P = \{ i \mid P \cap F_i \neq \emptyset, F_i \text{ is a fence}\}$ be the indices of fences it intersects. Then $\sum_{P \in \m P} |B_P| \le \sum_i |F_i|$ since every fence vertex can be contained in at most 1 part.

Furthermore, fix a part $P$; we claim that $deg(P) \le |B_P|+1 \le 2|B_P|$. This paragraph proves the first inequality, the second being trivial. Since $P$ intersects $deg(P)$ many different cells, there must be $deg(P)-1$ edges whose endpoints are in different cells and are both in $P$; even more, the unordered pairs of cells the edges connect are different among edges. Property (3) of the combinatorial gate definition implies that all of these edges must be inside some gate $S_i$. However, it is impossible that $P \subseteq S_i$, otherwise property (4) would imply $deg(P) \le 2$. Therefore, $P$ must contain a vertex $\partial(S_i)$, which is also included in the fence $\partial(S_i) \subseteq F_i$ by property (2). We conclude that $i \in B_P$. Moreover, the $i$'s corresponding to different edges are distinct since the unordered pair of cells they are connecting is different.

We are now ready to prove the Lemma via the following claim: $|\m I| = \sum_{P \in \m P} \le 2 \sum_{P \in \m P} |B_P| \le 2s |\m C|$. Hence $\frac{|\m I|}{|\m C|} \le 2s$, implying there exists a cell with degree at most $2s$ by the pigeonhole principle.
\EP

\begin{lemma}\label{lemma:combinatorialToCellsparse}
Let $\m F$ be a family of graphs that is closed under taking minors. Suppose that there is a function $s(d):\N\to\N$ such that every graph $G\in\m F$ satisfies the following property:
\BI \im
  If $G$ has a cell partition of diameter $d$, then there exists an $s(d)$-combinatorial gate $\m S$ of subsets of $V(G)$.
\EI
Then, every graph $G\in\m F$ with a cell partition of diameter $d$ is $2s(d)$-cell-assignable.
\end{lemma}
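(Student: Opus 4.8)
The plan is to reduce the definition of $2s(d)$-cell-assignability (Definition \ref{def:cellsparse}) to the combinatorial consequence furnished by Lemma \ref{lemma:SufficientCondition}, but applied \emph{iteratively}. Fix a graph $G\in\m F$, a valid family of parts $\m P$, and a valid cell partition $\m C$ of diameter $d$. We must build a relation $\m R\s\m C\times\m P$ so that (i) every part is related to all but at most $2$ of the cells it intersects, and (ii) every cell is related to at most $2s(d)$ parts. My approach is to process the cells one at a time, in an order determined by repeated applications of the hypothesis, and to "retire" a cell once we have committed all its incident relations.

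The key observation is that the hypothesis of Lemma \ref{lemma:combinatorialToCellsparse} holds not just for $G$ but for \emph{every minor of $G$}, since $\m F$ is minor-closed. In particular, I want to run the following loop: while there is still an unprocessed cell, form the graph $G'$ obtained from $G$ by contracting each already-processed cell to a single vertex (or simply deleting it and its incident edges from consideration) — this $G'$ is a minor of $G$, hence in $\m F$, and the restriction of $\m C$ to the surviving vertices is a valid cell partition of diameter at most $d$. Apply Lemma \ref{lemma:SufficientCondition} to $G'$ with the induced parts. It yields either a surviving part $P$ intersecting at most two surviving cells, or a surviving cell $C$ intersecting at most $2s(d)$ surviving parts. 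In the first case, we have found a part whose "remaining" intersections number at most two, so we can safely declare it related to all its cells \emph{except those two} and remove the part from further consideration. In the second case, we take that cell $C$, put it in relation with all (at most $2s(d)$) surviving parts it meets, and retire the cell. Either way the instance shrinks, so the loop terminates.

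The correctness check is then the bookkeeping: condition (ii) is immediate because each cell, at the moment it is retired, is related only to the surviving parts intersecting it, of which there are at most $2s(d)$; it never gains new relations afterward. Condition (i) requires a bit more care: a part $P$ can "lose" a potential relation to a cell $C$ only in one of two ways — either $P$ was retired (first case) and $C$ was one of the $\le 2$ excepted cells still alive at that moment, or $C$ was retired while $P$ was already gone. I would argue that the second situation contributes nothing, since once $P$ is retired in the first case, every cell $C$ that $P$ still intersected and was not excepted \emph{does} get related to $P$ at the moment $P$ is retired (we add those pairs to $\m R$ then), and cells $P$ intersected but that were already retired before $P$ — those were retired in the second case, where \emph{all} surviving parts meeting them, including $P$, were related. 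So the only cells $P$ ends up unrelated to are the at most $2$ excepted at $P$'s retirement.

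The main obstacle I anticipate is precisely this last accounting argument — making sure that a part is never silently dropped from a cell's relation list in a way that violates the "all but $2$" guarantee, given that cells and parts are retired in an interleaved fashion. The cleanest way to handle it is probably to commit, at the moment a part is retired, \emph{all} of its not-yet-forbidden relations at once (to every currently-surviving cell it meets, minus the $\le 2$ witnesses), and to commit, at the moment a cell is retired, all of its relations to currently-surviving parts; then one checks that every pair $(C,P)$ with $C\cap P\ne\emptyset$ is either committed at one of these two events or is one of the explicitly-excepted pairs. A secondary, more minor subtlety is confirming that contracting retired cells genuinely yields a minor in $\m F$ with a legitimate cell partition of diameter $\le d$ and legitimate parts (connectivity of the induced parts must be preserved — but this is exactly the kind of contraction argument already used in the proof of Lemma \ref{lem:CliqueSumBad}, so it should go through without trouble).
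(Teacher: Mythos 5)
Your proposal is essentially the paper's proof: the paper runs the same induction on $|\m C|+|\m P|$, invoking \Cref{lemma:SufficientCondition} on the current (minor) instance, retiring a part in the first case (adding no new relations, exactly because all previously retired cells were already related to it and the $\le 2$ surviving cells it meets become its exceptions) and, in the second case, relating a cell to all of the $\le 2s(d)$ surviving parts it meets before retiring it, with precisely your bookkeeping for conditions (i) and (ii). The only spot where your wording would fail is the mechanism for removing a retired cell: deleting it, or contracting it to a single super-vertex, can disconnect a part that is connected only through that cell (and leaves a vertex belonging to no cell, so the remaining cells no longer partition the vertex set); the paper instead eliminates the cell one vertex at a time, contracting each vertex $v$ of the cell along an incident edge whose both endpoints lie in $v$'s part whenever such an edge exists (and along an arbitrary incident edge otherwise, which by connectedness of parts only happens for singleton parts), and this is what guarantees the invariants you correctly flag as needing verification---parts and remaining cells stay connected, the graph stays in the minor-closed family, and cell--part incidences are unchanged.
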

\begin{proof}
  Fix a graph $G\in\m F$, cells $\m C=\{C_1,\lds,C_{|\m C|}\}$ of diameter $d$, and parts $\m P=\{P_1,\lds,P_{|\m P|}\}$. We construct an assignment $\m R$ following \Cref{def:cellsparse}. Assume that $G$ is connected; otherwise, we can repeat the argument below on each connected component of $G$. We proceed by induction on $|\m C|+|\m P|$, with the base case $|\m C|=1$ or $|\m P|=1$ being trivial.

  Suppose that $|\m C|>1$ and $|\m P|>1$. By \Cref{lemma:SufficientCondition}, either there is a part $P\in\m P$ intersecting at most two cells, or there exists a cell intersecting at most $2s(d)$ parts. In the former case, we do not assign any cell to $P$ in $\m R$ and proceed by induction on the instance $(G,\m C,\m P\bs P)$.

  In the latter case, we find a cell $C\in\m C$ intersecting at most $2s(d)$ parts and $\mathcal{R}$-assign $C$ to all parts it intersects. Then, iteratively remove $C$ from the graph by contractions. Repeatedly pick any remaining $v \in C$. If $v$ belongs to some part $P\in\m P$ and has a neighbor in $P$, then contract $v$ along any incident edge that has both of its endpoints in $P$. Note that, by the connectedness of $P$, $v \in P$ must have a neighbor in $P$ unless $P = \{ v \}$. Otherwise, contract a vertex $v$ it along any incident edge.

  Let $G'$ be the resulting graph, let $\m P'=\{P \setminus C \mid P\in\m P\}$ be the new partition, and  $\m C'=\m C\bs C$ be the remaining cells. Note that, by our edge contraction scheme, all parts in $\m P'$ remain connected in $G'$, all remaining cells remain connected, and incidences between the remaining cells and parts are unchanged. In addition, since the graph family $\m F$ is closed under edge contraction, $G'\in\m F$. We apply induction on the instance $(G', \m C', \m P')$ and union the resulting relation $\mathcal{R}'$ with the assignments made in the current iteration.
\end{proof}

While \Cref{lemma:combinatorialToCellsparse} works well for planar graphs that are closed under taking minors, Genus+Vortex graphs do not have that property due to the existence of a bounded number of vortices. In particular, if one contracts an edge inside the vortex, the resulting graph is not Genus+Vortex. Therefore, we will deal with cells touching vortices as ``special cells'' that are not allowed to be contracted.

\begin{lemma}\label{lemma:planar-apex-relation}
Let $\m F$ be a family of graphs, not necessarily closed under taking minors. Suppose that there is a function $s(d):\N\to\N$ such that every graph $G\in\m F$ satisfies the following property: 
\BI \im
  If $G$ has a cell partition of diameter $d$, then there exists an $s(d)$-combinatorial collection $\m S$ of subsets of $V(G)$.
\EI
Consider a graph $G\in\m F$ with a cell partition into two types of cells---normal cells and $\ell$ special cells---both of diameter $d$. Let $E^*$ denote the set of edges in special cells. Assume that any graph $G'$ obtained by deleting vertices and contracting edges outside of special cells is still in $\m F$. Then, $G$ is $2\ell s(d)$-cell-assignable with respect to a cell partition of only the normal cells.
\end{lemma}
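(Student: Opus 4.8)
\textbf{Proof proposal for \Cref{lemma:planar-apex-relation}.} The plan is to mimic the induction in the proof of \Cref{lemma:combinatorialToCellsparse}, but carefully isolating the $\ell$ special cells so that we never contract any edge in $E^*$. The key change is a ``charging'' argument that accounts for the extra factor of $\ell$: instead of always finding a single cell of low degree, we find a \emph{normal} cell of low degree (relative to the parts it intersects, ignoring assignments to special cells), and the additional blow-up comes from having to route around the special cells.

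First I would set up the induction on $|\m C_{\text{normal}}| + |\m P|$, where $\m C_{\text{normal}}$ is the set of normal cells, treating the $\ell$ special cells as a fixed part of the instance that never shrinks. The base cases ($|\m C_{\text{normal}}| \le 1$ or $|\m P| \le 1$) are again trivial: assign each part to all normal cells it meets. For the inductive step I would apply \Cref{lemma:SufficientCondition} to the full cell partition (all $|\m C_{\text{normal}}| + \ell$ cells and the $s(d)$-combinatorial gate $\m S$). This yields either (a) a part $P$ intersecting at most two cells total, or (b) \emph{some} cell intersecting at most $2s(d)$ parts. In case (a), as before, $P$ is allowed to omit up to $2$ of its intersecting cells in $\m R$ (Definition~\ref{def:cellsparse}(i)), so we drop $P$ and recurse on $(G, \m C, \m P \setminus P)$. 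In case (b), if the low-degree cell happens to be a normal cell $C$, we proceed exactly as in \Cref{lemma:combinatorialToCellsparse}: $\m R$-assign $C$ to all $\le 2s(d)$ parts it meets, then eliminate $C$ by the same edge-contraction scheme (contracting inside parts where possible, arbitrary incident edges otherwise) — all such contractions are outside $E^*$ since $C$ is normal, so $G' \in \m F$ by hypothesis, and recurse on the smaller instance.

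The main obstacle is case (b) when the low-degree cell is a \emph{special} cell, which we are forbidden to contract. The fix is to bound how often this can happen: there are only $\ell$ special cells, so we cannot simply loop. Instead I would argue by a pigeonhole-with-$\ell$ version of \Cref{lemma:SufficientCondition}: restricting attention to incidences between parts and \emph{normal} cells, either some part meets at most two normal cells (handle as case (a), noting a part meeting $\le 2$ normal cells may still meet special cells, but Definition~\ref{def:cellsparse}(i) only demands it be related to all-but-two \emph{normal} cells — wait, one must be careful here, so the cleaner route is:) or $|\m I_{\text{normal}}| / |\m C_{\text{normal}}| \le 2s(d) \cdot \ell$, because each part intersecting $\ge 3$ cells total contributes $\ge 1$ to $|B_P|$ via some cross-cell edge, but now when we sum $\sum_P |B_P| \le \sum_i |F_i| \le s(d)|\m C|  = s(d)(|\m C_{\text{normal}}| + \ell)$, and since special cells can ``absorb'' at most $\ell$ worth of each part's incidences we get the extra factor. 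This gives a normal cell of degree $\le 2\ell s(d)$, which we then eliminate by contraction as above; the $\ell$ special cells and the edges $E^*$ are untouched throughout, so every intermediate graph stays in $\m F$.

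I expect the delicate point to be the exact bookkeeping in this modified counting argument — specifically, justifying that the presence of $\ell$ uncontractible special cells inflates the worst-case normal-cell degree by exactly a factor of $\ell$ (rather than, say, an additive $\ell s(d)$ or a factor of $\ell+1$), and checking that Definition~\ref{def:cellsparse}(i)'s allowance of ``at most $2$'' omitted cells is still enough when a part that meets few normal cells may additionally meet some special cells (one would argue the special cells a part meets are never the omitted ones, since we never assign to special cells at all in $\m R$ — the relation $\m R$ is over $\m C_{\text{normal}} \times \m P$ only). Once the counting is pinned down, the induction closes exactly as in \Cref{lemma:combinatorialToCellsparse}, with $\beta = 2\ell s(d)$, since in every step we $\m R$-assign the eliminated normal cell to at most $2\ell s(d)$ parts and all parts remain connected and retain their incidences with the surviving normal cells.
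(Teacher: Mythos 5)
Your proposal takes essentially the same route as the paper's proof: induct on $|\m C^0|+|\m P|$, drop any part meeting at most two cells, otherwise run the counting argument of \Cref{lemma:SufficientCondition} over the full partition (normal plus special cells) and pigeonhole over the normal cells only, so that the eliminated cell is always normal and no edge in a special cell is ever contracted, keeping the graph in $\m F$. The bookkeeping you flagged as delicate is settled in the paper by the one-line inequality $|\m C^0\cup\m C^*|\le \ell\,|\m C^0|$ (valid once $|\m C^0|\ge 2$), which converts the bound $2s(d)\,|\m C^0\cup\m C^*|$ on total part--cell incidences into $2\ell s(d)\,|\m C^0|$, giving exactly the multiplicative factor of $\ell$ you conjectured.
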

\BP
Fix a graph $G\in\m F$, normal cells $\m C^0$, special cells $\m C^*$, and parts $\m P=\{P_1,\lds,P_{|\m P|}\}$.
Similarly to \Cref{lemma:combinatorialToCellsparse}. we proceed by induction on $|\m C^0|+|\m P|$, with the same base case being trivial.

Suppose that $|\m C^0|>1$ and $|\m P|>1$. If there is a part $P\in\m P$ intersecting at most two cells, then we proceed as in \Cref{lemma:combinatorialToCellsparse}. Otherwise, as in the proof of \Cref{lemma:SufficientCondition}, we show that the total number of pairs $(P,C)$ where a part $P\in\m P$ intersects a cell $C\in\m C\cup\m C^*$ is at most $2s(d)|\m C^0\cup\m C^*|$. Since $|\m C^0|\ge|\m C^0\cup\m C^*|-\ell\ge|\m C^0\cup\m C^*|/\ell$, the number of $(P,C)$ is at most $2\ell s(d)|\m C^0|$, so there exists a normal cell intersecting at most $2\ell s(d)$ parts. The rest of the proof is identical to that in \Cref{lemma:combinatorialToCellsparse}, except we note that since we only remove vertices in a normal cell, the new graph $G'$ still satisfies the conditions in the lemma.
\EP

\subsubsection{Graphs with $s$-Combinatorial Gate Property}\label{section:SatisfyingThe}

In this section, we show that Genus+Vortex graphs satisfy the $s$-combinatorial property. We highlight our main ideas by proving the statement for planar graphs, and defer the rest to the Appendix.

\begin{lemma}\label{lemma:PlanarCombinatorial}
Let $G$ be a planar graph with a cell partition of diameter $d$.
Then, there is an $36d$-combinatorial gate $\m S$.
\end{lemma}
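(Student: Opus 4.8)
The plan is to build the combinatorial gate directly from the planar embedding, using the fact that the cells are connected and have diameter $d$. First I would fix a planar embedding of $G$ and contract each cell $C \in \m C$ to a single supernode; the resulting multigraph $G/\m C$ is planar, and an edge between two supernodes corresponds to a nonempty set of cross-cell edges of $G$. For each face and each boundary-walk structure of $G/\m C$, I want to identify, for every pair of adjacent cells $C, C'$, a small set of vertices of $G$ that together form a gate $S_{C,C'}$ covering all edges of $E(G)$ with one endpoint in $C$ and one in $C'$. The natural candidate for the non-fence interior of such a gate is a short path (or small subtree) inside $C$ together with a short path inside $C'$ connecting up the attachment points of those cross edges; since $\diam(C) \le d$, such paths have length $O(d)$. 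The fence $F_{C,C'}$ will consist of the boundary vertices of this gate, i.e. the $O(d)$ vertices that also touch other cells or other gates, so $|F_{C,C'}| = O(d)$.

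The key accounting step is property (6): $\sum |F| \le 36d\,|\m C|$. Here planarity enters crucially. The number of pairs $(C, C')$ of cells that are genuinely adjacent — and hence get a gate — is at most the number of edges of the planar simple graph obtained from $G/\m C$ by deleting loops and parallel edges, which is at most $3|\m C| - 6 < 3|\m C|$ by Euler's formula. Since each fence has size $O(d)$ — and by carefully tracking constants, at most $12d$ — the total is bounded by $3|\m C| \cdot 12 d = 36 d |\m C|$, as claimed. Properties (1)–(2) are immediate from the construction (the fence is by definition the boundary of the gate and a subset of it), and property (4) holds because each gate is built from exactly two cells. Property (3) is exactly the covering guarantee we arrange: every cross-cell edge $\{a,b\}$ with $a \in C$, $b \in C'$ has both endpoints in $S_{C,C'}$ because we include the attachment points of all such edges in the gate.

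The step I expect to be the main obstacle is property (5): the non-fence vertices of distinct gates must be disjoint. A vertex $v \in C$ might lie on the connecting paths we chose for several different adjacent pairs $(C, C_1), (C, C_2), \ldots$, and there is no a priori reason these interiors avoid each other. The fix is to choose the in-cell connecting structure globally rather than per-pair: inside each cell $C$, build a single BFS tree $T_C$ of $G[C]$ rooted arbitrarily, of depth $\le d$, and let the gate $S_{C,C'}$ use only edges of $T_C$ (and symmetrically $T_{C'}$) to connect its attachment points; then declare every vertex of $T_C$ that is used by more than one gate, or that is a boundary vertex of $C$, to be a fence vertex. This pushes all "shared" vertices into the fences, restoring disjointness of the gate interiors at the cost of possibly enlarging the fences — but the enlargement is still $O(d)$ per cell-pair since $T_C$ has at most... one must argue the total fence blow-up across all gates touching $C$ is $O(d \cdot \deg_{G/\m C}(C))$, and summing the degrees over the planar graph $G/\m C$ again gives $O(d|\m C|)$. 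Balancing these constants to land at exactly $36d$ is the routine-but-delicate part; the conceptual content is the planar edge bound plus the BFS-tree trick to enforce property (5).
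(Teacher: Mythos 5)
There is a genuine gap, and it sits exactly where properties (2), (3) and (6) interact. Property (3) forces the gate $S_{C,C'}$ to contain \emph{both endpoints of every} $(C,C')$-inter-cell edge, and a cell of diameter $d$ can have arbitrarily many vertices, so two adjacent cells can have an unbounded number of cross edges and attachment points (think of two diameter-$2$ stars joined by $\Omega(n)$ parallel-running edges between their leaves). Your gate is a ``thin'' object: attachment points plus connecting paths inside $T_C$ and $T_{C'}$. For such a gate, almost every attachment point has a neighbor in its own cell that is \emph{not} in the gate, so it belongs to $\partial S$, and property (2) forces it into the fence. Hence the fence of a single gate cannot be bounded by $O(d)$ at all --- your identification of the boundary with ``the vertices that also touch other cells or other gates'' misreads the definition $\partial S = \{v \in S : N(v) \cap (V\setminus S) \neq \emptyset\}$, and once it is read correctly the accounting $\sum_{(F,S)}|F| \le 36d\,|\m C|$ collapses. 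The planar edge bound $|E(G/\m C)| \le 3|\m C|-6$ and the per-cell BFS tree are the right ingredients (the paper uses both), but they do not rescue a fence bound that is false per gate. Your patch for property (5) has the same flavor of trouble: the in-cell subtree connecting all attachment points of one gate can already have size $\omega(d)$ before any sharing occurs, so the claimed $O(d\cdot\deg_{G/\m C}(C))$ blow-up is not justified.

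The paper's proof resolves precisely this tension by making the gates ``two-dimensional'' rather than thin. For each adjacent pair $(C_i,C_j)$ it picks the two \emph{extremal} inter-cell edges $e_L,e_R$ in the embedding and forms the cycle $K_{ij}$ through them using the trees $T_i,T_j$; this cycle has only $O(d)$ vertices, yet the closed region $reg(K_{ij})$ it bounds encloses \emph{all} $(C_i,C_j)$-inter-cell edges. The regions $\{reg(K)\}$ form a laminar family; the gate is taken to be all vertices of $C_i\cup C_j$ lying in $reg(K_{ij})$ minus the interiors of nested regions, and the fence is the part of the gate on the region's topological boundary. Then $\partial S_{ij}$ can only contain vertices on $K_{ij}$ or on maximal nested cycles (an edge cannot cross a cycle in a planar embedding), laminarity gives property (5) for free, and charging each fence vertex to the cycle it lies on --- each cycle being charged at most twice, with $|K|\le 4d+2$ --- yields $\sum|F| \le 3|\m C|\cdot 12d = 36d\,|\m C|$. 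If you want to salvage your write-up, the missing idea you need is exactly this region-based (extremal-edge plus laminarity) definition of the gates and fences; a subtree-of-attachment-points gate cannot satisfy (2), (3), (5) and (6) simultaneously.
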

\begin{proof}

\begin{figure}
  \centering
  \includegraphics[scale=1]{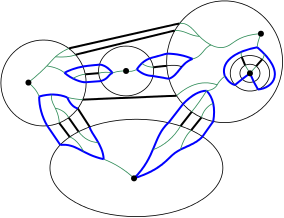}
  \qquad
  \includegraphics[scale=1]{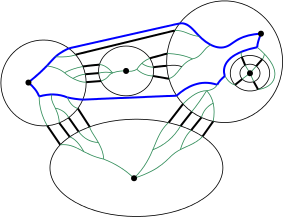}
  \caption{Graphical overview of our boundary construction. The black circles are the cells. Note that there is a cell completely contained inside another cell in the planar embedding. The green edges are the spanning trees $T_i$ as defined in the proof. The blue edges form our boundary construction. }
  \label{figure:planar-boundary}
\end{figure}

Fix a planar embedding of $G$ in the planar region $\R^2$. Define an auxiliary graph $A$ formed by contracting each cell into a single vertex, then removing parallel edges. In other words, two vertices in $A$ are adjacent iff their corresponding cells are connected by an edge; we call two such cells ``adjacent''. Our goal is to, for each pair of adjacent cells, define a closed loop that separates the planar embedding in a laminar way\footnote{A family of sets is laminar when any two members are either disjoint or one is a subset of another.}. \Cref{figure:planar-boundary} gives a graphical overview of our boundary construction, which we make more precise below. We note for later that the planarity of $A$ implies $|E(A)| \le 3|V(A)| - 6 = 3|\m C| - 6$.

For each cell $C_i$, define $T_i$ to be a spanning tree of $C_i$ with diameter at most $d$. For any two cells $C_i,C_j\in\m C$, define the set of ``$(C_i,C_j)$-inter-cell edges'' as those in $F$ that connect the two cells. Given two $(C_i,C_j)$-inter-cell edges $e_u:=(u_i,u_j)$ and $e_v:=(v_i,v_j)$, define the ``cycle along $e_u$ and $e_v$'' to be the union of edge $e_u$, edge $e_v$, the path along $T_i$ from $u_i$ to $v_i$, and the path along $T_j$ from $u_j$ to $v_j$. We denote the cycle by $cyc(e_u, e_v)$. Note that every such cycle has at most $4d + 2$ vertices.  Our next step is to find two special $(C_i,C_j)$-inter-cell edges $e_L$ and $e_R$ such that if we consider the cycle along $e_L$ and $e_R$, the boundary and interior of this loop, i.e., the set of points in the plane enclosed by the cycle, contain all $(C_i,C_j)$-inter-cell edges. We call $e_L$ and $e_R$ the ``extremal edges'' between $C_i$ and $C_j$.
\begin{figure}
  \centering
  \includegraphics[scale=1]{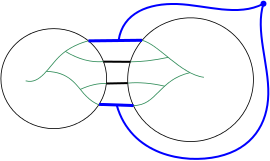}
  \qquad
  \includegraphics[scale=1]{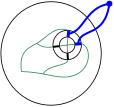}
  \caption{Definition of extremal edges between two different cells. $T_i$ and $T_j$ are shown in green. The extremal edges are the blue edges reachable from the outer face, as indicated by the paths in blue, while the other inter-cell edges are the black edges.}
  \label{figure:extremal}
\end{figure}

Intuitively, we choose the extremal edges to be the ``left-most'' and ``right-most'' $(C_i,C_j)$-inter-cell edges, but these can be formally defined as follows (see \Cref{figure:extremal}). Define a planar graph $T_{ij}$ to be the union of $T_i$, $T_j$, and all $(C_i, C_j)$-inter-cell edges. Note that $T_{ij} \subseteq G$ inherits the planar embedding from $G$. Draw a loop starting and ending in the outside face that encloses $T_i$ but lies outside of $T_j$; this can always be done because the trees are  disjoint. The first and last $(C_i,C_j)$-inter-cell edge intersected by the loop must lie on the outside face and form our extremal edges $e_L, e_R$\footnote{If there is only one $(C_i,C_j)$-inter-cell edge, then we set both $e_L$ and $e_R$ to that edge.}.

%Using the two paths along the two extremal edges, we now define the \textit{(closed) loop between $A$ and $B$} as the union of these paths. Observe that over all pairs of adjacent cells, two loops cannot cross each other (in the planar sense), but one can be nested in another.

%\newcommand{\ol}{\overline}
Take all pairs of adjacent cells $(C_i, C_j)$ and consider the cycle $K_{ij}\s G$ along their extremal edges $cyc(e_L, e_R)$. Let $\mathcal{K}$ be the set of all such cycles.

For a cycle $K \in \m K$ define the set of points in the embedding enclosed within the cycle as $reg(K)$. Note that $reg(K)$ is closed set. An important property of $\{ reg(K) \mid K \in \m K \}$ is that two $reg(K), reg(K')$ are either disjoint or one is a subset of another, i.e., they form a laminar family and the notions of minimal cycle and maximal cycle are well-defined inside the family.

We are ready to construct the combinatorial gate. Given a cycle $K_{ij} \in \m K$, let $own(K) := reg(K) \setminus \left( \cup_{reg(K') \subseteq reg(K)} int(reg(K')) \right)$, where $int(\cdot)$ is the topological interior. In other words, $reg(K)$ is the set of points in the embedding that are enclosed in the cycle, but are outside of the strict interiors of any other cycles $K' \in \m K$ enclosed in $K$. Let $S_{ij} \subseteq V(G)$ be the set of vertices $v \in C_i \cup C_j$ where the corresponding point in the embedding $p_v \in \mathbb{R}^2$ is in $own(K)$. Similarly, let $F_{ij} \subseteq V(G)$ be the set of vertices $v \in C_i \cup C_j$ which are not in the topological interior of $own(K_{ij})$ (again, in terms of the embedding). The combinatorial gate is then defined as $\mc S := \{ (F_{ij}, S_{ij}) \mid K_{ij} \in \mc K \}$.

Property (1), i.e., $F_{ij} \subseteq S_{ij}$, and Property (4), i.e., $S_{ij} \subseteq C_i \cap C_j$, trivially follow from the definition. Property (5): the laminarity of $\{ reg(K) \mid K \in \m K \}$ implies that $\{ own(K) \mid K \in \m K \}$ can only share a boundary, hence their interiors are disjoint; therefore, non-fence vertices have at most one $own(K)$ region they are contained in, implying the Property.

Property (3): fix any $(C_i, C_j)$-inter-cell edge $e$. By construction, $p_e \subseteq reg(K_{ij})$, where $p_e \subseteq \mathbb{R}^2$ is the set of points the edge corresponds to in the embedding. We also claim that $p_e \subseteq own(K_{ij})$, which would imply the Property. Assume this is not true, then $p_e \subseteq reg(K_{i'j'})$ for some $K_{i'j'} \neq K_{ij}$ such that $reg(K_{i'j'}) \subseteq reg(K_{ij})$. We can assume without loss of generality that $i \not \in \{i', j'\}$. By planarity of the graph and the connectedness of $C_{i}$, the points in the embedding corresponding with $C_{i}$ would have to lie inside of $reg(K_{i'j'})$ since no edge of $C_i$ can cross $K_{i'j'} \subseteq C_{i'} \cup C_{j'}$. But that contradicts the assumption that $reg(K_{i'j'}) \subseteq reg(K_{ij})$, implying the Property.

%We now claim that if a part $P$ intersects more than two cells, then $P$ must intersect some cycle in $\m K$. This claim essentially follows from the observation that, for any two adjacent cells $C_i,C_j\in\m C$ with cycle $K\in\m K$, all $(C_i,C_j)$-inter-cell edges lie inside $K$ (i.e., in the planar embedding), but outside of the strict interior of any cycle nested within $K$. Therefore, if a part $P$ crosses a $(C_i,C_j)$-inter-cell edge, then $P$ can only cross between $C_i$ and $C_j$. %Suppose for contraction that $P$ intersects three cells, but does not intersect any cycle in $\m K$. Consider $P$ in the planar embedding, let $K\in\m K$ be the minimal cycle enclosing $P$, and suppose that $K$ is the cycle between adjacent cells $C_i$ and $C_j$. If $P$ intersects some other cell, then $P$ must go through an edge connecting either $C_i$ or $C_j$ to some other cell $C_k$. Supposing that $P$ connects $C_i$ and $C_k$, we get that the cycle $K'$ between $C_i$ and $C_k$ also encloses $P$. 

%Thus, the cycles in $K$ are good candidates for our construction of an $18d$-combinatorial gate $\m S$. More precisely, let $S_K\s V$ be all vertices, indexed by $K$, whose planar embedding lies on the boundary or interior of $K$, but not \textit{strictly} inside any cycle nested within $K$. %Then, to fulfill the condition that all vertices are covered, we add the set $S_0$ of all vertices not strictly inside \textit{any} (maximal) cycle in $K$ to our collection, so that $\m S := \{S_K : K \in \m K\} \cup \{S_0\}$.

Finally, we prove Property (6) with the parameter $s := 36|\m C|d$. Every fence vertex $v \in F_{ij}$ must either lie on $K_{ij}$ or on a \emph{maximal} cycle nested within $K_{ij}$. If $v \in K_{ij}$, we will charge it to $K_{ij}$; otherwise, if $v$ is on a maximal nested cycle $K' \in \m K$ inside of $K_{ij}$, we charge it to $K'$. Note that for any $K \in \m K$, only vertices on $K$ and the the unique enclosing cycle (if one exists) can charge to $K$. Therefore, the total number of vertices charged to any $K \in K$ is at most $2 \cdot \max_{K' \in \m K} |K'| \le 2 \cdot (4d + 2) \le 12d$, from before. The number of incident cells, $|\m K|$, is equal to $|E(A)| \le 3|\m C| - 6 \le 3|\m C|$; leading to $\sum_{(F, S) \in \m S} |F| \le 3|\m C| \cdot 12d \le 36|\m C|$.

%We have $|K|\le 4d+2$ from before, along with $|\m K|=|E_A| \le 3|V_A|-6=3|\m C|-6$ since $A$ is planar, giving us the desired bound $|B|\le (4d+2)(3|\m C|-6)\le 18|\m C|d$.
\end{proof}

 The full result is rather technical and is proved in \Cref{sec:proof:gateInGenusVortex}.

\begin{restatable}{lemma}{GenusVortexCombinatorial}
  \label{lemma:GenusVortexCombinatorial}
Let $G$ be a genus-$g$ graph with (a possibly unbounded number of) vortices of depth $k$, and consider a cell partition of diameter $d$ such that no vortex is split between more than one cell.
%  Consider a Genus-$g$+Vortex-$k$ graph $OG$ containing vortices of depth $k$ and a diameter-$d$ cell partition $\mathcal{C}_{OG}$ such that no vortex is split between more than one cell.
        Then, there exists  an $O((g+1)kd)$-combinatorial gate of $G$. 
\end{restatable}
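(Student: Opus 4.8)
The plan is to reduce the genus-$g$-with-vortices case to the planar case handled in Lemma~\ref{lemma:PlanarCombinatorial}, paying a multiplicative price of $O(g+1)$ for the genus and an additive/multiplicative price of $O(k)$ for each vortex. First I would dispose of the vortices by a surgery analogous to the one in the proof of Lemma~\ref{lemma:GenusVortex}: since no vortex is split across more than one cell, I contract each vortex into its (single) host cell, i.e. replace each vortex face and its internal vortex nodes by a structure that keeps the host cell connected with diameter $O(kd)$ while preserving all edges leaving the host cell. Concretely, for a vortex of depth $k$ inside cell $C$, an internal vortex node $v_A$ is adjacent only to boundary vertices on its arc $A$; routing $v_A$ through the arc endpoints shows that deleting $v_A$ increases intra-$C$ distances by at most a factor depending on $k$, and the host cell's diameter grows to $O(kd)$. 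After this step we have a genus-$g$ graph (no vortices) with a cell partition of diameter $O(kd)$, so it suffices to prove an $O((g+1)\cdot(kd))$-combinatorial gate bound for genus-$g$ graphs; the vortex factor $k$ is then absorbed into the ``$d$'' slot.

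Next I would handle the genus. The cleanest route is to cut the surface open along $O(g)$ disjoint non-separating curves to obtain a planar graph, but doing this vertex-disjointly from the cells is delicate; instead I would use the auxiliary-graph viewpoint from Lemma~\ref{lemma:PlanarCombinatorial} directly on the surface. Define the auxiliary graph $A$ by contracting each cell to a point and deleting parallel edges; $A$ embeds in the genus-$g$ surface, so by Euler's formula $|E(A)| \le 3|V(A)| + O(g) = 3|\mathcal C| + O(g)$ (or more precisely $|E(A)|\le 3|\mathcal C| - 6 + 6g$). Then, exactly as before, for each pair of adjacent cells $C_i, C_j$ I build spanning trees $T_i, T_j$ of diameter $O(kd)$, form cycles $cyc(e_u,e_v)$ through extremal inter-cell edges, and define fences/gates from the regions they enclose. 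The two features lost by going to a surface are: (a) regions need no longer form a laminar family, and (b) a cycle $cyc(e_L,e_R)$ need not separate the surface at all. I would restore enough structure by working face-by-face after cutting: lift to a planar drawing in a polygon with $O(g)$ boundary identifications, apply the planar construction there, and charge the $O(g)$ genus-handles' worth of ``bad'' gates (those straddling a cut) separately. Each such bad gate has fence size $O(kd)$ and there are only $O(g)$ of them per cell adjacency passing through a cut, giving a total extra fence cost of $O(g)\cdot O(kd)\cdot|\mathcal C|$.

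Assembling: every gate has fence at most $2\cdot\max|K'| = O(kd)$ vertices by the cycle-length bound, the number of gates is $|E(A)| = O(|\mathcal C| + g) = O((g+1)|\mathcal C|)$ (since $g$ is a fixed constant, $g \le g\cdot|\mathcal C|$), and each vertex lies in $O(1)$ non-fence gate interiors after the laminar/cut bookkeeping; summing $\sum_{(F,S)\in\mathcal S}|F| \le O((g+1)|\mathcal C|)\cdot O(kd) = O((g+1)kd)\cdot|\mathcal C|$, which is the claimed $s = O((g+1)kd)$. Properties (1)--(5) of Definition~\ref{defn:CombGate} transfer from the planar construction with the only real change being Property (3): an inter-cell edge now lives in a region on the cut surface, and I must argue it is covered by some gate even when the enclosing cycle does not globally separate — here I use that within each planar piece of the cut-open surface the old argument (connectedness of $C_i$ plus no edge of $C_i$ crosses a cycle lying in $C_{i'}\cup C_{j'}$) still applies. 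The main obstacle I expect is precisely this loss of the separating/laminar property on a surface of positive genus: making the ``cut the surface, argue in each planar piece, pay $O(g)$ for the seams'' step fully rigorous — in particular choosing the cutting curves so that they interact with the cell spanning trees in a controlled way — is where the technical weight of the full proof in Appendix~\ref{sec:proof:gateInGenusVortex} will lie.
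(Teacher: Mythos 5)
Your high-level skeleton (remove vortices, cut the surface open to reduce to the planar gate construction) matches the paper's strategy, but two steps as written do not go through. First, the vortex step: deleting an internal vortex node and ``routing through the arc endpoints'' does \emph{not} keep the host cell's diameter at $O(kd)$, because the depth $k$ bounds how many arcs contain a given boundary vertex, not the length of an arc. A single internal vortex node may be joined to every vertex of an arbitrarily long boundary cycle (a depth-$1$ vortex): the cell with the vortex has diameter $2$, but after deleting the internal node the diameter becomes $\Theta(n)$, so ``absorbing $k$ into the $d$ slot'' fails. The paper instead replaces each vortex's internals by a single star vertex placed in the vortex face (this keeps the graph genus-$g$ and the cell diameter $d+1$), proves the $O((g+1)d)$ gate for that graph, and only then \emph{expands} each gate/fence back to the original graph, mapping a boundary vertex to itself plus the at most $k$ internal vortex nodes whose arcs contain it; the factor $k$ is paid in the fence-size sum, not in the diameter. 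Relatedly, even with a correct surrogate graph you never lift the gate back to $G$: a gate computed in the vortex-free graph can contain boundary-cycle vertices whose internal-vortex neighbors lie outside it, so Property~(2) ($\partial S\subseteq F$), as well as (5) and (6), must be re-verified on $G$ after an explicit expansion, which is exactly the case analysis the appendix carries out.

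Second, the genus step is where you yourself flag the missing weight, and the paper's mechanism is concretely different from ``auxiliary graph on the surface plus $O(g)$ bad gates.'' The paper builds the spanning tree $T$ from per-cell spanning trees, uses a tree--cotree decomposition (Eppstein) to select $g$ generating cycles induced by non-tree edges, and cuts along them to obtain a planar graph in which every duplicated vertex lies on the outer face; it then shows each cell splits into only $O(g+1)$ subcells of diameter $O(d)$, applies a \emph{strengthened} planar lemma (outer-face vertices of a gate must lie in its fence -- needed because duplicated vertices are glued back together), and projects the planar gate to the surface graph, re-checking all six properties under the projection. Your accounting (``$|E(A)|\le 3|\mathcal C|-6+6g$ plus $O(g)$ bad gates of fence size $O(kd)$ per adjacency'') does not address the real obstructions you correctly identify: cycles through extremal edges need not separate the surface, ``extremal'' is not even well defined without an outer face, and vertices duplicated by the cut can end up in a gate but not its fence, breaking Properties~(2) and~(5) after regluing. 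Without the cell-splitting bound, the outer-face strengthening, and the projection verification, the claimed $O((g+1)kd)$ bound is not established.
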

%\gnote{this theorem is never referenced anywhere? (except in the proof)}

\subsubsection{Wrapping Up: From $\beta$-Cell-Assignment to Good Shortcuts}\label{section:FromCellsparse}

In this section, we finalize our proof for tree-restricted shortcuts in almost embeddable graphs. We do this by showing that if an $(0, g, k, l)$-almost-embeddable (``Genus+Vortex'') graph is $\beta$-cell-assignable for small enough parameter $\beta$, then the same graph with $q$ added apices admits good tree-restricted shortcuts. We first assume that the apex graph has exactly one apex, then establish a simple reduction from the multiple apices case. We begin with the same statement for $(1,0,0,0)$-almost-embeddable (``Apex+Planar'') graphs.

\begin{lemma}\label{lemma:PlanarShortcut}
Let $G$ be a planar graph with a single apex and a diameter $d_T$ spanning tree $T$ of $G$. For a given set of parts, there exists a $T$-restricted shortcut with block parameter $O(\log d_T)$ and congestion $O(d_T\log d_T)$.
\end{lemma}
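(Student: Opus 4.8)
The goal is to build a $T$-restricted shortcut on a planar-plus-one-apex graph $G$ whose block parameter is $O(\log d_T)$ and whose congestion is $O(d_T\log d_T)$, matching what planar graphs alone achieve. The plan is to reduce to the planar case via a cell partition, using the machinery already set up in \Cref{section:TheCellsparse}. First I would let $a$ be the apex and $G_0 = G - a$ the underlying planar graph. Using the spanning tree $T$, produce a cell partition $\m C$ of $G_0$ by rooting a BFS forest at the neighbors of $a$ in $G$: each non-apex vertex is assigned to the cell of the closest $a$-neighbor (ties broken consistently), so each cell has diameter $O(d_T)$, is connected, and the cells are disjoint. This is exactly the canonical cell partition described after the definition of cell partitions.

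\textbf{Key steps.} (1) Apply \Cref{lemma:PlanarCombinatorial} to $G_0$ with this cell partition of diameter $d = O(d_T)$ to obtain an $O(d_T)$-combinatorial gate. (2) Feed that into \Cref{lemma:combinatorialToCellsparse} (planar graphs are minor-closed) to conclude $G_0$ — equivalently, $G$ restricted to the non-apex vertices — is $\beta$-cell-assignable with $\beta = O(d_T)$. (3) Now handle a given family of parts $\m P$ on $G$. For each part $P$, look at the cells it meets; by cell-assignment there is a relation $\m R$ so that $P$ is related to all but at most $2$ of the cells it intersects, and each cell is related to at most $\beta$ parts. Within each cell $C$, the graph $G[C]$ is planar, so by \Cref{theorem:shortcut-existence-on-bounded-genus} it admits $T$-restricted shortcuts (with respect to $T$ restricted to that cell, which is a tree of diameter $\le d_T$) with block parameter $O(\log d_T)$ and congestion $O(d_T\log d_T)$ for the subparts $\{P\cap C : (C,P)\in\m R\}$; since each cell carries only $\beta$ parts, but that count does not affect the per-edge congestion of the planar shortcut, the per-edge congestion contributed inside $C$ is $O(d_T\log d_T)$. (4) For the $\le 2$ cells each part is \emph{not} related to, and to stitch the per-cell pieces into one connected structure, use the apex: give $P$ a constant number of extra tree edges to walk from a representative vertex of each of its block components up the BFS tree toward $a$ and back down, or more simply observe that the edges $\{a,v\}$ for $v$ a neighbor of $a$ in each of $P$'s cells let the $O(\log d_T)$ blocks-per-cell structures link into $O(\log d_T)$ overall blocks (the ``$\le 2$ unrelated cells'' each contribute $O(1)$ extra blocks). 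The total block parameter is $O(\log d_T)$; the total congestion is $O(d_T\log d_T)$ from within-cell shortcuts plus $O(d_T)$ from the linking edges along $T$, which is absorbed.

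\textbf{Main obstacle.} The delicate point is the linking/stitching step (4): the per-cell planar shortcuts give a connected low-diameter structure \emph{inside} each cell, but a part $P$ that threads through many cells will, after taking the union of its per-cell shortcuts, look like a union of $\Theta(|\{C : (C,P)\in\m R\}|)$ separate pieces — far more than $O(\log d_T)$ blocks — unless we are careful about how the cells connect up. The resolution must exploit that the cells form a BFS decomposition around $a$, so following $T$ from any cell reaches within $O(d_T)$ hops a neighbor of $a$, and the single apex vertex $a$ acts as a universal hub; the trick is to route each part's pieces to $a$ using $T$-edges without blowing up congestion, which works because a single edge of $T$ lies in at most $O(1)$ cells and hence is used by the linking of at most $O(1)\cdot(\text{few})$ parts per unit — but making this count rigorous, and correctly accounting for the ``$\le 2$ cells a part is not related to'' (which is precisely the slack that prevents a congestion blowup), is where the real work lies. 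I would also need to double-check that restricting $T$ to a cell yields a \emph{spanning} tree of that cell (it does, by the BFS construction) so that \Cref{theorem:shortcut-existence-on-bounded-genus} applies verbatim.
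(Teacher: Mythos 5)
Your overall strategy (a cell partition around the apex, the combinatorial gate and cell-assignment machinery to get a relation $\m R$, per-cell planar shortcuts, and the apex as the hub that stitches pieces together) is the same as the paper's, but the proposal has genuine gaps exactly at the points you flag as delicate. First, your cell partition is a BFS forest rooted at the neighbors of $a$, which is unrelated to the given tree $T$: the restriction of $T$ to such a cell need not be connected, let alone a spanning tree of the cell (your parenthetical ``it does, by the BFS construction'' conflates the BFS forest with $T$), and the linking edges $\{a,v\}$ you invoke need not belong to $T$, so the resulting shortcut would not even be $T$-restricted. The paper instead takes the cells to be exactly the vertex sets of the connected components of $T-x$; then $T[C]$ is automatically a spanning subtree of the cell of diameter at most $d_T$, and each cell is attached to the apex by a tree edge (its ``uplink''), which is what makes both the $T$-restriction and the diameter bound immediate.

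Second, you apply \Cref{theorem:shortcut-existence-on-bounded-genus} inside a cell to the subparts $P\cap C$, but these need not be connected in $G[C]$ (a part can leave a cell and re-enter it), so the theorem does not apply verbatim; the paper first contracts all edges outside $H[C]$, following the contraction scheme of \Cref{lemma:combinatorialToCellsparse}, to obtain a planar minor $H_C$ in which every $P\cap C$ is connected, and builds the local shortcut there. Third, the stitching step that you explicitly leave open (``where the real work lies'') is the heart of the argument, and the paper resolves it differently from your sketch: the global shortcut of a part $P$ consists, for every cell $C$ with $(C,P)\in\m R$, of the \emph{entire} subtree $T[C]$ together with its uplink tree edge to $x$. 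All these pieces share the apex, so they collapse into a single block component; and the congestion of any edge is at most $\beta(d_T)=O(d_T)$, since each non-uplink tree edge lies in exactly one cell and each cell is related to at most $\beta(d_T)$ parts. Only the at most two unrelated cells per part then contribute $O(\log d_T)$ blocks each through the local shortcuts, giving block parameter $1+2\cdot O(\log d_T)$ and congestion $O(d_T)+O(d_T\log d_T)$. (You also omit the easy preliminary case of the one part containing the apex, which the paper handles by assigning it all of $T$.)
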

\BP
Let $x$ be the apex, and let $H:=G-x$ be the planar region. First, if a part $P$ contains $x$, we give $P$ the entire spanning tree; there can be at most one such part, so the congestion does not change asymptotically. From now on, assume that no part contains the apex $x$.

Consider removing the apex $x$, which breaks the tree $T$ into multiple connected subtrees in $H$. Note that each subtree has diameter at most $d_T$. For each subtree, let its vertices be a new cell $C$, and denote the subtree by $T[C]$. Since the family of planar graphs is closed under edge contraction, we can invoke \Cref{lemma:combinatorialToCellsparse}, so $H$ with cell partition $\m C$ is $\beta$-cell-assignable for $\beta(d):=O(d)$, so there is a corresponding relation $\m R\s\m C\times\m P$. As in \Cref{lem:CliqueSumBad}, we construct local and global shortcuts separately, using ideas from \Cref{section:NonApex} for the local shortcuts and the relation $\m R$ for the global shortcuts.

We first begin with global shortcuts. For each part $P\in\m P$ and every cell $C\in\m C$ assigned to $P$ in the relation $\m R$, assign all edges in $T[C]$ to part $P$, as well as the edge connecting the apex $x$ to $T_C$, called the \textbf{uplink}. Since every edge, with the exception of uplinks, belongs to one cell, and since every cell is in relation with at most $\beta(d_{T_C})$ parts, the congestion on each edge is $\beta(d_{T_C})\le\beta(d_T)$ from global shortcuts.

Next, we define local shortcuts by repeating the following for each $C \in \m C$ individually. (i) Take the graph $H$, and iteratively contract all edges in $E_H\bs E_{H[C]}$, i.e. all edges outside the graph induced by $C$. (They are contracted in the same way as in \Cref{lemma:combinatorialToCellsparse}, so that all parts $P\cap C$ remain connected in the resulting graph. Denote the graph by $H_C$. (ii) Next, construct a $b(d):=O(\log d)$ block parameter, $c(d):=O(d\log d)$ congestion $T_C$-restricted shortcut using~\Cref{theorem:shortcut-existence-on-bounded-genus} on $H_C$ with the parts $\{P\cap C:P\in\m P\}$. Note that the parts are still connected via the contractions. Add it as a local shortcut of the cell $C$. Each edge in $T$ has congestion $c(d_{T_C})\le c(d_T)$ from local shortcuts.

Finally, we argue about block parameter. For each part $P$, there are at most 2 cells intersecting $P$ but not in relation with $P$ in $\m R$. Each of these cells $C$ generates $b(d_{T_C})\le b(d_T)$ additional blocks, and together with the single block from the global shortcuts, gives a blocking parameter of $1+ 2 \cd b(d_{T})$.

Plugging in $\beta(d_T)=O(d_T)$, $b(d_T)=O(\log d_T)$, and $c(d_T)=O(d_T\log d_T)$, we get block parameter $1+2 \cd b(d_T)=O(\log d_T)$ and congestion $\beta(d_T)+c(d_T)=O(d_T\log d_T)$.
\EP

\begin{lemma}\label{lemma:GenusVortexShortcut}
Let $G$ be a  genus-$g$ graph with $\ell$ vortices of depth $k$ and a single apex and $T$ a spanning tree of $G$. For a given set of parts, there exists a $T$-restricted shortcut with block parameter $O((g+1)k\ell^2d_T)$ and congestion $O(k\ell^2d_T(g+\log n))$.
\end{lemma}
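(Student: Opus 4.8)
The plan is to follow the proof of \Cref{lemma:PlanarShortcut} almost line for line, replacing each planar ingredient by its Genus+Vortex counterpart. Write $x$ for the single apex and $H := G - x$; then $H$ is a genus-$g$ graph with $\ell$ vortices of depth $k$. Exactly as in the planar argument, at most one part contains $x$, and we simply hand that part all of $T$ (an additive $+1$ to the congestion), so from now on no part contains $x$. To build a cell partition of $H$, delete $x$ from $T$: this breaks $T$ into vertex-disjoint subtrees, each of diameter at most $d_T$, and these subtrees are the initial cells.

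The essential new difficulty, and the step I expect to be the main obstacle, is that such a partition can split one of the $\ell$ vortices among several cells, whereas \Cref{lemma:GenusVortexCombinatorial} requires that no vortex be split. I would repair the partition so that each vortex lies entirely inside a single cell, declaring the (at most $O(\ell)$) cells so affected to be \emph{special cells}; the point is that this can be done while keeping every cell of diameter $O(\ell d_T)$, and --- crucially --- so that deleting vertices and contracting edges \emph{outside} the special cells still yields a genus-$g$ graph with $\le \ell$ vortices of depth $k$, which is precisely the closure hypothesis demanded by \Cref{lemma:planar-apex-relation}.

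Given this partition, \Cref{lemma:GenusVortexCombinatorial} supplies an $O((g+1)kd)$-combinatorial gate (with $d = O(\ell d_T)$ the cell diameter), and plugging it into \Cref{lemma:planar-apex-relation} shows $H$ is $\beta$-cell-assignable with $\beta(d) = O((g+1)k\ell d)$, giving a relation $\m R \subseteq \m C \times \m P$. Now assemble the shortcut exactly as in \Cref{lemma:PlanarShortcut}: the \emph{global} shortcut for a part $P$ consists of the tree edges of every cell $\m R$-assigned to $P$ together with the uplink edge from $x$; since every non-uplink tree edge sits in a unique cell and each cell is $\m R$-related to at most $\beta$ parts, the global congestion is $O(\beta(d_T))$. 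For the \emph{local} shortcut, process each cell $C$ by contracting all of $H$ outside $H[C]$ --- using the contraction rule of \Cref{lemma:combinatorialToCellsparse} so that the induced parts $\{P\cap C\}$ stay connected, and never contracting inside a special cell so that the vortices survive intact --- and then applying the treewidth-based Genus+Vortex shortcut construction (the theorem following \Cref{lemma:VortexTreewidth}) to the resulting genus-$g$, $\le \ell$-vortex, diameter-$O(\ell d_T)$ graph. This contributes, on the tree edges of $C$ only, local block $O((g+1)k\ell^2 d_T)$ and local congestion $O((g+1)k\ell^2 d_T\log n)$.

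Finally, add up. The congestion is the global plus the local contribution; the block parameter is $1$ (from the global part) plus the local blocks of the at most two cells that intersect a part but are not $\m R$-related to it, exactly as in \Cref{lemma:PlanarShortcut}. Collecting terms and simplifying gives block parameter $O((g+1)k\ell^2 d_T)$ and congestion $O(k\ell^2 d_T(g+\log n))$, as claimed. The only genuinely novel work is the vortex re-homing in the second paragraph: one must confine each of the $\ell$ vortices to a single cell without inflating cell diameters by more than an $O(\ell)$ factor, and while preserving the ``contract outside special cells'' closure that \Cref{lemma:planar-apex-relation} needs --- everything else is a faithful transcription of the single-apex planar proof.
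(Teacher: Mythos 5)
Your overall architecture matches the paper's (remove the apex, take the subtrees of $T-x$ as cells, mark vortex-touching cells as special, invoke \Cref{lemma:GenusVortexCombinatorial} and \Cref{lemma:planar-apex-relation}, then assemble global and local shortcuts), but the two places you wave your hands are exactly where the real work lies. First, the ``vortex re-homing'': merging all cells that touch a vortex into one special cell does \emph{not} by itself give diameter $O(\ell d_T)$ --- the vortex boundary cycle can be arbitrarily long, so two merged subtrees attached to opposite sides of the vortex can be far apart inside the special cell. The paper's fix is to add, for each vortex, a \emph{star vertex} adjacent to all vertices of that vortex's boundary cycle and place it in the special cell; this raises the vortex depth by at most $1$, brings every special cell's diameter down to $O(\ell d_T)$, and then one must separately argue that the combinatorial gate computed on this star-augmented graph still yields a valid relation for the original graph (which works because only special cells receive star vertices and the relation from \Cref{lemma:planar-apex-relation} never assigns special cells). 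You explicitly flag this step as ``the only genuinely novel work'' and then assert it can be done, so your proposal is missing precisely the ingredient that makes the lemma go through.

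Second, your block-parameter accounting (``$1$ plus the local blocks of the at most two cells that intersect a part but are not $\m R$-related, exactly as in \Cref{lemma:PlanarShortcut}'') is wrong in this setting: \Cref{lemma:planar-apex-relation} gives cell-assignability \emph{only with respect to the normal cells}, so the up to $\ell$ special cells a part intersects are never related to it and receive no global shortcut; they must be covered by local shortcuts and counted in the block parameter. Moreover, inside a special cell the restriction of $T$ to $H=G-x$ is a \emph{forest} (the merged subtrees are connected only through the apex), so a per-cell local construction as you describe has no connected spanning tree to restrict to. The paper handles this by building one local shortcut simultaneously on the union $C^*$ of all special cells with the apex added back, spanned by $T^*:=T[x\cup C^*]$ (i.e., including the uplinks), after contracting everything outside the special cells; this is also where the treewidth bound $O((g+1)k\ell^2 d_T)$ with the extra factor of $\ell$ (from the $O(\ell d_T)$ diameter) enters. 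Without the star-vertex augmentation and the dedicated special-cell construction, your congestion and block bounds are not justified.
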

\BP
We proceed similarly as in \Cref{lemma:PlanarShortcut}, with different parameters $\beta$, $b$, and $c$. Let $G$ be the entire graph, let $x$ be the apex, and let $H:=G-x$ be the graph without the apex. Let $\m C$ be the cell partition as defined in \Cref{lemma:PlanarShortcut}. To obtain the actual partition $\m C'$ that we can use as a precondition in \Cref{lemma:GenusVortexCombinatorial}, we first start with $\m C$ and then, iteratively, for each vortex in $H$, merge all cells that intersect the vortex. Note that if a cell in $\m C'$ intersects a vortex, then it completely contains the vortex, and a cell may contain multiple vortices. We let all cells that contain, or equivalently, intersect, a vortex to be special, so that there are at most $\ell$ special cells. The remaining cells are normal cells. At this point, all normal cells have diameter $O(d_T)$, but special cells can have unbounded diameter due to the individual vortices. To remedy this issue, for each vortex, we create a \textit{star vertex} that connects to all boundary vertices of the vortex, and add it to the special cell containing this vortex. Doing so increases the depth of each vortex by at most $1$ and decreases the diameter of each special cell to $O(\ell d_T)$. Denote the normal cells by $C^0$ and the special cells by $\m C^*$. 

Define a graph family $\m F$ to be all genus-$g$ graphs with at most $\ell$ vortices of depth at most $k$, so that $H\in\m F$. Note that for each graph $F\in\m F$, contracting an edge outside of any vortex still leaves a graph in $\m F$. Since every normal cell is in the genus-$g$ region of the graph, any graph obtained by contracting edges in normal cells in $H$ is still in $\m F$. Therefore, we can apply \Cref{lemma:GenusVortexCombinatorial} with cell diameter $O(\ell d_T)$ to get an $O((g+1)k \ell d_T)$-combinatorial gate, and then apply \Cref{lemma:planar-apex-relation} to obtain a relation $\m R\s\m C'\times\m P$ with $\beta(d_T):=2\ell ((g+1)k\ell d_T)$. Note that our combinatorial gate applies to the graph with the extra star vertices added, not the original graph. However, only special cells get star vertices and the relation $\m R$ does not touch special cells, so $\m R$ is valid for the original graph.

The global shortcuts for normal cells are the same as those in \Cref{lemma:PlanarShortcut}, giving a congestion of at most $\beta(d_T)$. Note that there are no global shortcuts for special cells, since $\m R$ does not associate special cells. For local shortcuts in a normal cell $C\in\m C^0$, we define $H_C$ as follows: we first contract all edges in $E_H\bs \lp E_{H[C]}\cup \bigcup_{C'\in C^*}E_{H[C']}\rp$, i.e., all edges not inside $C$ or any special cell, in the same way as in \Cref{lemma:combinatorialToCellsparse}. We then contract the edges in $ \bigcup_{C'\in C^*}E_{H[C']} \bs E_{H[C]}$, i.e., the remaining edges not inside $C$, in the same way as in \Cref{lemma:combinatorialToCellsparse} to obtain $H_C$. The first set of contractions leaves $C$ with the special cells. With the star vertices added, this graph is a genus $g$ graph with $\ell$ vortices and diameter $O(\ell d_T)$, so by \Cref{lemma:VortexTreewidth}, it has treewidth $O((g+1)k\ell^2 d_T)$; disregarding the star vertices can only decrease the treewidth. The second set of contractions also cannot increase the treewidth, so $H_C$ also has treewidth $O((g+1)k\ell^2 d_T)$. Applying the treewidth-based shortcut construction from~\cite{haeupler2016near} gives block parameter $b(d_T):=O((g+1)k\ell^2 d_T)$ and congestion $c(d_T):=O((g+1)k\ell^2 d_T\log n)$.

For local shortcuts in special cells, we construct them in all special cells simultaneously. Let $C^*:=\bigcup_{C\in C^*}C$ be the union of all special cells. Define the tree $T^*:=T[x\cup C^*]$ to be the union of $C^*$ with all uplinks in $C^*$. Take the graph $H$ and contract all edges in $E_H\bs \bigcup_{C\in C^*}E_{H[C]}$, i.e., all edges not inside any special cell, in the same way as in \Cref{lemma:combinatorialToCellsparse}, obtaining a genus $g$ graph with $\ell$ vortices and diameter $O(\ell d_T)$; This graph has treewidth $O((g+1)k\ell^2d_T)$. Add the apex $x$ back to $H$ and connect it to its neighbors in $G$ that are vertices of $H$, which increases the treewidth by at most $1$. The resulting graph is spanned by $T^*$, so it has shortcuts with block parameter $b(d_T)$ and congestion $c(d_T)$, as defined above.

Finally, we argue about block parameter. For each part $P$, there are at most 2 normal cells intersecting $P$ but not in relation with $P$ in $\m R$, and at most $\ell$ special cells. Each of these cells generates $b(d_T)$ additional blocks, and together with the single block from the global shortcuts, gives a block parameter of $1+(2+\ell) \cd b(d_{T})$.

Plugging in $\beta(d_T)=O((g+1)k\ell^2d_T)$, $b(d_T)=O((g+1)k\ell^2 d_T)$, and $c(d_T)=O(k\ell^2 d_T\log n)$, we get block parameter $1+2\cd b(d_T)=O((g+1)k\ell^2d_T)$ and congestion $\beta(d_T)+c(d_T)=O(k\ell^2d_T(g+\log n))$.
\EP

We finally prove the main theorem of this section, with multiple apices.
\BuildingBlocks*
\begin{proof}
  Let $G$ be the apex graph and $T$ a the spanning tree of $G$. If a part contains one of the $q$ apices, we give the entire tree $T$ to the part. This increases the congestion by at most $q$. For the remaining parts, we do the following. First, add an auxiliary new vertex $x$ that connects to each of the $q$ apices; the diameter can grow by at most $1$. Contract these $q+1$ vertices to a single apex to form graph $G'$; $T$ might now contain cycles, so take a spanning subtree of depth $d_T$ in the contracted $T$. Apply \Cref{lemma:GenusVortexShortcut} to the single apex graph $G'$. If we extend the shortcuts for each part in the natural way to $G$, the congestion does not change any further. Furthermore, the block parameter increases by at most $q-1$ because a block component containing $x$ splits into at most $q$ block components.
\end{proof}
}

\subsection{Conclusion and Open Problems}
 
\fullOnly{We conclude with an extended version and a proof of the main theorem.}
\shortOnly{We conclude with an extended version of the main theorem.}

%\jnote{move statement to top. leave the proof here}
\mainMinorLong*

\fullOnly{
\BP
We conclude with the proof of~\Cref{thm:mainMinorLong}. For a given excluded minor family $\m F$ of graphs, by~\Cref{thm:GraphStructure}, there exists a constant $k$ such that every graph $G \in \m F$ is also in $\m L_k$. By~\Cref{thm:almostEmbeddable}, for every $(k,k,k,k)$-almost-embeddable graph $H$, and for every spanning tree $T \s H$, there is a shortcut of block parameter $O(d_T)$ and congestion $O(d_T\log n)$; here, we treat $k$ as a constant. Plugging in $b:=O(d_T)$ and $c:=d_T\log n$ into~\Cref{thm:CliqueSum}, we conclude that every $k$-clique-sum of such graphs, and hence every graph in $\m L_k$, has shortcuts of block parameter $O(d_T)$ and congestion $O(d_T\log n + \log^2 n)$. The theorem follows from the definition of quality.
\EP
}

An obvious open question is whether the block parameter $O(d_T)$ can be improved to $\tilde O(1)$, which would result in a near-optimal $\tilde O(D)$-round algorithm for MST and $(1+\e)$-approximate mincut on excluded minor network graphs. The bottleneck in the current proof lies in the treewidth argument when arguing about Genus+Vortex graph, which produces the $O(d_T)$ block parameter. This treewidth argument cannot be improved due to lower bounds on treewidth-$k$ graphs, as presented in~\cite{haeupler2016near}. Hence, an improvement on Genus+Vortex graphs requires a better understanding of vortices, beyond treating them as simply low-treewidth (or pathwidth) graphs.

\newpage
\bibliographystyle{alpha}
\bibliography{refs}

\fullOnly{
\appendix

\newpage

\section{Combinatorial Gate in Genus+Vortex graphs}

\label{sec:proof:gateInGenusVortex}

In this section we prove \Cref{lemma:GenusVortexCombinatorial}. Namely, that the Genus+Vortex graphs have an $s$-combinatorial gate. The exposition is split into three parts: (1) a general ``Planarization'' Lemma that converts a genus-$g$ graph into a planar one, (2) proof that Genus-$g$ graphs have an $s$-combinatorial boundary, and finally (3) the proof of \Cref{lemma:GenusVortexCombinatorial}.

\subsection{Planarization of Genus-$g$ Graphs}

Genus-$g$ graphs are relatively hard to analyze. This is in contrast to planar graphs for which many tools are available. In this section we state a lemma that will help us analyzing genus-$g$ graphs by ``cutting and developing them on a plane''. The high-level idea is to ``cut'' the genus-$g$ graph $SG$ along multiple cycles, providing us with a ``planarization'' $PG$ that is planar (see \Cref{fig:torus-and-extension}). By ``cutting'' we informally mean taking scissors and cutting along edges in a cycle in a way that splits each edge into two sub-edges, one for each side of the cut. This also splits the nodes into multiple sub-nodes, possibly more than 2. For example, if a node lies on $k$ edge-disjoint cutting paths, we split the node into $2k$ sub-nodes. Such a planarization is illustrated on \Cref{fig:torus-and-extension}.

\begin{figure}[H]
  \centering
  \begin{subfigure}{0.40\textwidth} % [b] ?
    \centering
    \includegraphics[width=\textwidth]{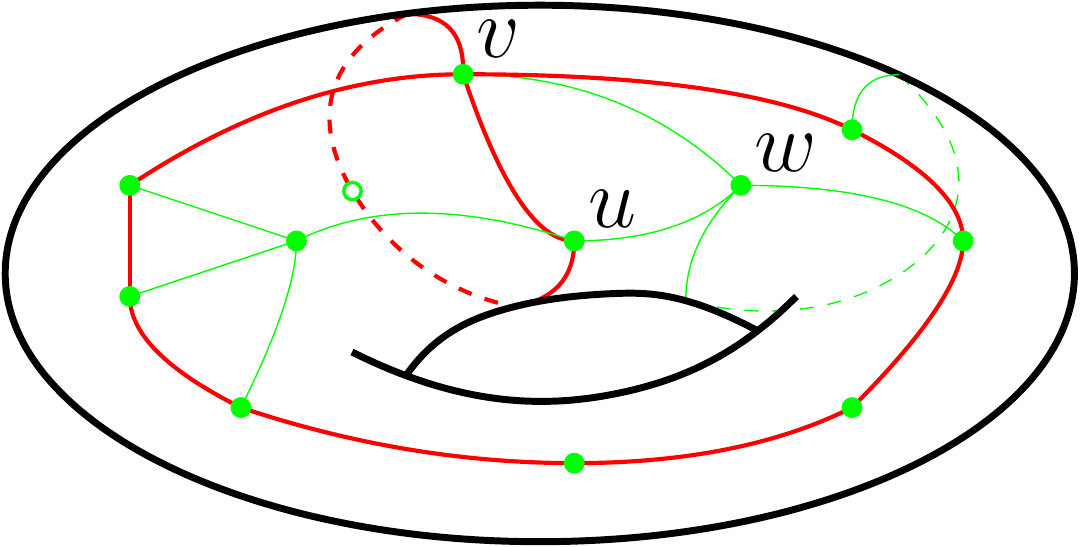}
  \end{subfigure}
  \begin{subfigure}{0.55\textwidth} % [b] ?
    \centering
    \includegraphics[width=\textwidth]{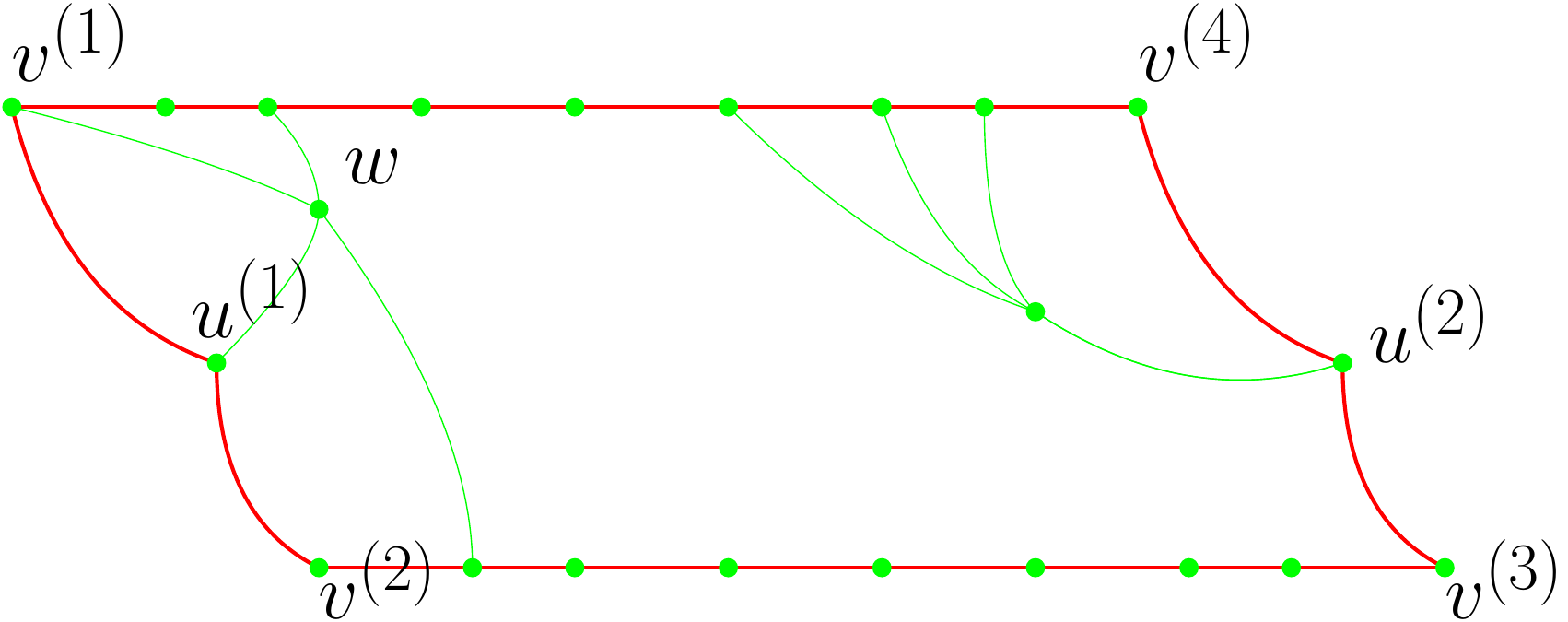}
  \end{subfigure}
  \caption{A graph embedded on a torus and its planarization after cutting the generators colored in red. Note how the vertex $v$ gets duplicated into $v^{(1)}, \ldots, v^{(4)}$.}
  \label{fig:torus-and-extension}
\end{figure}

On a more technical note, we have some control over the cycles which are cut. We can choose any spanning tree $T \subseteq SG$. Then the cut can be represented by $g$ non-tree edges and their induced cycles w.r.t. $T$. A cycle induced by a non-tree edge $e$ is the unique cycle in $T \cup \{e\}$. We first formalize the cutting procedure and then we state the Planarization Lemma in all all detail.

\begin{definition}[Cut Graph]
  \label{def:cutgraph}
  Given a graph $SG$ embedded on a surface and a subset of cut edges $R \subseteq E(SG)$, we define the \textbf{cut graph} $PG$ as follows.
  \begin{itemize}  \item For each $v \in V(SG)$ consider the local planar embedding on the surface (effectively an ordering of edges incident to $v$). Consider all maximal edge-intervals of non-cut edges in this ordering of $v$ including the two bounding cut edges. For all such edge-intervals construct a new copy of $v$ and include it in $V(PG)$. E.g., if $v$ is adjacent to $k$ cut edges, the vertex will be copied $k$ times.
  \item Define the "projection" $p : V(PG) \to V(SG)$ that maps $v \in V(PG)$ to the original vertex in $V(SG)$ of which $v$ is a copy.
  \item Connect two vertices in $V(PG)$ when their corresponding edge-intervals contain either (i) the same non-cut edge or (ii) the same cut edge while the edge-intervals are on the appropriate sides of their local embeddings. In other words, the cut edge is the clockwise boundary of one edge-interval and counter-clockwise boundary of the other; or vice versa. This should intuitively correspond to ``cutting an edge with scissors''.
  \end{itemize}

  On such a cut graph, denote by \textbf{outer nodes} all nodes $v \in V(PG)$ s.t. $|p^{-1}(p(v))| > 1$ and also their projections into $SG$. The outer nodes in $v \in V(SG)$ are exactly those corresponding to edge-intervals that do not contain the entire neighborhood, or equivalently, those without any cut edges incident to them.

  Similarly, denote the complement of outer nodes (on both $PG$ and $SG$) as the \textbf{inner nodes}. Note that, by the cutting procedure, there is a one-to-one correspondence between inner nodes on $PG$ and inner nodes on $SG$.
\end{definition}

\begin{lemma}[Planarization Lemma]
  \label{lemma:planarization}
  Consider a genus-$g$ graph $SG$ and a spanning tree $T \subseteq SG$. There exists a set of $g$ cycles induced by (adversarially chosen) non-$T$ edges. Call them the \textbf{generating cycles}. Let $PG$ be the cut graph of $SG$ with respect to the union of the generating cycles. Then (i) $PG$ is a planar graph, and (ii) outer nodes, as defined in \Cref{def:cutgraph}, are on the outer face of $PG$. 
\end{lemma}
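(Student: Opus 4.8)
The plan is to produce the generating cycles from a \emph{tree--cotree decomposition} of the embedded graph and then to show that cutting along them collapses $\Sigma$ to a single disk. Fix a $2$-cell embedding of $SG$ in the genus-$g$ surface $\Sigma$, write $V,E,F$ for its vertex, edge and face sets, and let $SG^{\ast}$ be the dual graph. The subgraph of $SG^{\ast}$ spanned by the duals of the non-$T$ edges is connected, so it has a spanning tree $C$; let $X\subseteq E$ be the set of edges lying in neither $T$ nor $\{e:e^{\ast}\in C\}$. Euler's formula gives $\lvert X\rvert=\lvert E\rvert-(\lvert V\rvert-1)-(\lvert F\rvert-1)=g$ (with $g$ read as the Euler genus of $\Sigma$). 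For each $e\in X$ let $C_{e}$ be the fundamental cycle of $e$ with respect to $T$, i.e.\ the unique cycle of $T\cup\{e\}$; these $g$ cycles are the generating cycles. It is not essential that $X$ come from a cotree: any $g$ non-$T$ edges whose fundamental cycles form a basis of $H_{1}(\Sigma;\mathbb{Z}/2)$ would do equally well, which is the sense in which the choice can be ``adversarial''.

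For planarity I would route through the disk that a tree--cotree decomposition always produces. Since $C$ is a spanning tree of the dual restricted to non-$T$ edges, gluing the open faces of $SG$ to one another across the edges dual to $C$ yields a single open $2$-cell; equivalently, cutting $\Sigma$ along the connected spanning subgraph $T\cup X$ yields a \emph{closed disk} $D$ whose boundary walk traverses each edge of $T\cup X$ exactly twice. Now $\bigcup_{e\in X}C_{e}\subseteq T\cup X$, and the leftover set $(T\cup X)\setminus\bigcup_{e\in X}C_{e}$ consists of tree edges lying on no fundamental cycle, so it is a forest $F_{0}\subseteq T$. Hence cutting $SG$ along $\bigcup_{e\in X}C_{e}$ is the same operation as taking $D$ and re-identifying, for every edge of $F_{0}$, its two boundary copies; peeling $F_{0}$ off one leaf edge at a time, each such identification keeps the surface a disk, so $PG$ embeds in a disk and is planar. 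This gives part (i). For part (ii), note that by \Cref{def:cutgraph} the outer nodes are precisely the vertices that receive more than one copy, equivalently the vertices incident to a cut edge, equivalently the vertices appearing on the boundary of $D$ (and hence on the boundary of the final disk); embedding that disk in the plane with its boundary bounding the unbounded face places every outer node on the outer face.

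The main obstacle, and where the actual care is needed, is reconciling this topological picture with the purely combinatorial vertex-splitting rule of \Cref{def:cutgraph}: one has to check that the copies of a vertex produced by the two-step operation ``cut along $T\cup X$, then re-glue $F_{0}$'' are in bijection, with the correct incidences, with the copies that the definition assigns to a direct cut along $\bigcup_{e\in X}C_{e}$ (a vertex lying on $k$ edge-disjoint cut paths splitting into $2k$ copies, etc.), and that re-gluing $F_{0}$ never reintroduces a handle, a crosscap, or an extra boundary component. Euler's formula is the convenient auditing tool: a direct count of $\lvert V(PG)\rvert$, $\lvert E(PG)\rvert$ and $\lvert F(PG)\rvert$ in terms of the vertex, edge and face counts of $SG$ and of $\bigcup_{e\in X}C_{e}$ should return $\chi=1$, simultaneously confirming the copy bookkeeping and the disk conclusion. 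The same count also makes clear why a homologically deficient choice of cutting edges fails---it leaves behind some of the handles of $\Sigma$---so the basis condition on the $C_{e}$'s is exactly the hypothesis one needs.
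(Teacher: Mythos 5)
Your construction of the generating cycles is exactly the one the paper uses (a tree--cotree decomposition in the sense of Eppstein, with the $g$ leftover edges inducing fundamental cycles with respect to $T$), so the setup is fine. The genuine gap is the step ``peeling $F_{0}$ off one leaf edge at a time, each such identification keeps the surface a disk.'' That claim is false. Take $\Sigma$ the closed orientable genus-$2$ surface and $SG$ consisting of two vertices $v_{1},v_{2}$ joined by an edge $f$, with loops $a_{1},b_{1}$ at $v_{1}$ and $a_{2},b_{2}$ at $v_{2}$, $2$-cell embedded with the single face $a_{1}b_{1}a_{1}^{-1}b_{1}^{-1}\,f\,a_{2}b_{2}a_{2}^{-1}b_{2}^{-1}\,f^{-1}$. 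Here $T=\{f\}$, the dual has one vertex so the cotree is empty, $X=\{a_{1},b_{1},a_{2},b_{2}\}$, every fundamental cycle is a loop, and $F_{0}=\{f\}$. Both endpoints of $f$ are leaves of $F_{0}$, yet its two copies on the boundary of the $10$-gon $D$ are not adjacent (they are separated by $a_{2}b_{2}a_{2}^{-1}b_{2}^{-1}$ on one side), because a leaf of $F_{0}$ can still be incident to cut edges. Re-identifying the two copies of $f$ turns $D$ into an annulus, not a disk; your Euler ``audit'' returns $\chi=0$, not $1$. So the conclusion ``$PG$ embeds in a disk'' does not follow, and with it the proof of (ii) collapses: the natural embedding of the cut-open surface puts the four copies of $v_{1}$ and the four copies of $v_{2}$ on two different boundary circles, i.e.\ on two different faces, and only one of them can be the unbounded face in the naive planar drawing.

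What survives of your argument is part (i): every edge of $F_{0}$ is a bridge of $T\cup X$, so its two occurrences in the boundary word of $D$ appear as $f\cdots f^{-1}$ with the occurrence pairs of distinct $F_{0}$ edges pairwise non-interleaved; gluing such pairs never creates a handle or a crosscap, so the re-glued surface has genus $0$ and $PG$ is planar. But each non-adjacent gluing adds a boundary component, so in general you end up with a planar surface with several boundary circles, all of which carry outer nodes, and (ii) needs a further argument about how these circles can be brought onto a single face (in the example above this is possible precisely because the two pieces hang together only through the bridge $f$, so one can re-embed them side by side). That missing argument is the actual content of (ii), and your proposal does not supply it; the paper's proof follows the same tree--cotree route but obtains the planar picture directly from the unique face of the cut system rather than from a disk-preservation claim for the re-gluings.
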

\begin{proof}
  Consider the dual graph of $SG$, denoted by $SG^*$. The vertices/edges/faces of $SG^*$ correspond one-to-one with faces/edges/vertices of $SG$, respectively. Then there exists a spanning tree $T^*$ of $SG^*$, called a co-tree, that is disjoint from $T$. In other words, there is no edge of $T^*$ that corresponds to an edge of $T$. This claim is a direct consequence of Lemma 1 in Eppstein~\cite{eppstein2003dynamic}.

  Let $C$ be the set of edges not in either $T$ nor $T^*$. Then Lemma 2 in Eppstein~\cite{eppstein2003dynamic} asserts that the cycles induced by the edges of $T$ generate the fundamental group of the surface on which $SG$ is embedded. In other words, contracting the tree $T$ to a single node and deleting $E(T^*)$ would give us an embedded graph with exactly one node and one face. Denote this face by $F$. Because $SG$ is 2-cell embedded in the surface, $F$ is homeomorphic to some plane which we denote by $\Pi$.

  We are now ready to prove the Lemma. Our argument will be that the cut graph can be embedded without intersections in the plane $\Pi$. Note that the points that are not on the cut edges have a natural embedding into $\Pi$, namely, they have a position in $F$ which corresponds to a point in the plane $\Pi$. This fixes the embedding of the inner nodes of $PG$. However, this also forces the embedding of the outer nodes $v \in V(PG)$. Such nodes correspond to an edge-interval of $w \in V(SG)$. Since the surface points sufficiently close inside the edge-interval and arbitrarily close to $v$ are embedded into $\Pi$, we just set the embedding of $w$ to be the limit of such surface points.

  Such an embedding has no intersections since non-cut-edges correspond one-to-one with edges on $F$ which do not intersect. Cut edges cannot intersect by the cutting procedure. This proves claim (i). Also, note that the boundary of the face $F$ corresponds to the outer face boundary of $\Pi$ and that exactly outer nodes get mapped to that boundary. This proves claim (ii).
\end{proof}

\subsection{Combinatorial Gate in Genus-$g$ graphs}

We begin with a strengthening of \Cref{lemma:PlanarCombinatorial}, whose proof is immediate following the proof of \Cref{lemma:PlanarCombinatorial}. Then, we present the combinatorial gate proof for bounded genus graphs.

\BL\label{lemma:PlanarCombinatorial2}
Let $G$ be a planar graph with a cell partition of diameter $d$.
Fix a planar embedding of $G$. Then, there is an $36d$-combinatorial gate $\m S$, and furthermore, each node on the outer face of $G$ contained in a gate $S_i$ is also in the corresponding fence $F_i$.
\EL

\begin{lemma}
  \label{lemma:genus-comb-gate}
        A genus-$g$ graph $SG$ with a diameter-$d$ cell partition $\mathcal{C}_{SG}$ has a $O((g+1)d)$-combinatorial gate.
\end{lemma}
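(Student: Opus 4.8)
The plan is to reduce the genus-$g$ case to the planar case via the Planarization Lemma (\Cref{lemma:planarization}), and then carefully reassemble a combinatorial gate on $SG$ out of the gate obtained on the planar cut graph $PG$. First I would pick a spanning tree $T \subseteq SG$ and let $PG$ be its cut graph with respect to the $g$ generating cycles, so that $PG$ is planar with all outer (duplicated) nodes on the outer face. The cell partition $\mathcal C_{SG}$ needs to be pushed forward to $PG$: each cell $C$ becomes $p^{-1}(C)$, which may be disconnected, so I would treat each connected component of $p^{-1}(C)$ as a separate cell of $PG$. Since each generating cycle can only ``cut'' a cell into $O(1)$ pieces along that cycle --- more precisely, cutting along $g$ cycles can increase the total number of cells by at most an $O(g)$ additive-per-cell / multiplicative amount --- I would argue $|\mathcal C_{PG}| = O((g+1)\,|\mathcal C_{SG}|)$. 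The diameter of each new cell is still $O(d)$, using that path-contraction through $T$ keeps distances bounded (same trick as in \Cref{lem:CliqueSumBad}); here I would be slightly careful that cutting does not disconnect a BFS-tree component of a cell beyond $O(1)$ pieces, which follows since $T$ restricted to a cell is a tree and removing the at most $g$ cut cycles' edges from it leaves $O(g)$ subtrees.

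Next I would apply \Cref{lemma:PlanarCombinatorial2} (the strengthened planar statement) to $PG$ with its cell partition $\mathcal C_{PG}$, obtaining a $36d$-combinatorial gate $\m S_{PG} = \{(F_i^{PG}, S_i^{PG})\}$ with the extra guarantee that every vertex of $S_i^{PG}$ lying on the outer face of $PG$ is contained in $F_i^{PG}$. Now I would project back: set $S_i := p(S_i^{PG})$ and $F_i := p(F_i^{PG}) \cup (\text{all outer nodes of } SG \text{ that appear in } S_i)$. The reason for padding the fence with the projected outer nodes is exactly Property (5) (disjointness of non-fence parts): two copies $v^{(1)}, v^{(2)}$ of the same outer vertex $v$ might land in different gates $S_i^{PG}, S_j^{PG}$ as non-fence vertices, but after projection $v$ would be a non-fence vertex of both $S_i$ and $S_j$ --- promoting all outer nodes to fence status in any gate containing them kills this. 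By the outer-face guarantee of \Cref{lemma:PlanarCombinatorial2}, every such $v^{(k)}$ was already a fence vertex in $PG$, so this padding is ``free'' and the fence-size bound is unaffected.

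Then I would verify the six properties of \Cref{defn:CombGate} for $\m S := \{(F_i,S_i)\}$. Properties (1), (2), (4) transfer directly under the projection $p$ (note $p$ maps inter-cell edges of $PG$ to inter-cell edges of $SG$, and an edge of $SG$ between two cells lifts to at least one inter-cell edge of $PG$, giving Property (3)). Property (5) holds by the outer-node padding described above, together with the fact that inner nodes of $PG$ are in bijection with inner nodes of $SG$, so a non-fence inner vertex of $SG$ has a unique preimage and inherits uniqueness from $PG$. For Property (6), $\sum_i |F_i| \le \sum_i |F_i^{PG}| \le 36 d \cdot |\mathcal C_{PG}| = 36 d \cdot O((g+1)|\mathcal C_{SG}|) = O((g+1)d)\cdot|\mathcal C_{SG}|$, giving $s = O((g+1)d)$ as claimed.

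The main obstacle I expect is bookkeeping around Property (5) and the cell-count blowup: I need the strengthened planar lemma precisely so that outer-face vertices are always fence vertices (otherwise projecting could create overlapping non-fence gate interiors that I cannot repair without blowing up fence sizes), and I need a clean argument that cutting along $g$ cycles multiplies the number of cells by only $O(g+1)$ rather than something worse --- this rests on each cut cycle crossing the tree-structure of a cell in a controlled way. Everything else (diameter preservation, Properties (1)--(4), the arithmetic in (6)) is routine given the planar case.
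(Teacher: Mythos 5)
Your high-level route is the same as the paper's: planarize $SG$ along the $g$ tree-induced cycles via \Cref{lemma:planarization}, push the cells forward to the cut graph $PG$, apply the strengthened planar lemma (\Cref{lemma:PlanarCombinatorial2}) so that outer-face vertices of gates lie in the fences, and project back; your padding of fences with outer nodes is just an explicit form of the paper's observation that, by the outer-face guarantee, those vertices are already fence vertices, and with it Properties (1)--(5) transfer as you describe. The genuine gap is in how you construct the pushed-forward cell partition, and it hits both the diameter and the count. You pick an arbitrary spanning tree $T$, but your key claim that ``$T$ restricted to a cell is a tree'' is false for arbitrary $T$: the restriction is in general a forest with many components, and tree paths between two vertices of a cell may leave the cell, so neither the $O(g)$-pieces-per-cell bound nor any diameter control follows. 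The paper explicitly builds $T$ by taking a spanning tree of each cell and joining these trees by connecting edges; this is what forces each generating cycle to meet a cell's subtree in $O(1)$ contiguous subpaths and yields $|\m C_{PG}| = O(g+1)\,|\m C_{SG}|$.

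Second, even with a cell-respecting $T$, defining the $PG$-cells as connected components of $p^{-1}(C)$ using \emph{all} edges of $PG$ does not give diameter $O(d)$. A shortest intra-cell path in $SG$ may cross the cut, and in $PG$ one must detour around slit ends; ``path contraction through $T$'' is unavailable here precisely because the cut destroys those tree paths. Such a component can be a union of up to $\Theta(g)$ tree pieces re-glued by non-tree edges, with diameter as large as $\Theta((g+1)d)$; combined with the $O(g+1)$ blowup in the number of cells, the fence-sum arithmetic in Property (6) then only gives an $O((g+1)^2 d)$-combinatorial gate, weaker than the claimed $O((g+1)d)$. The paper avoids this by taking the $PG$-cells to be the connected components of the pulled-back intra-cell \emph{tree} edges only (the forest $T_{PG}$): every vertex of such a component reaches the component root within $d$ out-edge steps, because its projection does so in the cell's spanning tree, so each component has diameter at most $2d$, and the claimed bound follows. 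To repair your write-up you need to (i) choose $T$ cell by cell and (ii) define the pushed-forward cells via intra-cell tree edges (or otherwise actually prove an $O(d)$ diameter bound for your coarser components); the remainder of your argument is sound.
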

\begin{proof}
  The main idea is to planarize the graph using \Cref{lemma:planarization}, find a combinatorial gate for the planar graph and project it back to the surface graph $SG$. The details follow.

  We construct a rooted spanning tree $T$ of $SG$ by first constructing a spanning tree of each cell, connecting them arbitrarily into a spanning tree, and arbitrarily rooting it. Denote the inter-cell tree edges as \textbf{connecting edges}. For concreteness, assume directed tree edges go towards the root.

  Planarize the graph $SG$ into $PG = (V(PG), E(PG))$ w.r.t. $T$ using \Cref{lemma:planarization}. Let $p : V(PG) \to V(SG)$ denote the corresponding projection. We define a cell decomposition $\m C_{PG}$ that will intuitively match the decomposition on $\m C_{SG}$, except that some cells get split in multiple ones to respect the planarization. We formalize it by defining $\m C_{PG}$ on the planar graph $PG$ in an implicit manner: we will define them as connected components of a forest of rooted trees $T_{PG} \subseteq PG$ that is defined as follows. For $x, y \in V(PG)$ there is an directed edge $x \to y$ in $T_{PG}$ when all three of (i) $p(x)$ and $p(y)$ are in the same cell of $\mathcal{C}_{SG}$, (ii) $(p(x) \to p(y)) \in T$, and (iii) $\{ x, y \} \in E(PG)$ hold. This completely defines $T_{PG}$ and therefore $\m C_{PG}$.

  Note that (i) $T_{PG}$ is a tree (when ignoring directions) and (ii) each vertex $x \in V(PG)$ has outdegree at most 1. Claim (i) follows because any cycle in $T_{PG}$ would project into a cycle in $T$; and claim (ii) follows from construction since $p(x) \in T$ has outdegree at most 1 and the projections of neighboring nodes of $x$ in $PG$ are all distinct.

  Next, we analyze the cells $\m C_{PG}$. Note that cells in $\m C_{PG}$ have diameter at most $2d = O(d)$ since traveling via out-edges towards the root will reach it within $d$ steps. This can be argued from the projection of such a travel reaching its root in at most $d$ steps.

  Furthermore, we claim that each cell splits into at most $O(g+1)$ new cells. We can represent each cell $C \in \m C_{PG}$ with its root node in $T_{PG}$. And since every node in $PG$ corresponds to an edge-interval of a vertex in $SG$ (c.f. planarization), we can represent $C$ by an edge-interval. Finally, we say that $C \in \m C_{PG}$ splits from $C' \in \m C_{SG}$ when the projection of the root of $C$ maps to a node in $C'$.

  We fix a cell $C' \in \m C_{SG}$, i.e., on the surface graph, and argue about the number of nodes that split from from $C'$. Let $C \in \m C_{PG}$ be such a cell and let $v$ be the root of its corresponding component of $T_{PG}$. Note that node $v$ being a root in $T_{PG}$ implies that either (i) its edge-interval does not contain the unique outgoing edge out of $p(v)$ or (ii) it contains an outgoing connecting edge, i.e., unique edge connecting $T \subseteq SG$ with its parent. The number of cases (i) increases by $O(1)$ with each new generating cycle, of which there are $O(g+1)$. Case (ii) can occur only twice since the construction implies each edge gets duplicated at most twice. This concludes the argument that there are $|\m C_{PG}| \le O(g+1) |\m C_{SG}|$.

  To summarize, we have constructed a planar $PG$ and a diameter-$O(d)$ cell partition $\m C_{PG}$. We now apply \Cref{lemma:PlanarCombinatorial} to find a $O(d)$-combinatorial gate $\m S = \{ (F_i, S_i) \}_i$.

  In order to construct a combinatorial gate in $SG$, we project $\m S$. To that end, we extend the projection $p$ to work on subsets $2^{V(SG)}$ in the obvious manner: $p(A) = \bigcup_{a \in A} p(a)$. Next, let $\m S' = \{ p(F_i), p(S_i) \mid (F_i, S_i) \in \m S \}_i$. We claim that $\m S'$ is an $O((g+1)d)$-combinatorial gate in $SG$ and we prove it by verifying its properties one by one.

  (1) $p(F_i) \subseteq p(S_i)$ is clear because $p$ is an increasing function w.r.t. $\subseteq$.
  
  (2) We want to show that $\partial p(S_i) \subseteq p(F_i)$. Let $v' \in \partial p(S_i)$. If $v'$ is an inner node and $w' \in V(SG) \setminus p(S_i)$ is its neighbor outside of $p(S_i)$, then any preimage $w \in p^{-1}(w')$ must be outside of $S_i$, hence the unique $p^{-1}(v') \in \partial S_i \subseteq F_i$. This implies that $v' \in p(F_i)$ as needed. On the other hand, if $v'$ is not an inner node, then any preimage $v \in p^{-1}(v') \cap S_i$ must be an outer node, hence on the outer face of $PG$ by \Cref{lemma:PlanarCombinatorial}. If we go through the proof of \Cref{lemma:PlanarCombinatorial}, we observe that construction of the planar combinatorial gate in the lemma has an additional property: if a node on the outer face of the planar graph is contained in a gate $S_i$, then it is also in the corresponding fence $F_i$.  Thus by construction of $\mathcal{S}$ it is included in $F_i$, which implies $v' \in p(F_i)$.

  (3) Let $\{a', b'\} \in E(SG)$ be an edge whose endpoints are in different cells of $\m C_{SG}$. Then $\{ p(a'), p(b') \}$ is covered by a gate $S$ in $PG$. Hence the gate $p(S)$ covers $\{a', b'\}$.
  
  (4) Each gate $S'$ intersects at most 2 cells in $\mathcal{C}_{OG}$ since the projection maps the same cell into the same cell.
  
  (5) Let $v' \in p(S_i) \setminus p(F_i)$. We want to show that there can be at most one such $i$. On one hand, if $v' \in V(SG)$ is an outer node, then its preimage $p^{-1}(v') \cap S_i$ must be an outer node and hence included in $F_i$. This is a contradiction since then $v' \not \in p(S_i) \setminus p(F_i)$. On the other hand, if $v' \in V(SG)$ is an inner node, then it has a unique preimage $v = p^{-1}(v')$. If $v' \in p(S_i) \setminus p(F_i)$, then $v \in S_i \setminus F_i$. Hence by claim (5) on $\mathcal{S}$ there can be at most one such $i$.

        (6) $\sum_{(F', S') \in \mathcal{S'}} |F'| \le \sum_{(F, S) \in \mathcal{S}} |p(F)| \le \sum_{(F, S) \in \mathcal{S}} |F| \le O(d) |\mathcal{C}_{PG}| \le O((g+1)d) |\mathcal{C}_{OG}|$
\end{proof}

\subsection{Finalizing the Proof}

Finally, we extend the combinatorial gate proof for bounded genus graphs to include vortices.

\GenusVortexCombinatorial*

\begin{proof}
  
For notation, rename the graph $G$ to $OG$ for ``original graph''.  We first replace the Genus-$g$+Vortex-$k$ $OG$ graph with a tightly related genus-$g$ graph $SG$ by the following method: for each vortex $A$ in $OG$, remove all the internal vortex nodes and replace it with a star node $s_A$ that is connected to all nodes in the cycle of the vortex. Name the final graph $SG$ and note that it is genus $g$ by construction.

Next, we construct a corresponding cell partition $\m C_{SG}$: all non-star nodes get included in the same cell as they were in $OG$, while the star node gets included in the cell of its vortex. Note that the preconditions ensure that all vortex nodes are in the same cell. Furthermore, the diameter of the cell partition $\m C_{SG}$ is at most $d+1 = O(d)$, hence we can apply \Cref{lemma:genus-comb-gate} on it and obtain a $O((g+1)d)$-combinatorial gate $\mc S$.

We now convert $\mc S$ into a $O((g+1)kd)$-combinatorial gate $\mc S'$ on $OG$, hence completing the theorem. First, define an \textbf{expansion} $\mc E : V(SG) \to 2^{V(OG)}$ in the following manner: if $v$ is a star node, then $\mc E(v) = \emptyset$; if $v$ is a node on a vortex cycle, then $\mc E$ contains the set of all internal vortex nodes of $OG$ whose arcs contain $v$ and $v$ itself; finally, if $v$ is neither a star nor vortex cycle node, then $\mc E(v) = \{ v \}$. Note that $\m E$ furnishes a one-to-one correspondence between non-star nodes of $SG$ and non-internal vortex nodes of $OG$.

This allows us to define $\mc S'$. First, extend the expansion to work on subsets $2^{V(SG)}$ in the obvious manner: $\mc E(A) = \bigcup_{a \in A} \mc E(a)$. Then set $\m S' := \{ (\mathcal{E}(F_i), \mathcal{E}(S_i) \mid (F_i, S_i) \in \mathcal{S} \}$. In other words, we expand the gates and fences in $\mc S$ to obtain $\m S'$.

        We prove that $\m S'$ is an $O((g+1)kd)$-combinatorial gate by verifying its properties one by one.
  
  \begin{enumerate}
  
  \item It is clear that $F'_i \subseteq S'_i$ for each $(F'_i, S'_i) \in \m S$ since $\mathcal{E}$ is an increasing function w.r.t. $\subseteq$.

  \item We want to show that $\partial S'_i \subseteq F'_i$. Let $v' \in \partial S'_i$ where $S'_i = \m E(S_i), F'_i = \m E(F_i)$. We split into three cases, depending on the location of $v'$.
    \begin{itemize}
    \item If $v'$ is an internal vortex node, it is enough to prove that $F_i$ contains at least one node in the arc of $v'$, since this would imply that $v' \in F'_i$. Suppose, otherwise, that $F_i$ does not intersect the arc of $v'$. Knowing that $S_i$ contains at least one node in the arc of $v'$, we conclude that $S_i$ contains the entire arc of $v'$. Therefore, all neighbors of $v'$ on the vortex boundary are in $S'_i$. Moreover, every neighbor of $v'$ internal to the vortex must share a boundary vertex in the arc of $v'$, so the neighbor is also in $S'_i$. Therefore, all neighbors of $v'$ are contained in $S'_i$, a contradiction.
    \item If $v'$ is neither an internal vortex node nor on the vortex cycle, then there is a one-to-one correspondence via $\mathcal{E}$ between its neighborhood and gate incidence as in $SG$ since $v'$ cannot be connected to a vortex internal vertex. Hence the claim follows from the same claim for $\m S'$.
    \item If $v' \in \partial S'_i$ is on a vortex cycle, let $w' \in N(v') \bs S'_i$ be its neighbor outside of $S'_i$. If $w'$ is not internal to the vortex, then this case is equivalent to the previous one. If $w'$ is internal to the vortex, then $w'$ would be in $S'_i$ since $S'_i = \m E(S_i) \subseteq \m E(\m E^{-1}(v')) \ni w'$.
    \end{itemize}

  \item An edge $\{a', b'\} \in E(OG)$ whose endpoint are in different cells of $\m C_{OG}$ cannot have any of its endpoints as an internal vortex node, hence it has a corresponding edge $\{ \mathcal{E}^{-1}(a), \mathcal{E}^{-1}(b)\}$ in $E(SG)$. The claim now follows from the same claim in $\mathcal{S}$.

  \item Let $S'$ be a gate we want to prove intersects at most 2 cells. If $S'$ does not intersect the vortex internals, then it has a corresponding gate $\mathcal{E}^{-1}(S')$ in $\mathcal{S}$ from which the claim follows. If it intersects the vortex, then each internal vortex node belongs to the same cell denoted by $c$. Furthermore, the cells intersecting $S'$ are exactly the same as those intersecting $\mathcal{E}^{-1}(S')$ together with $c$. But the cells intersecting $\mathcal{E}^{-1}(S')$ are either an empty set or already include $c$, hence the claim follows.
      
  \item Let $v' \in S'_i \bs F'_i$. If $v'$ is not an internal vortex node, then there can be only one such $S' \bs F' \ni v'$ from the claim for $\mathcal{S}$. On the other hand, if $v'$ is a internal vortex node, then $S'_i \bs F'_i$ must contain the entire arc of $v'$. Then the claim follows from the non-internal vortex node case by picking any such node on the arc.

  \item We first note that $|\mathcal{E}(X)| \le k |X|$ since the vortex is of depth $k$. Then we have $\sum_{(F', S') \in \mathcal{S'}} |F'| \le \sum_{(F, S) \in \mathcal{S}} |\mathcal{E}(F)| \le k \sum_{(F, S) \in \mathcal{S}} |F| \le k O((g+1)d) |\mathcal{C}_{SG}| \le O((g+1)kd) |\mathcal{C}_{OG}|$.
  \end{enumerate}
\end{proof}
}
\end{document}